\documentclass[aps,prd,notitlepage,nofootinbib,superscriptaddress,groupedaddress]{revtex4-2}

\usepackage[latin9]{inputenc}
\usepackage{mathrsfs}
\usepackage{bm}
\usepackage{amsmath}
\usepackage{amsthm}
\usepackage{amssymb}
\usepackage{graphicx}
\usepackage{esint}
\usepackage{hyperref}
\usepackage{array}[=2016-10-06]
\usepackage{tabularx}

\theoremstyle{plain}

\theoremstyle{plain}

\ifx\proof\undefined
\newenvironment{proof}[1][\protect\proofname]{\par
	\normalfont\topsep6\p@\@plus6\p@\relax
	\trivlist
	\itemindent\parindent
	\item[\hskip\labelsep\scshape #1]\ignorespaces
}{
	\endtrivlist\@endpefalse
}
\providecommand{\proofname}{Proof}
\fi

\usepackage[table ]{ xcolor}

\usepackage{tikz}
\usepackage{pgfplots}
\pgfplotsset{compat=1.18}

\usepackage{amssymb}
\usepackage{graphicx,color}

\usepackage{amsmath,amsthm}

\usepackage{mathrsfs}
\usepackage{subcaption}

\usepackage{bm}

\def\beq{\begin{equation}}
\def\eeq{\end{equation}}
\def\bi{\begin{itemize}}
\def\ei{\end{itemize}}
	\def\ba{\begin{array}}
	\def\ea{\end{array}}
	\def\bfig{\begin{figure}}
	\def\efig{\end{figure}}

	\def\C{\mathbb{C}}
	\def\R{\mathbb{R}}

	\newtheorem{theorem}{Theorem}[section]

	\newtheorem{lemma}[theorem]{Lemma}

	\def\be{\begin{eqnarray}}
	\def\ee{\end{eqnarray}}

	\newcommand{\ca}{\mathcal A}

	\newcommand{\cc}{\mathcal C}

	\newcommand{\cf}{\mathcal F}
	
	\newcommand{\ch}{\mathcal H}

	\newcommand{\cl}{\mathcal L}
	\newcommand{\cm}{\mathcal M}
	\newcommand{\cn}{\mathcal N}
	\newcommand{\co}{\mathcal O}

	\newcommand{\cs}{\mathcal S}
	\newcommand{\ct}{\mathcal T}
	\newcommand{\cu}{\mathcal U}
	
	\newcommand{\cw}{\mathcal W}

	  \newcommand{\Fd}{\mathfrak{D}}

	  \newcommand{\Fh}{\mathfrak{H}}

	  \newcommand{\Fp}{\mathfrak{P}}

	\newcommand{\ft}{\mathfrak{t}}  \newcommand{\Ft}{\mathfrak{T}}
	  \newcommand{\Fu}{\mathfrak{U}}

	\renewcommand{\a}{\alpha}
	\renewcommand{\b}{\beta}
	\newcommand{\g}{\gamma}

	\newcommand{\sig}{\sigma}
	\newcommand{\Sig}{\Sigma}
	\renewcommand{\l}{\lambda}
	
	\renewcommand{\o}{\omega}
	\renewcommand{\O}{\Omega}
	\renewcommand{\t}{\tau}

	\newcommand{\rmd}{\mathrm d}

	\newcommand{\lt}{\left}
	\newcommand{\rt}{\right}

	\newcommand{\lag}{\left\langle}
	\newcommand{\rag}{\right\rangle}

	\newcommand{\rmi}{\mathrm{i}}

	\newcommand{\tr}{\mathrm{Tr}}
	\newcommand{\Tr}{\mathrm{Tr}}

\newcommand{\wt}{\widetilde}
\newcommand{\wh}{\widehat}

\begin{document}

\title{Semiclassical state on algebra of observables for de Sitter space and time reversal symmetry}

\author{Muxin Han$^{1,2}$}
\email{hanm(At)fau.edu}
\author{Yichen Hu$^{1}$}
\email{huy(At)fau.edu}
\affiliation{1. Department of Physics, Florida Atlantic University, 777 Glades Road, Boca Raton, FL 33431, USA}
\affiliation{2. Institut f\"ur Quantengravitation, Universit\"at Erlangen-N\"urnberg, Staudtstr. 7/B2, 91058 Erlangen, Germany}

\begin{abstract}
In de Sitter static patch holography, coupling the bulk quantum fields to an observer's dynamical clock yields a type II$_1$ von Neumann algebra $\cm_+$ of gauge invariant observables. However, the unique tracial state on this algebra fails to reproduce semiclassical bulk correlation functions. In this paper, we resolve this issue by introducing a family of heavy-clock states, representing an observer reference frame with heavy inertia.
We prove that in the heavy-clock limit, $n$-point correlation functions of gauge-invariant dressed observables reproduce the corresponding bulk quantum field correlation functions on de Sitter spacetime. 
Furthermore, we investigate a generalized observer Hamiltonian $\widehat{H}_{\rm obs}=|\widehat{\pi}|$ that incorporates both forward-running ($\pi\geq 0$) and backward-running ($\pi <0$) clocks. The resulting von Neumann algebra $\mathcal{M}$ of gauge-invariant observables remains a type~II$_1$ factor and is isomorphic to $\cm_+\otimes M_2(\C)$, where $M_2(\mathbb{C})$ is the algebra of $2\times2$ complex matrices associated with transforming between the two clock orientations.
Time-reversal symmetry can be implemented in the setting with the generalized observer Hamiltonian, and it is realized as an anti-unitary operator that exchanges the two clock orientations.
A type~II$_1$ von Neumann algebra $\mathcal{F}$ is constructed to collect gauge-invariant observables that are also invariant under time-reversal. 
We generalize the heavy-clock state construction to both $\mathcal{M}$ and $\mathcal{F}$. In both settings, we show that $n$-point correlation functions calculated in heavy-clock states match the corresponding semiclassical bulk correlation functions.
Finally, by coupling Einstein gravity to the Brown--Kucha\v{r} dust field, we derive the observer Hamiltonian $\widehat{H}_{\rm obs}=|\widehat{\pi}|$ from a perturbative reduction along the observer's worldline in the $G \to 0$ limit. This provides a field-theoretic foundation for the observer model developed in this work.
\end{abstract}

\maketitle

\tableofcontents

\section{Introduction}

De Sitter (dS) holography broadly seeks a holographic description of quantum gravity in an accelerating universe with a positive cosmological constant. Within this program, the holography in static patch aims to describe quantum gravity in the universe accessible to a single observer, whose causal diamond is bounded by a cosmological event horizon. Unlike Anti-de Sitter (AdS) spacetime, which has a timelike conformal boundary, de Sitter spacetime has spacelike conformal boundaries at future and past infinity and lacks spatial infinity. 
The thermodynamic properties of the cosmological horizon make the formulation of a holographic description for the static patch compelling:
Gibbons and Hawking~\cite{IntroGibbonsHawking1977} demonstrated that the cosmological horizon emits thermal radiation at temperature $T_{\mathrm{dS}}=1/(2\pi L)$ and carries an entropy $S_{\mathrm{dS}}=A/(4G_N)$, where $L$ is the de Sitter radius and $A$ is the horizon area. The area-law entropy of the horizon suggests that there should be a holographic description for the static patch \cite{IntroBousso2002}. 
A challenge in relating the holographic description to the semiclassical bulk theory arises because, in quantum field theory on curved spacetime, the local algebra of static-patch observables forms a type III$_1$ von Neumann factor that lacks both a trace and a finite entropy.
To resolve this, Chandrasekaran, Longo, Penington, and Witten (CLPW)~\cite{Chandrasekaran:2022cip}, building on Witten's crossed-product construction~\cite{Witten:2021unn}, included an observer equipped with a dynamical quantum clock; the crossed product of the bulk algebra by the modular flow yields a type II$_1$ factor whose unique normalized trace realizes the Gibbons--Hawking entropy of the de Sitter horizon.

In the static patch of $D$-dimensional dS spacetime (FIG.\ref{static_patch}), the dS metric takes the following form
\be
\rmd s^2=-\lt(1-\frac{r^2}{L^2}\rt)\rmd t^2+\lt(1-\frac{r^2}{L^2}\rt)^{-1}\rmd r^2+r^2\rmd\O_{D-2}^2.\label{dSmetric}
\ee
Within the static patch, an observer follows a worldline denoted by $\g$ and has the horizon as the boundary of the patch $r=L$. 
We might envision the observer aligning with a ``stretched horizon'', a surface located at a short distance from the horizon \cite{Susskind:2023rxm}. However, to extend the generality of the scenario, we consider a general observer worldline completely within the static patch.

\begin{figure}[h]
\centering
\includegraphics[width=0.2\textwidth]{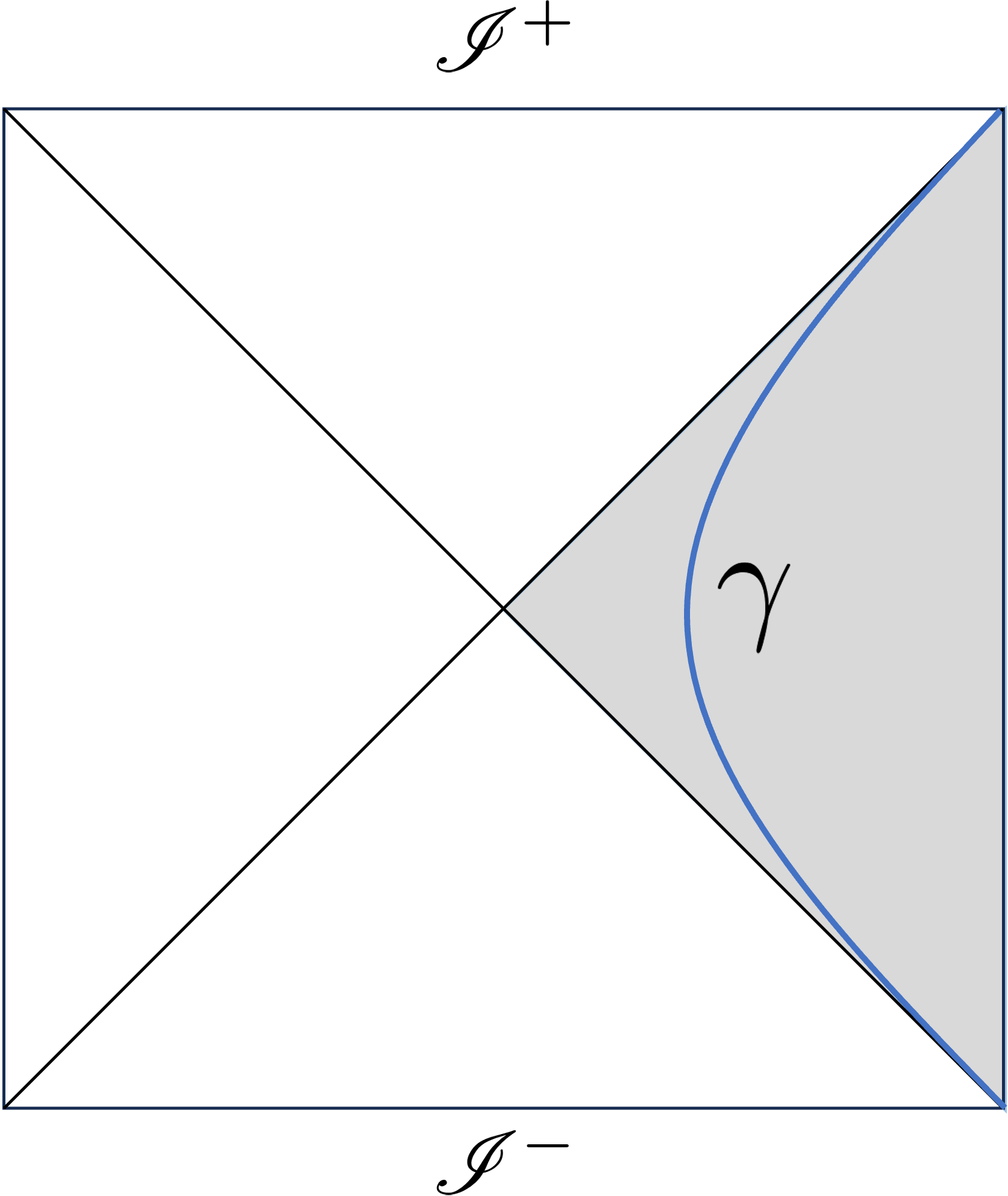}
\caption{Penrose diagram of de Sitter spacetime: The gray region represents the static patch. The blue curve $\gamma$ depicts the worldline of an observer.}
\label{static_patch}
\end{figure}
 
We denote by $\ca_0$ the type III$_1$ von Neumann algebra of local field operators within the static patch, represented on the Hilbert space $\ch_0$ of quantum fields.
The Killing vector $\partial/\partial t$ defines a Hamiltonian $\wh H$ on the Hilbert space $\ch_0$ and generates an abelian automorphism group on $\ca_0$
\be
\sigma_t(\co) = e^{it\wh H} \co e^{-it\wh H},\qquad \forall\, \co\in\ca_0\ .
\ee
This automorphism is modular: there exists a cyclic and separating state $\Psi_0$ of $\ca_0$, such that $\Delta=e^{-\wh H}$ is the corresponding modular operator. Consequently, correlation functions in the static patch, evaluated in the state $\Psi_0$, satisfy the KMS condition \cite{Chandrasekaran:2022cip}. This is analogous to the Bisognano--Wichmann theorem for the Rindler wedge.
Note that the operator $\wh H$ is not positive, as the Killing vector $\partial/\partial t$ becomes past-directed in the complementary static patch.

The static patch holography proposes the duality between the physical theory along $\g$ and the bulk theory in the static patch \cite{Susskind:2023rxm,Chandrasekaran:2022cip}. To confirm such a duality, one needs a dictionary that relates the bulk quantum fields to the holographic degrees of freedom living on $\g$. 
One might expect that taking the limit where a bulk operator approaches the observer's worldline should provide the dictionary. However, this is not entirely accurate, as the holographic theory contains the observer's own degrees of freedom in addition to $\ca_0$.
More importantly, when gravitational fluctuations are considered in the bulk theory, the Hamiltonian should be understood as the Hamiltonian constraint. 
Following the CLPW proposal, the modular Hamiltonian $\wh H$ is coupled to the observer's Hamiltonian $\wh H_{\rm obs}$, and the total Hamiltonian $\wh{\bf H} = \wh H + \wh H_{\rm obs}$ is the Hamiltonian constraint in the $G\to 0$ limit, which generates gauge transformations. This imposes the requirement that all physical operators must be invariant under the gauge transformations.
The resulting operator algebra is a type II$_1$ von Neumann algebra, which has a unique trace up to scaling. This trace defines a state on which one can compute the expectation value of any gauge invariant operator on the worldline. It is conjectured that this tracial state should correspond to the empty dS spacetime. 

In the limit of very large horizon area ${A_{\rm dS}}/{(4 G)}\gg 1$, the static patch can be described by the semiclassical equations of motion. In particular, the approximation of quantum field theory in a fixed dS background is accurate as long as the energies involved are not large enough to create significant back reaction on the geometry. A puzzle arises, as pointed out by Susskind in \cite{Susskind:2023rxm,Susskind:2026fjc}, regarding the semiclassical behavior of correlation functions on the tracial state. Consider two self-adjoint bulk field operators, $\co_1$ and $\co_2$, localized on the observer's worldline $\gamma$. Their bulk correlation function is $\langle\Psi_0| \co_1\co_2| \Psi_0\rangle$. The imaginary part of this bulk correlation function is proportional to the expectation value of their commutator, i.e., $ \langle \Psi_0| [\co_1,\co_2]| \Psi_0\rangle $, which is generally nonzero. In contrast, when the same correlation function is computed on the worldline, it is given by the trace $\Tr(\co'_1\co'_2)$,
where the self-adjoint operators $\co'_1$ and $\co'_2$ denote the gauge-invariant dressings of $\co_1$ and $\co_2$,
then the imaginary part always vanishes by the cyclicity of trace. 
This discrepancy indicates that the tracial state fails to capture the correct semiclassical behavior. 
Indeed, as discussed by CLPW in the context of entropy computation \cite{Chandrasekaran:2022cip}, the tracial state cannot be interpreted as a semiclassical state of the observer. 
Once the observer's coupling is included, the semiclassical properties of the observer's state influence the semiclassical behavior of the entire system. This gives rise to the actual tension underlying the problem.

One of the main purposes of this paper is to address the issue on the semiclassical limit of the correlation function for gauge-invariant dressed operators. 
We demonstrate that for a wide class of heavy-clock states $\o_\a$, the semiclassical limit of arbitrary $n$-point correlation functions of gauge-invariant dressed operators $A_{1,+},\cdots, A_{n,+}$ correctly reproduces the $n$-point functions of bulk quantum field operators $a_1,\cdots,a_n$ in de Sitter spacetime:
\be
\lim_{\alpha\to0}\omega_\alpha(A_{1,+}\cdots A_{n,+})
=\left\langle\Psi_0\middle|a_1\cdots a_n\middle|\Psi_0\right\rangle.
\label{intro:heavy-clock-correspondence}
\ee
Here the heavy-clock state generalizes the semiclassical states considered in \cite{Chandrasekaran:2022cip} by relaxing the requirements of differentiability. The limit $\a\to0$ is often referred to as the heavy-clock limit throughout the paper.

Physically, the requirement that the clock be heavy reflects the fact that an ideal classical reference frame must have heavy inertia so that the perturbation from bulk fluctuations is negligible. In a fully relational quantum theory, reference frames are physical, dynamical degrees of freedom. When bulk operators act, they induce transitions of finite modular energy $E$, which, by the total constraint $\wh{\bf H} = \wh H + \wh H_{\rm obs} = 0$, must be compensated by a recoil of the clock momentum $\Delta H_{\rm obs} \sim -E$. If the clock carries only finite energy, this recoil substantially perturbs its state and entangles the reference frame with the quantum fields, causing relational measurements to deviate from the fixed-background prediction. In contrast, in the heavy-clock limit where the typical clock energy scales as $\langle\wh H_{\rm obs}\rangle \sim 1/\alpha \to \infty$, then the ratio $\Delta H_{\rm obs}/\langle \wh H_{\rm obs}\rangle \sim \alpha E \to 0$ vanishes. The clock thus acts as an unperturbed, macroscopic reference frame, allowing the relational observables to recover the semiclassical dynamics of quantum field theory on a fixed de Sitter background. 
The backreaction from the clock to spacetime geometry is neglected because $\wh{\bf H}$ is the constraint in the $G\to0$ limit (see Section \ref{sec:dust clock fields} for an explanation). 

Let the clock canonical variables be $\wh\t$ and $\wh\pi$. In the CLPW proposal, the observer's Hamiltonian is taken as $\wh H_{\rm obs} = \wh\pi$, and the Hilbert space is restricted to the sector with $\wh\pi \ge 0$. The semiclassical limit \eqref{intro:heavy-clock-correspondence} is derived in this setup. In addition, we study a generalized observer's Hamiltonian $\wh{H}_{\mathrm{obs}}=|\wh\pi|$, accommodating both $\pi \geq 0$ and $\pi<0$ in the Hilbert space. The sectors of positive and negative $\pi$ correspond to forward- and backward-running clock orientations. 
The von Neumann algebra $\cm$ of gauge-invariant operators remains a type II$_1$ factor, and it is related to the CLPW algebra $\cm_+$ by $\mathcal{M} \cong \mathcal{M}_+ \bar{\otimes} M_2(\mathbb{C})$, where $M_2(\mathbb{C})$ is the algebra of $2\times2$ complex matrices that captures the degrees of freedom associated with transforming between the two clock orientations. 
We extend the result \eqref{intro:heavy-clock-correspondence} by constructing heavy-clock states on the full algebra $\mathcal{M}$, and show that in the heavy-clock limit, $n$-point functions of gauge-invariant operators in $\cm$ reproduce the corresponding bulk correlation functions.

The generalization to $\wh{H}_{\mathrm{obs}}=|\wh\pi|$ is inspired by Susskind's proposal \cite{Susskind:2026fjc}, which incorporates time-reversal symmetry directly into the theory. 
In the above framework with the algebra $\cm$, and relating to the recent developments in quantum reference frame transformations (see, e.g., \cite{Thiemann:2026vaj,Chen:2026kui}), time-reversal symmetry is interpreted as a global symmetry that relates two distinct quantum reference frames.
However, the status of time reversal in quantum gravity remains debated: there are proposals to treat time reversal as gauge symmetries, while other treatments distinguish time reversal from gauge transformations \cite{Chen:2014xhe,IntroHarlowNumasawa,IntroWittenTimeReversal}.
Susskind proposes that time reversal is gauged but hidden by a Higgs-like mechanism, where clock orientations play the role of an Ising-like variable, and persistent correlations with bulk fields pick out a preferred time orientation \cite{Susskind:2026fjc,IntroSusskind2603}. 
In this paper, while we do not assume time reversal is gauged, we nevertheless investigate the construction of a von Neumann algebra of dressed operators that are invariant under both gauge transformations by $\wh{\bf H}$ and time reversal. As a result, we construct a type-II$_1$ von Neumann algebra $\cf$ of strictly time-reversal-invariant observables within the gauge invariant dressed operators, and formulate a representation which preserves time-reversal invariance for operator products. By the heavy-clock state, the result \eqref{intro:heavy-clock-correspondence} generalizes to the operator in $\cf$: the $n$-point function reproducing the correct semiclassical bulk correlation function. It demonstrates that the semiclassical bulk recovery requires neither gauging nor spontaneous breaking of time reversal.

Another key motivation for considering $\wh{H}_{\mathrm{obs}}=|\wh\pi|$ is that this form of the observer's Hamiltonian arises naturally from the field theory of gravity coupled to matter. Indeed, we generalize Witten's idealized point-like worldline clock to a dynamical field-theoretic quantum reference frame via Brown--Kucha\v{r} pressureless dust coupled to Einstein gravity~\cite{BrownKuchar1995,Giesel2010MGIPT}. 
By performing a perturbative worldline reduction inside a comoving dust tube, we show that the abelianized gravitational Hamiltonian constraint naturally results in $\wh{H}_{\mathrm{obs}}=|\wh\pi|$. 
Reversing the time orientation of spacetime induces the transformation $\pi \to -\pi$, thereby exchanging the two clock orientations.
This derivation provides a field-theoretic foundation for the observer model with two clock orientations.

As outlined above, there are three closely related operator-algebraic frameworks explored in this work: (1) the original CLPW construction featuring a positive-energy clock (with $\wh H_{\rm obs} = \wh\pi$), (2) the generalized formulation that incorporates both clock orientations via $\wh H_{\rm obs} = |\wh\pi|$, and (3) the construction of strictly time-reversal-invariant operators. For comparison and to highlight their respective properties, the main aspects of these three constructions are presented side by side in Table~\ref{tab:my_table}. 
In all three scenarios, the $n$-point functions evaluated in heavy-clock states correctly reproduce the semiclassical bulk correlation functions.

\begin{table}[!h]
\centering
\small
\renewcommand{\arraystretch}{1.2}
\setlength{\tabcolsep}{3pt}
\makeatletter
\let\insert@pcolumn\insert@column
\makeatother
\begin{tabularx}{\textwidth}{@{}>{\raggedright\arraybackslash}p{0.15\textwidth}>{\raggedright\arraybackslash}X>{\raggedright\arraybackslash}X>{\raggedright\arraybackslash}X@{}}
\hline
 & Positive-clock (CLPW) & Two clock orientations & Time-reversal-invariant operators \\
\hline
Clock sectors
 & $\pi\geq0$ 
 & $\pi\geq 0$ and $\pi<0$ 
 & $\pi\geq 0$ and $\pi<0$ \\
\hline
Clock Hamiltonian
 & $\wh H_{\rm obs}=\wh\pi$
 & $\wh H_{\rm obs}=|\wh\pi|$
 & $\wh H_{\rm obs}=|\wh\pi|$ \\
\hline
Operator algebra
 & type II$_1$ factor $\cm_+$
 & type II$_1$ factor $\cm\cong\cm_+\bar\otimes M_2(\C)$
 & type II$_1$ factor $\mathcal F=j(\cm_+)\cong\cm_+$ \\
\hline
Heavy-clock state
 & $\omega_\alpha:\cm_+\to\C$
 & $\omega_{\alpha,\rho}=\omega_\alpha\otimes \mathrm{tr}(\rho\ \cdot)$
 & $\omega_{\alpha,\rho}$ \\
\hline
\end{tabularx}
\caption{Summary of the three related constructions: The map $j$ is an isomorphism from $\cm_+$ to $\cf\subset \cm_+\oplus\cm_-^{\rm op}$ ($\cm_-$ is the algebra of gauge-invariant operators defined on the $\pi<0$ sector, and $\cm_-^{\rm op}$ is the algebra obtained by reversing the multiplication in $\cm_-$). The operators in the image of $j$ are time-reversal invariant.}
\label{tab:my_table}
\end{table}

The paper is organized as follows. Section~\ref{sec:crossed-product-algebra} reviews the de Sitter static patch observer algebra and its crossed product with a positive-energy clock. Section~\ref{sec:two-point-crossed-product} analyzes two-point correlation functions of dressed observables and introduces the heavy-clock state. Section~\ref{sec:Holographic correspondence of npoint correlation functions} establishes the heavy-clock limit for general $n$-point correlation functions, recovering the semiclassical bulk correlation function. Section~\ref{sec:local-observer-time-reversal} incorporates both clock orientations via the clock Hamiltonian $\wh H_{\rm obs}=|\wh\pi|$ and shows that the gauge-invariant observable algebra $\cm$ is a type~II$_1$ factor isomorphic to $\cm_+\bar\otimes M_2(\C)$. Section~\ref{sec:folded-bulk-boundary} constructs relationally dressed observables for both orientations and proves that linearity and self-adjointness uniquely select the dressing that reproduces bulk correlation functions. Section~\ref{sec:Time reversal symmetry} formulates the anti-unitary time-reversal symmetry exchanging the two clock orientations and proves the invariance of the trace on $\cm$. Section~\ref{sec:time-reversal-invariant-operators} constructs the type~II$_1$ algebra $\cf$ of strictly time-reversal-invariant observables and formulates an operator product preserving time-reversal invariance. Section~\ref{sec:dust clock fields} derives the observer Hamiltonian $\wh{H}_{\rm obs}=|\wh\pi|$ from Einstein gravity coupled to Brown--Kucha\v{r} dust in the $G\to 0$ limit. Finally, technical proofs and background on duality, traces, and anti-automorphisms are collected in Appendices~\ref{Takesaki duality}--\ref{anti-automorphism}.

\section{Observer, quantum clock, and crossed-product algebra}\label{sec:crossed-product-algebra}

The de Sitter static patch can be described from the viewpoint of a single observer equipped with a quantum clock \cite{Chandrasekaran:2022cip}. 
Static-patch time translations act as a modular flow on the observer's algebra. The appropriate algebra of relational observables is therefore not the original algebra $\mathcal{A}_0$ itself, but its crossed product by this flow. The crossed product incorporates the observer's clock as a quantum reference frame and converts the type III von Neumann algebra into a type II algebra with a trace. We review this construction here.

Let $\mathcal{A}_0$ be the von Neumann algebra of local field operators in the static patch. The algebra $\mathcal{A}_0$ is the operator algebra of the system being observed by the observer $\g$. The Hilbert space carrying $\ca_0$ is $\mathcal{H}_0$. Let $\Psi_0\in\ch_0$ be the cyclic and separating state. The corresponding modular Hamiltonian $\widehat H=i\partial/\partial t$ associated with $(\mathcal{A}_0,\Psi_0)$ generates the modular automorphism group $\sigma_t(a)=e^{it\widehat H}a e^{-it\widehat H}$, $a\in\mathcal{A}_0$.
In the following, we generally refer to $\ch_0$ and the associated algebra $\ca_0$ as the ``system'' being observed by the observer.

We couple this system to the observer, whose Hilbert space is $\ch_{\rm obs}=L^2(\R)$. The Hilbert space $\ch_{\rm obs}$ carries canonical operators $(\widehat\tau,\widehat\pi)$ satisfying $[\widehat\tau,\widehat\pi]=i$. The position operator $\wh\t$ is the quantization of the clock time carried by the observer.
Intuitively, we are inclined to interpret the momentum operator $\wh \pi$ as the observer's Hamiltonian, because it is canonically conjugate to the observer's time variable.
However, $\wh \pi$ is not bounded from below and thus does not define a proper Hamiltonian. There are two possible prescriptions so that the resulting Hamiltonian is bounded from below:
\begin{enumerate}

\item Following the approach of \cite{Chandrasekaran:2022cip}, we restrict the spectrum of $\wh\pi$ to be positive by introducing the spectral projection operator $\Pi_+$. This projects the observer's Hilbert space $\ch_{\rm obs}$ onto the subspace $\Pi_+\ch_{\rm obs}\cong L^2(\R_{\geq0})$ with positive momentum. We will implement this prescription momentarily in this section, and also in Section \ref{sec:two-point-crossed-product}.
\item Alternatively, following the proposal in \cite{Susskind:2026fjc}, the observer's Hamiltonian can be taken as the absolute value operator $|\wh \pi|$. This choice is also closely related to the physical scenario in which gravity and matter are coupled to a dust clock field; see Section \ref{sec:dust clock fields} for details.
In addition, it also allows us to implement the time-reversal symmetry to the model. 
We will discuss this approach in Section \ref{sec:local-observer-time-reversal} and subsequent sections.

\end{enumerate}

To begin, let us temporarily set the Hamiltonian to $\wh \pi$; we will address the projection to positive spectrum in the next step.
The full Hilbert space is $\ch_{\rm tot}:=\mathcal{H}_0\otimes \ch_{\rm obs}\cong L^2(\mathbb R,\mathcal{H}_0)$.
The relational description is implemented by the Hamiltonian constraint, which sums the Hamiltonians of both the system and observer:
\be
\wh{\bf H}=\widehat H\otimes \mathbb I_{\rm obs}+\mathbb I_0\otimes \widehat\pi.\label{HamConstr}
\ee
acting on $\ch_{\rm tot}$. Here $\mathbb I_{\rm obs}$ and $\mathbb I_0$ are identity operator on $\ch_{\rm obs}$ and $\ch_0$ respectively.
Analogous to the situation in canonical quantum gravity, the unitary group generated by the Hamiltonian constraint $\wh{\bf H}$ implements gauge transformations. Indeed, the construction here is a toy model for reduced phase space quantization as developed in non-perturbative quantum gravity (see, e.g., \cite{Giesel2010MGIPT,Domagala:2010bm,Han:2009aw}), where the role of the observer $\g$ is played more generally by clock fields. 
In the context of quantum gravity coupled to clock fields, the ``abelianized Hamiltonian constraint'' retains the form given in \eqref{HamConstr}, with $\wh\pi$ representing the energy density associated with the clock fields.

Physical observables are operators that are invariant under the gauge transformations generated by the Hamiltonian constraint. In our model, the algebra $\mathcal{N}$ of gauge invariant operators is the commutant of the gauge flow $e^{i t\wh{\bf H}}$ (for all $t\in\R$) inside the total operator algebra $\ca\equiv \mathcal{A}_0\bar\otimes\mathcal{B}(L^2(\mathbb R))$, i.e.
\be
\mathcal{N}=
\lt\{\mathcal{O}\in\ca\;\middle|\;
e^{i t\wh{\mathbf H}}\mathcal{O}e^{-i t\wh{\mathbf H}}=\mathcal{O},
\;\forall t\in\mathbb{R}\rt\}.
\ee
This commutant is equivalent to the crossed-product von Neumann algebra
\be
\mathcal{N}\cong\mathcal{A}_0\rtimes_\sigma\mathbb R,\label{equivalencecalN}
\ee
where $\mathbb R$ acts by the modular flow $\sigma_t$. This relation is known as the Takesaki duality (see e.g. \cite{takesaki1973duality,takesaki2003theory2,Witten:2021unn}, and see also appendix \ref{Takesaki duality} for a brief review).

The algebra $\mathcal{N}$ is generated by two types of operators. The first is the clock translation operator
\be
\lambda_t=e^{-it\widehat\pi},\quad \forall t\in\R,\qquad (\lambda_t\xi)(\tau)=\xi(\tau-t),\qquad \forall \xi\in \ch_{\rm tot},
\ee
where we have represented $\xi$ as a function in $L^2(\R,\ch_0)$. 
The second type of generators is the translation of $a\in\mathcal{A}_0$ using quantum clock $\wh \t$,
\be
\pi_0(a)=e^{-i\widehat\tau\widehat H}a e^{i\widehat\tau\widehat H},\qquad
(\pi_0(a)\xi)(\tau)=\sigma_{-\tau}(a)\xi(\tau).
\ee
Both $\l_t$ and $\pi_0(a)$ commute with $\mathbf{\widehat H}$.

The operators $\pi_0(a)\in\cn$ promoted from $a\in\ca_0$ are called relational observables \cite{Rovelli:2001bz,Dittrich2007,Thiemann:2004wk}: the time dependence of the operator is in terms of the observer's quantum clock, which itself is dynamical and forms part of the quantum system. 
The evolution of the operator $a$ is described relationally in terms of the quantum dynamics of the clock variable $\tau$.

The equivalence \eqref{equivalencecalN} is realized via the conjugation by $e^{i\widehat\tau\widehat H}$: 
The crossed product algebra $\mathcal{A}_0\rtimes_\sigma\mathbb R$ is generated by $a\in\mathcal{A}_0$ and $e^{is(\widehat H-\widehat\pi)}$, $s\in\R$, which relates to $\pi_0(a)$ and $\l_a$ via
\be
\pi_0(a)\lambda_s=e^{-i\widehat\tau\widehat H}\left[a\,e^{is(\widehat H-\widehat\pi)}\right]e^{i\widehat\tau\widehat H}.
\ee

A dense set of elements in $\cn$ (in weak operator topology) is spanned by operators of the form
\be
\widehat a_F=\int_{-\infty}^\infty\rmd s\, F(s)\,\pi_0(a)\lambda_s=e^{-i\widehat\tau\widehat H}\int_{-\infty}^\infty\rmd s\, F(s)\left[a\,e^{is(\widehat H-\widehat\pi)}\right]e^{i\widehat\tau\widehat H}.
\ee
Here $F$ is a suitable test function on $\mathbb{R}$, for example a Schwartz function.
As the von Neumann algebra $\mathcal{A}_0$ is a type III$_1$ factor, the crossed product $\mathcal{N}\cong\mathcal{A}_0\rtimes_\sigma\mathbb R$ is a type II$_\infty$ factor. It carries a faithful, normal, semifinite trace $\mathrm{Tr}$, unique up to an overall normalization (see e.g. \cite{Sorce:2023fdx,Witten:2021unn,takesaki2003theory2}, see also Appendix \ref{Trace on cross-product algebra} for a short review). For the element $\widehat a_F$ above, the trace reads
\be
\mathrm{Tr}(\widehat a_F)
&=&\int_{-\infty}^{\infty}\rmd \pi\,e^{-\pi }\left\langle\Psi_0\left|\int\rmd s\,F(s)\,a e^{is(\widehat H-\pi)}\right|\Psi_0\right\rangle\nonumber\\
&=&\int_{-\infty}^{\infty}\rmd \pi\,e^{-\pi}\wt F(\pi)\left\langle\Psi_0\middle|a\middle|\Psi_0\right\rangle,\qquad \wt F(\pi)=\int\rmd s\,e^{-is\pi}F(s).
\ee
We have used that $\wh H|\Psi_0\rangle =0$, and $\pi$ denotes the spectral value of $\widehat\pi$. 
The trace is semifinite rather than finite because the full crossed product retains the entire clock-momentum line; in particular, the identity has infinite trace.

In accordance with prescription 1 above, we now restrict the type II$_\infty$ factor $\cn$ to the sector where the observer's energy is positive. This amounts to restricting the Hilbert space to $\Pi_+\ch_{\rm tot} = \ch_0 \otimes L^2(\mathbb{R}_{\geq 0})$, so that the spectrum of $\wh\pi$ is confined to $\mathbb{R}_{\geq 0}$.
Since $\widehat\pi$ strongly commutes with $\widehat H+\widehat\pi$, its spectral projections belong to $\mathcal{N}$. In particular,
\be
\Pi_+=\chi_{[0,\infty)}(\widehat\pi)\in\mathcal{N}.
\ee
where $\chi_{I}$ denotes the indicator function of the interval $I$.
The algebra of gauge invariant observables is given by projecting the crossed product algebra $\cn$ onto $\Pi_+\ch_{\rm tot}$
\be
\mathcal{M}_+=\Pi_+\mathcal{N}\Pi_+.\label{cmPlus}
\ee
The trace on $\cm_+$ is given by the restriction of $\Tr$. Let us project the operator $\wh a_F\in\cn$ into $\cm_+$ then evaluate the trace. Using the relation
\be
\Pi_+e^{-i\widehat\tau\widehat H}\left[a\,e^{is(\widehat H-\widehat\pi)}\right]e^{i\widehat\tau\widehat H}\Pi_+
=e^{-i\widehat\tau\widehat H}\left[\chi_{[0,\infty)}(\widehat\pi-\widehat H)a e^{is(\widehat H-\widehat\pi)}\chi_{[0,\infty)}(\widehat\pi-\widehat H)\right]e^{i\widehat\tau\widehat H},\label{exchangeRelationPiPlus}
\ee
The operators $\Pi_+\widehat a_F\Pi_+$ spans a dense subset in $\cm_+$\footnote{
	Every $A_+ \in \cm_+$ can be written as $A_+ = \Pi_+ \widehat{a} \Pi_+$ for some $\widehat{a} \in \cn$. Because the set $\{\wh a_F\}$ is dense in $\cn$, there is a net of such elements converging to $\widehat{a}$ in the weak operator topology: $\widehat{a} = \lim \wh a_F$. For any $\psi, \phi \in \ch_{\rm tot}$, it follows that $\langle \psi | \Pi_+ \wh a_F \Pi_+ | \phi \rangle \to \langle \psi | \Pi_+ \widehat{a} \Pi_+ | \phi \rangle = \langle \psi | A_+ | \phi \rangle$ in the limit. Thus, $A_+ = \lim \Pi_+ \wh a_F \Pi_+$ in the weak operator topology.
}.
The trace of $\Pi_+\widehat a_F\Pi_+$ is expressed as
\be
\mathrm{Tr}_+\left(\Pi_+\widehat a_F\Pi_+\right)
&=&\int_{-\infty}^{\infty}\rmd \pi\,e^{-\pi}\left\langle\Psi_0\left|\int\rmd s\,F(s)\,\chi_{[0,\infty)}(\pi-\widehat H)a e^{is(\widehat H-\pi)}\chi_{[0,\infty)}(\pi-\widehat H)\right|\Psi_0\right\rangle\nonumber\\
&=&\int_{0}^{\infty}\rmd \pi\,e^{-\pi}\wt F(\pi)\left\langle\Psi_0\middle|a\middle|\Psi_0\right\rangle.\label{PiaFPi}
\ee
The range of integral reduces to half real-line, so the trace becomes finite.
In particular, $\Pi_+$ is the identity of $\mathcal{M}_+$ and has finite trace,
\be
\mathrm{Tr}_+(\Pi_+)=\mathrm{Tr}(\Pi_+)=\int_{0}^{\infty}\rmd \pi\,e^{-\pi}=1.
\ee
This indicates that $\cm_+$ is a type II$_1$ factor.

Given a bulk local field operator $a\in\ca_0$, we dress $a$ by coupling to the quantum clock and define the corresponding ``dressed operator'' as 
\be
A_+:=\Pi_+\pi_0(a)\Pi_+\in\mathcal{M}_+, \quad \text{where}\quad \pi_0(a)=e^{-i\widehat\tau\widehat H}a e^{i\widehat\tau\widehat H}.\label{holographicMappingPlus}
\ee
Through this map, the bulk excitation is promoted to a gauge-invariant relational observable on $\g$.

For one-point functions and using trace on $\cm_+$, we have the following correspondence between expectation values of the dressed operators and bulk operators:
\be
\mathrm{Tr}_+(A_+)=\left\langle\Psi_0\middle|a\middle|\Psi_0\right\rangle .
\ee
The right-hand side represents the semiclassical expectation value of the bulk local operator in the static patch vacuum. 
At first glance, this formula might suggest the tracial state $\tr_+$ is a semiclassical state. 
However, this interpretation is not correct, because it cannot be extended to $n$-point functions for $n>1$, as we see in the next section. Moreover, the relation above is not unique to the maximally mixed state; it holds for any density state $\varrho(\widehat{\pi})\in\cm_+$ of the clock:
\be
\mathrm{Tr}_+\left(\varrho(\widehat{\pi})^{1/2}A_+\varrho(\widehat{\pi})^{1/2}\right)=\left\langle\Psi_0\middle|a\middle|\Psi_0\right\rangle,\label{1ptFunctionRelation}
\ee
provided the normalization condition $\mathrm{Tr}_+\left(\varrho(\widehat{\pi})\right)=\int_{0}^{\infty}\rmd \pi\, e^{-\pi} \varrho(\pi)=1$ is satisfied. Thus, the agreement of the one-point function in this construction does not uniquely determine the state on $\cm_+$.

\section{Correlation functions and heavy-clock state}\label{sec:two-point-crossed-product}

Our aim is to generalize the correspondence between dressed and bulk operators to $n$-point correlation functions with arbitrary $n\geq 1$. Let us first consider the two-point function. For two bulk local operators $a,b\in\mathcal{A}_0$, we use the dressing map \eqref{holographicMappingPlus} to define $A_+=\Pi_+\pi_0(a)\Pi_+$ and $B_+=\Pi_+\pi_0(b)\Pi_+$. We write $\chi_+(x)=\chi_{[0,\infty)}(x)$. Using the relation \eqref{exchangeRelationPiPlus},
\be
A_+=e^{-i\widehat\tau\widehat H}\chi_+(\widehat\pi-\widehat H)a\chi_+(\widehat\pi-\widehat H)e^{i\widehat\tau\widehat H},\qquad B_+=e^{-i\widehat\tau\widehat H}\chi_+(\widehat\pi-\widehat H)b\chi_+(\widehat\pi-\widehat H)e^{i\widehat\tau\widehat H},
\ee
Their product in $\mathcal{M}_+$ is
\be
A_+B_+=e^{-i\widehat\tau\widehat H}\chi_+(\widehat\pi-\widehat H)a\chi_+(\widehat\pi-\widehat H)b\chi_+(\widehat\pi-\widehat H)e^{i\widehat\tau\widehat H}.
\ee
The presence of the intermediate projector $\chi_+(\widehat\pi-\widehat H)$ between $a$ and $b$ plays a crucial role.
Let us consider the action of $A_+B_+$ on the state $|\Psi_0\rangle\otimes|\phi\rangle$ for any clock state $|\phi\rangle\in L^2(\R_{\geq 0})$. By $\wh H|\Psi_0\rangle=0$, the operator $b$ acts on $|\Psi_0\rangle\otimes\Pi_+|\phi\rangle$. However, the action of $b$ changes the system state $\Psi_0$ to an intermediate state with nontrivial modular energy $E$ (we use $E$ to denote the spectrum of $\wh H$). The constraint $\chi_+(\widehat\pi-\widehat H)$ between $a$ and $b$ relates this system energy $E$ to the clock energy $\pi$: in the intermediate channel, the condition for staying in the sector of positive clock energy is therefore no longer $\pi>0$ but $\pi-E>0$. In other words, if $E>0$, the clock must carry enough positive energy to absorb the energy $E$ transferred by the first operator $b$ before the second operator $a$ acts.

Our aim is to relate the expectation value of $A_+B_+$ to the bulk two-point correlation function $\left\langle\Psi_0\middle|a b\middle|\Psi_0\right\rangle$. By the relation \eqref{1ptFunctionRelation}, we have
\be
\left\langle\Psi_0\middle|a b\middle|\Psi_0\right\rangle=\mathrm{Tr}_+\left(\varrho(\widehat{\pi})^{1/2}\lt[\Pi_+\pi_0(ab)\Pi_+\rt]\varrho(\widehat{\pi})^{1/2}\right),\label{bulk2ptFunc}
\ee
with any density state $\varrho(\widehat{\pi})$. The operator insertion on the right-hand side is 
\be
\Pi_+\pi_0(ab)\Pi_+=e^{-i\widehat\tau\widehat H}\chi_+(\widehat\pi-\widehat H)ab\chi_+(\widehat\pi-\widehat H)e^{i\widehat\tau\widehat H}.
\ee
The key difference from $A_+B_+$ is that this operator lacks the intermediate projector between $a$ and $b$. 

In contrast, when evaluating the expectation value $\mathrm{Tr}_+\left(\varrho(\widehat{\pi})^{1/2}A_+B_+\varrho(\widehat{\pi})^{1/2}\right)$, the presence of the intermediate projector has a significant effect if the clock's energy is small: it excludes some positive-energy intermediate states, causing this expectation value to deviate from the bulk correlation function $\left\langle\Psi_0\middle|a b\middle|\Psi_0\right\rangle$. However, in the regime of a ``heavy'' clock, meaning the typical clock momentum is much larger than the modular energies associated with the operators, the condition $\chi_+(\pi-E)\simeq 1$ holds across the relevant spectrum. In this regime, the influence of the intermediate projection becomes negligible, so the expectation value of $A_+B_+$ can effectively reproduce the bulk correlation function.

More explicitly, the expectation value $\mathrm{Tr}_+\left(\varrho(\widehat{\pi})^{1/2}A_+B_+\varrho(\widehat{\pi})^{1/2}\right)$ can be expressed by inserting a resolution of the identity in terms of the spectrum of $\widehat H$, with $\widehat H|E\rangle=E|E\rangle$ \footnote{
	Interchanging the order of integral can be justified as follows: We define the spectral density $m_{a,b}(E):=\left\langle\Psi_0\middle|a\middle|E\right\rangle\left\langle E\middle|b\middle|\Psi_0\right\rangle $. We have the following bound:
	\be 
	\int\rmd E\,|m_{a,b}(E)|&\leq&\left(\int\rmd E\,\left|\left\langle\Psi_0\middle|a\middle|E\right\rangle\right|^2\right)^{1/2}\left(\int\rmd E\,\left|\left\langle E\middle|b\middle|\Psi_0\right\rangle\right|^2\right)^{1/2}=\|a^*\Psi_0\|\,\|b\Psi_0\|<\infty ,\label{schwarzinequalityab}
	\ee 
	since $a,b\in\mathcal A_0$ are bounded. Since $p_\alpha(\pi)=e^{-\pi}\varrho(\pi)$ is normalized, we have the absolute convergence 
	\be\int_0^\infty\rmd\pi\,p_\alpha(\pi)\int\rmd E\,\chi_+(\pi-E)|m_{a,b}(E)|\leq\int\rmd E\,|m_{a,b}(E)|<\infty .
	\ee
	Fubini's theorem therefore justifies interchanging the $\pi$ and $E$ integrals,
}: 

\be
\mathrm{Tr}_+\lt(\varrho(\widehat{\pi})^{1/2}A_+B_+\varrho(\widehat{\pi})^{1/2}\rt)=\int\rmd E\,K_{\varrho}(E)\left\langle\Psi_0\middle|a\middle|E\right\rangle\left\langle E\middle|b\middle|\Psi_0\right\rangle,\label{Tr+rhoABrho}
\ee
where
\be
K_{\varrho}(E)=\int_0^\infty\rmd\pi\,e^{-\pi}\chi_+(\pi-E)\varrho({\pi})
\ee
Here and below, $\rmd E$ denotes the spectral measure of $\widehat H$. 

The bulk two-point function would be reproduced if $K_{\varrho}(E)=1$. In the case of tracial state where $\varrho=1$, we obtain
\be
K_{\varrho=1}=\int_0^\infty\rmd\pi\,e^{-\pi}\chi_+(\pi-E)=\begin{cases}
 e^{-E},&E>0,\\
 1,&E<0,
\end{cases}
\ee
In the tracial state, the clock energy distribution is $e^{-\pi}$, so states with sufficiently large clock energy are exponentially rare. It results in the suppression of states with positive intermediate modular energies. This is the precise obstruction to obtaining the semiclassical bulk correlator directly from the trace.

The remedy is to use a state that describe a heavy clock: Let us first introduce a family of functions on $\R_{\geq 0}$
\be
f_\alpha(\pi)=\alpha e^{(1-\alpha)\pi},\qquad 0<\alpha\leq 1,
\ee
and introduce a family of clock density states given by $\varrho(\widehat{\pi}) = f_\alpha(\widehat{\pi})$. This motivates us to define the linear functional $\omega_\alpha$ on $\mathcal{M}_+$ as the following: 
Consider a general element of the form $A_{F,+} = \Pi_+\widehat a_F\Pi_+$,
where $\widehat a_F = e^{-i\widehat\tau\widehat H}\int \rmd s\, F(s)\left[a\,e^{is(\widehat H-\widehat\pi)}\right]e^{i\widehat\tau\widehat H}$. 
\be
\omega_\alpha\lt( A_{F,+}\rt):=\mathrm{Tr}_+\left(f_\alpha(\widehat\pi)^{1/2} A_{F,+} f_\alpha(\widehat\pi)^{1/2} \right).\label{omegaTrace}
\ee
For $\alpha = 1$, the functional $\omega_\alpha$ reduces to the tracial state $\mathrm{Tr}_+$. 
For $0<\alpha<1$, $f_\alpha(\widehat{\pi})$ is an unbounded operator and therefore is not an element of $\mathcal{M}_+$. 
To make sense \eqref{omegaTrace}, we approximate the unbounded operator $f_\alpha(\widehat{\pi})$ with a family of bounded operators by introducing a spectral cutoff: for any $n>0$, set $f_\alpha^{(n)}(\widehat{\pi}) = f_\alpha(\widehat{\pi})\,\chi_{[0,n]}(\widehat{\pi}) \in \mathcal{M}_+$. The functional $\omega_\alpha$ is then defined as the limit $\omega_\alpha(A_{F,+}) = \lim_{n \to \infty} \mathrm{Tr}_+\big(f_\alpha^{(n)}(\widehat{\pi})^{1/2} A_{F,+}\,f_\alpha^{(n)}(\widehat{\pi})^{1/2}\big)$.

The evaluation $\omega_\alpha(A_{F,+})$ admits the following integral representation:
\be
\o_\a\lt( A_{F,+}\rt)
&=&\int^{\infty}_{0}\rmd\pi\, e^{-\pi}f_\a\left(\pi\right)\,\wt F(\pi)\langle\Psi_{0}\mid  a \mid\Psi_{0}\rangle =\a \int^{\infty}_{0}\rmd\pi\, e^{-\alpha\pi}\,\wt F(\pi)\langle\Psi_{0}\mid  a \mid\Psi_{0}\rangle\nonumber\\
&=& \langle\Psi_{0}\otimes\zeta_\a\mid A_{F,+}\mid\Psi_{0}\otimes\zeta_\a\rangle,\label{omegaIntegral}
\ee
where $\zeta_\a(\pi)=\sqrt{\alpha}e^{-\alpha\pi/2}$ is the normalized wave function of the clock (up to a phase). The inner product $\langle \cdot\mid\cdot\rangle$ is defined on the Hilbert space $\ch_{\rm tot}$. The state $\Psi_0\otimes\zeta_\a\in\ch_{\rm tot}$ is the purification of the state $\o_\a$.
Recall that operators of the form $\wh a_F$ are dense in $\mathcal{N}$. Given any $\wh a \in \mathcal{N}$, there exists a net $\{\wh a_F\}$ such that $a_F$ converges to $\wh a$ in the weak operator topology. By the result \eqref{omegaIntegral}, it follows that $\o_\a\lt( \Pi_+\wh a_F\Pi_+\rt)$ converges to $\o_\a\lt( \Pi_+\wh a\Pi_+\rt)$ as $a_F \to a$. Therefore, the definition of $\o_\a$ in \eqref{omegaIntegral} extends by continuity to all of $\mathcal{M}_+$:
\be
\o_\a\lt( \Pi_+\wh a\, \Pi_+\rt)=\langle\Psi_{0}\otimes\zeta_\a\mid \Pi_+\wh a\,\Pi_+\mid\Psi_{0}\otimes\zeta_\a\rangle,\qquad \forall\, \wh a\in\cn,
\ee
Thus $\o_\a$ is manifestly a positive and normalized state on $\cm_+$, and $\o_\a$ is continuous in the weak operator topology. In addition, the state $\o_\a$ is faithful, i.e. $\o_\a(\wh a^*\wh a)=0$ implies $\wh a=0$, for $\wh a\in\cm_+$. See Appendix \ref{Faithfulness of omega} for a proof.

Comparing to $\mathrm{Tr}_+(\wh a_F)$ in \eqref{PiaFPi}, the weight $f_\alpha$ in this integral modifies the clock energy distribution to $e^{-\pi}f_\alpha(\pi)=\alpha e^{-\alpha\pi}$.
In the limit $\alpha \to 0$, states with large clock energy are no longer suppressed, in contrast to the tracial state.

The behavior that the density state $f_\a(\wh \pi)$ enhances the heavy-clock contributions can also be understood by studying the limit of $f_\a$ as $\a\to0$. Specifically, for any fixed $c\geq0$,
\be
f_\alpha\left(c\log\frac{1}{\alpha}\right)
=\alpha^{1-c+c\alpha}
\xrightarrow[\alpha\to0]{}
\begin{cases}
0,&0\leq c<1,\\
1,&c=1,\\
\infty,&c>1.
\end{cases}
\ee
The crossover of $f_\alpha(\pi)$ from suppression to enhancement occurs at $\pi\sim\log(1/\alpha)$. Moreover, for every fixed cutoff $\Lambda<\infty$,
\be
\omega_\alpha\left(\chi_{[0,\Lambda]}(\widehat\pi)\right)
=\a\int_0^\Lambda\rmd\pi\,e^{-\a\pi}
=1-e^{-\alpha\Lambda}
\xrightarrow[\alpha\to0]{}0.
\ee
As $\alpha \to 0$, restricting to states with small clock energy yields a negligible contribution, 
so nontrivial results arise only when contributions from states with large clock energy are included. 
Due to these properties, we refer to $\omega_\alpha$ for small $\alpha$ as the heavy-clock state, and $\alpha \to 0$ as the heavy-clock limit.

The discussion can be generalized to a broader class of states:
Let $g(x)$ be any (complex-valued) function supported on $\R_{\geq 0}$ that satisfies the normalization $\int_0^\infty |g(x)|^2\rmd x=1$. We define
\be
f_\a(\wh\pi)=\alpha e^{\wh \pi} |g(\a \wh \pi)|^2.\label{generalizefalpha}
\ee
With this generalized $f_\a(\wh\pi)$, the state $\o_\a$ define in \eqref{omegaTrace} satisfies the positivity, normalization, and continuity in the same way as the above. In addition, $\o_\a$ is faithful if $f_\alpha(\widehat{\pi})^{-1/2} $ (with restriction of spectrum $\pi\geq 0$) is a bounded operator, see Appendix \ref{Faithfulness of omega} for details. 
In analog with \eqref{omegaIntegral}, the purification is given by $\Psi_0\otimes \zeta_\a$ where 
\be
\zeta_\a(\pi) =\alpha^{1/2} g(\a \pi), 
\ee
such that $\o_\a\lt( A_{F,+}\rt)= \langle\Psi_{0}\otimes\zeta_\a| A_{F,+}|\Psi_{0}\otimes\zeta_\a\rangle$. 
As with the original choice, this generalized $f_\a$ enhances the heavy-clock sector, because for small $\a$, the factor $e^{-\pi} f_\a(\pi) = \alpha |g(\a \pi)|^2$ in the integral \eqref{omegaIntegral} does not suppress contributions from large $\pi$.

If we further require the differentiability of function $g$ so that $\zeta_\a\in L^2(\R)$ lies in the common domain of both $\wh\tau$ and $\wh\tau^2$, then $\zeta_\a$ matches the semiclassical states considered in \cite{Chandrasekaran:2022cip}.
In this state, the quantum clock $\wh\tau$ has both expectation value and uncertainty of order $O(\alpha)$:
\be
\langle\zeta_\a|\wh\tau|\zeta_\a\rangle_{L^2(\R)}
={i}\a C_1,\qquad
\langle\zeta_\a|(\Delta\wh\tau)^2|\zeta_\a\rangle_{L^2(\R)}=-\a^2 (C_2-C_1^2)\label{semiclassicalTauC1C2}
\ee
where $C_1=\int_0^\infty\rmd x\,\overline{g(x)}g'(x)$ and $C_2=\int_0^\infty\rmd x\,\overline{g(x)}g''(x)$ are both finite. 
Under time evolution $\psi(\pi,t)=e^{-i\pi t}\zeta_\a(\pi)$, the quantum clock has the expectation value $t+O(\a)$, while the uncertainty remains $O(\a)$.
In the semiclassical limit, the quantum observer is expected to reduce to the classical observer on classical spacetime, and the observer's clock should give a well-defined time with uncertainty much smaller than the de-Sitter radius $L_{\rm dS}$ (we have set $\b_{\rm dS}=2\pi L_{\rm dS}=1$ throughout the discussion). Therefore, the limit $\a\to 0$ can be interpreted as the semiclassical limit. So we expect that in this limit, the correlation functions evaluated by $\o_\a$ should reduces to the semiclassical correlation function in the static patch.

In the following, we proceed with the general $f_\alpha(\widehat\pi)$ defined in \eqref{generalizefalpha} without any differentiability assumption. The corresponding state is still defined by $
\omega_\alpha(A_+) = \mathrm{Tr}_+\left( f_\alpha(\widehat\pi)^{1/2} A_+ f_\alpha(\widehat\pi)^{1/2} \right)$ 
for any $A_+\in\cm_+$. 

We continue to refer to the state $\o_\a$, defined using the general $f_\a(\wh \pi)$, as the heavy-clock state. We continue to refer to the $\alpha \to 0$ as the heavy-clock limit, rather than the semiclassical limit, since our discussion does not require the state to be semiclassical in the sense of \eqref{semiclassicalTauC1C2}. 
For example, the choice at the beginning $f_\alpha(\pi)=\alpha e^{(1-\alpha)\pi}$ (understood as a function supported on $\pi\geq 0$) does not correspond to a semiclassical state, since it is not differentiable at $\pi=0$, but it corresponds to a heavy-clock state.

As discussed in \cite{Chandrasekaran:2022cip}, the purified state $\Psi_0\otimes\zeta_\a\in\ch_0\otimes\ch_{\rm obs}$ can be interpreted as a physical state that is annihilated by the Hamiltonian constraint, provided that we include a second, identical observer in the complementary static patch. Here we give a brief explanation: 
By including the second observer, the Hamiltonian constraint becomes 
\be
\wh{\mathscr{H}}=\wh H+\wh \pi-\wh\pi',
\ee
which acts on the kinematical Hilbert space $\ch_0\otimes \ch_{\rm obs} \otimes \ch_{\rm obs}'$ with $\ch_{\rm obs}'\cong L^2(\R)$. Here the primed quantities are associated to the second static patch. The minus sign is because the modular time translation in the second patch is past-pointing. The gauge invariant operators in two static patches form algebras $\cn$ and $\cn'$, where $\cn'$ is generated by $e^{-is\wh\pi'}$, $s\in\R$, and $e^{i\wh\t' \wh H} a'e^{-i\wh\t' \wh H}$ for all local operators $a'\in\ca_0'$ in the second patch (here $\cn'$ does not denote the commutant of $\cn$, although $\cn'$ does belong to the commutant). 
Besides restricting to gauge-invariant operators, we must restrict the states to only physical states that are invariant under gauge transformation--that is, they are annihilated by the Hamiltonian constraint.
Typically, such states do not live in the kinematical Hilbert space, but rather are elements in the space of distributions.
Explicitly, the unitary transformation $\Fu=e^{-i\wh\tau'(\wh H+\wh \pi)}$ simplifies the constraint by
\be
\Fu\,\wh{\mathscr{H}}\,\Fu^{-1}=-\wh\pi', 
\ee
so physical states are given by\footnote{
The state relates to the coinvariant construction discussed in \cite{Chandrasekaran:2022cip} as follows. Under the unitary transformation $\Fu$, the gauge transformation is represented as $e^{is\wh\pi'}$, with $s\in\R$. We represent any $\Psi\in \ch_0\otimes \ch_{\rm obs} \otimes \ch_{\rm obs}'$ as $\Psi(\pi')$, where $\Psi(\pi')\in \ch_0\otimes \ch_{\rm obs}$ for each $\pi'\in\R$. A coinvariant is a gauge-equivalence class $\Psi\sim e^{is\wh\pi'} \Psi$ for any $s\in\R$. We define the group-averaging map $\eta$ by $\eta(\Psi):=\int \rmd s\, e^{-is \pi'}\Psi(\pi')=\Psi(0)\otimes\delta(\pi')$, which maps a dense domain $\Fd\in \ch_0\otimes \ch_{\rm obs} \otimes \ch_{\rm obs}'$ to the space $\Fd^*$ of linear functionals. The map $\eta$ depends only on the coinvariance, owing to the translational invariance of the integration measure. The image of $\eta$ consists of physical states $\Psi_{\rm phys}=\Psi(0)\otimes\delta(\pi')$, which are in one-to-one correspondence with states in $\ch_0\otimes \ch_{\rm obs}$. The inner product between physical states is given by $\langle\eta(\Psi),\eta(\Phi)\rangle_{\rm phys}:=\int\rmd \pi'\int \rmd s\, e^{is \pi'}\langle \Psi(\pi'),\Phi(\pi')\rangle=\langle \Psi(0),\Phi(0)\rangle$, showing the equivalence to the inner product on $\ch_0\otimes \ch_{\rm obs}$. Therefore, the Hilbert space spanned by $\Psi_{\rm phys}$ is isomorphic to $\ch_0\otimes \ch_{\rm obs}$. The inner product $\langle\eta(\Psi),\eta(\Phi)\rangle_{\rm phys}$ can also be viewed as defined on the space of coinvariants. Removing null vectors from the coinvariant space yields the physical Hilbert space, which is isomorphic to $\ch_0\otimes \ch_{\rm obs}$.
} 
\be
\Psi_{\rm phys}=\Phi\otimes\delta(\pi'),\qquad \forall \Phi\in \ch_0\otimes \ch_{\rm obs}, 
\ee
The physical Hilbert space is identical to $\ch_0\otimes \ch_{\rm obs}$. Since all operators in $\cn$ commute with $\Fu$, the representation of $\cn$ on $\ch_0\otimes \ch_{\rm obs}$ used above becomes the representation on the physical Hilbert space.
The representation of $\cn'$ is given by $\Fu e^{-is\wh\pi'}\Fu^{-1}\Psi_{\rm phys}=e^{-is(\wh H+\wh\pi)}\Psi_{\rm phys}$ and $\Fu e^{i\wh\t' \wh H} a'e^{-i\wh\t' \wh H}\Fu^{-1}\Psi_{\rm phys}=a'\Psi_{\rm phys}$. Next, we project the states and operators onto the subspace with $\pi>0$ and $\pi'>0$. By conjugation with $\Fu$, the projections on physical states are given by $\Fu\,\chi_+(\wh \pi)\,\Fu^{-1}\Psi_{\rm phys}=\chi_+(\wh \pi)\Psi_{\rm phys}$ and $\Fu\, \chi_+(\wh \pi')\, \Fu^{-1}\Psi_{\rm phys}=\chi_+(\wh H+\wh \pi)\Psi_{\rm phys}$. The physical Hilbert space is then restricted to $\chi_+(\wh \pi)\chi_+(\wh H+\wh \pi)(\ch_0\otimes \ch_{\rm obs})$, which contains the state $\Psi_0\otimes\zeta_\a$ used above. By applying another unitary transformation $e^{i\wh \tau \wh H}$, we obtain the isomorphism between the physical Hilbert space and $\chi_+(\wh \pi-\wh{H})\chi_+(\wh \pi)(\ch_0\otimes \ch_{\rm obs})$.

\section{Heavy-clock limit of n-point correlation functions}\label{sec:Holographic correspondence of npoint correlation functions}

We now turn to the computation of correlation functions in the heavy-clock state. We start with the two-point function with operator insertions $A_+ = \Pi_+ \pi_0(a) \Pi_+$ and $B_+ = \Pi_+ \pi_0(b) \Pi_+$. Recall the expression from \eqref{Tr+rhoABrho}. The two-point function in the heavy-clock state is given by
\be
\o_{\a}(A_+B_+)=\int\rmd E\,K_{\a}(E)m_{a,b}(E),
\ee
where
\be
m_{a,b}(E)=\left\langle\Psi_0\middle|a\middle|E\right\rangle\left\langle E\middle|b\middle|\Psi_0\right\rangle,\qquad 
K_\alpha(E)=\int_0^\infty\rmd\pi\,e^{-\pi}f_\alpha(\pi)\chi_+(\pi-E).
\ee

\begin{lemma}\label{lemma:ConvergenceKalpha}
For $f_\a(\wh\pi)=\alpha e^{\wh \pi} |g(\a \wh \pi)|^2$, where $g$ supported on $\R_{\geq 0}$ satisfies $\int_0^\infty |g(x)|^2\rmd x=1$, we have $0\leq K_\alpha(E)\leq 1$ and the convergence $K_\alpha(E)\to1$ for every $E\in\R$,  as $\alpha\to0$.

\end{lemma}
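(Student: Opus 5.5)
Substituting the definition of $f_\alpha$ into $K_\alpha$ cancels the exponential factor, giving
\[
K_\alpha(E)=\int_0^\infty \rmd\pi\,\alpha\,|g(\alpha\pi)|^2\,\chi_+(\pi-E).
\]
The plan is to rewrite this as an integral against a fixed probability density by rescaling. With $x=\alpha\pi$, so that $\rmd x=\alpha\,\rmd\pi$,
\[
K_\alpha(E)=\int_0^\infty \rmd x\,|g(x)|^2\,\chi_+(x-\alpha E),
\]
where we used $\chi_+(x/\alpha-E)=\chi_+(x-\alpha E)$ for $\alpha>0$. The measure $|g(x)|^2\rmd x$ on $\R_{\geq0}$ is a probability measure by the normalization assumption. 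Hence the bound $0\leq K_\alpha(E)\leq 1$ follows at once, because the integrand is nonnegative and bounded by $|g(x)|^2$.

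For the convergence I will split into cases according to the sign of $E$. If $E\leq 0$, then $\chi_+(x-\alpha E)=1$ for all $x\geq0$, so $K_\alpha(E)=1$ identically. If $E>0$, then
\[
1-K_\alpha(E)=\int_0^{\alpha E}|g(x)|^2\,\rmd x .
\]
This tends to $0$ as $\alpha\to0$ because $|g|^2\in L^1(\R_{\geq0})$. One way to see this is absolute continuity of the Lebesgue integral, i.e.\ continuity of $t\mapsto\int_0^t|g|^2$ at $t=0$. Equivalently, one can apply dominated convergence to $\chi_{[0,\alpha E)}(x)|g(x)|^2\to \chi_{\{0\}}(x)|g(x)|^2$, whose integral vanishes since $\{0\}$ has measure zero.

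The only point needing care is that no regularity of $g$ is assumed, so pointwise arguments on $g$ near $0$ are unavailable (for example, $g$ could blow up at $0$). The integral formulation above sidesteps this, because it uses only $g\in L^2$. I expect no real obstacle beyond noting that the change of variables and the identity $\chi_+(\pi-E)=\chi_+(\alpha\pi-\alpha E)$ are valid.
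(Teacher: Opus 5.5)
Your proof is correct and follows essentially the same route as the paper, which also rescales $x=\alpha\pi$ to get $K_\alpha(E)=\int_{\alpha r}^\infty |g(x)|^2\,\rmd x$ with $r=\max\{0,E\}$, reads off $0\leq K_\alpha\leq 1$ from the normalization of $g$, and sends the lower limit to $0$. Your case split on the sign of $E$ and your appeal to continuity of $t\mapsto\int_0^t|g|^2$ at $t=0$ simply spell out the convergence step that the paper states in one line.
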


\begin{proof}
Let us define $r \equiv \max\{0, E\}\geq0 $. By changing variables to $x = \alpha \pi$, the integral $K_\alpha(E)$ becomes $K_\alpha(E) = \int_{\alpha r}^\infty |g(x)|^2 \, \rmd x$. 
Therefore, $0\leq K_\alpha(E) \leq  \int_0^\infty |g(x)|^2 \, \rmd x=1$ by the normalization of $g$.
As $\alpha \to 0$, the lower limit $\alpha r \to 0$, so $K_\alpha(E) \to  1$.
\end{proof}

So we recover the bulk two-point function in this limit and obtain the desired correspondence of two-point function
\be
\lim_{\alpha\to0}\omega_\alpha(A_+B_+)&=&\int\rmd E\,
\left\langle\Psi_0\middle|a\middle|E\right\rangle\left\langle E\middle|b\middle|\Psi_0\right\rangle
=\left\langle\Psi_0\middle|ab\middle|\Psi_0\right\rangle .
\ee
Note that $\left|K_\alpha(E)m_{a,b}(E)\right|\leq |m_{a,b}(E)|$, and $ |m_{a,b}(E)|$ is integrable by \eqref{schwarzinequalityab}. Then the limit can pass through the $E$ integral by the dominated convergence.

The correspondence can be generalized to $n$-point functions: Let $a_1,\ldots,a_n\in\mathcal{A}_0$ and $A_{j,+}:=\Pi_+\pi_0(a_j)\Pi_+$. We have
\be
A_{1,+}A_{2,+}\cdots A_{n,+}&=&e^{-i\widehat\tau\widehat H}\chi_+(\widehat\pi-\widehat H)\, a_1\, \chi_+(\widehat\pi-\widehat H)\, a_2\, \chi_+(\widehat\pi-\widehat H)\, \cdots \, a_n\, \chi_+(\widehat\pi-\widehat H)e^{i\widehat\tau\widehat H}.
\ee
In this case, between every adjacent pair of operator insertions $a_j$ and $a_{j+1}$ there appears a projector $\chi_+(\widehat\pi - \widehat H)$.

To evaluate this operator product in the heavy-clock state $\o_\a$, we insert resolutions of the identity for $\widehat H$ at each occurrence of the projector $\chi_+(\widehat\pi-\widehat H)$ between consecutive operators $a_j$ and $a_{j+1}$. 
It yields the following expression:
\be
\o_\a\left(A_{1,+}A_{2,+}\cdots A_{n,+}\right)
&=&\int_0^\infty\rmd\pi\,e^{-\pi}f_\a(\pi)\int\prod_{k=1}^{n-1}\rmd E_k\,
\prod_{k=1}^{n-1}\chi_+(\pi-E_k)\, m_n(E_1,\ldots,E_{n-1}).\label{nptFunction1}
\ee
where
\be
m_n(E_1,\ldots,E_{n-1})
&:=&
\left\langle\Psi_0\middle|a_1\middle|E_1\right\rangle
\left\langle E_1\middle|a_2\middle|E_2\right\rangle\cdots
\left\langle E_{n-1}\middle|a_n\middle|\Psi_0\right\rangle .
\ee
For $n \geq 3$, in order to justify the interchange of integration order, we need to require that the operators $a_1, \ldots, a_n$ are such that $m_n(E_1,\ldots,E_{n-1})$ is absolutely integrable:
\be
\int_{\mathbb R^{n-1}}\prod_{k=1}^{n-1}\rmd E_k\,
\left|m_n(E_1,\ldots,E_{n-1})\right|<\infty .
\label{nPointIntegrability}
\ee
For instance, this integrability condition is satisfied by observables that are concentrated at low energies and decay rapidly enough at high energies.

Given the condition \eqref{nPointIntegrability}, the integral in \eqref{nptFunction1} is absolutely convergent, so the order of integration can be interchanged without affecting the result. Thus, we find
\be
\o_\a\left(A_{1,+}A_{2,+}\cdots A_{n,+}\right)
&=&\int\prod_{k=1}^{n-1}\rmd E_k\,
K_\a^{(n)}(E_1,\ldots,E_{n-1})\, m_n(E_1,\ldots,E_{n-1}),\label{nptFunction2}
\ee
where the kernel $K_\a^{(n)}(E_1,\ldots,E_{n-1})$ contains all clock dependence
\be
K^{(n)}_\a(E_1,\ldots,E_{n-1})=\int_0^\infty\rmd\pi\,e^{-\pi}f_\a(\pi)\prod_{k=1}^{n-1}\chi_+(\pi-E_k).
\ee

In the integrand of $K_\a^{(n)}(E_1,\ldots,E_{n-1})$, the product of projectors requires the clock energy $\pi$ to exceed every intermediate modular energy, i.e. 
\be
\prod_{k=1}^{n-1}\chi_+(\pi-E_k)=\chi_+(\pi-E_{\max}),\qquad E_{\max}:=\max\{E_1,\ldots,E_{n-1}\}.
\ee
Then we obtain the explicit result of $K_\a^{(n)}(E_1,\ldots,E_{n-1})$, which is similar to the case of two-point function
\be
K^{(n)}_\alpha(E_1,\ldots,E_{n-1})&=&\int_0^\infty\rmd\pi\,e^{-\pi}f_\a(\pi)\chi_+(\pi-E_{\max}).
\ee
By Lemma \ref{lemma:ConvergenceKalpha}, we obtain $0\leq K^{(n)}_\alpha\leq 1$ and $K^{(n)}_\alpha(E_1,\ldots,E_{n-1})\to1$ as $\a\to0$ for every set of modular energy $E_1,\cdots,E_{n-1}$. By $\left|K^{(n)}_\alpha m_n\right|\leq\left|m_n\right|$ and the integrability condition \eqref{nPointIntegrability}, the limit $\a\to 0$ can be passed through the integration by the dominated convergence. As a result, in the heavy-clock limit, the $n$-point correlation function of dressed operators recovers the standard semiclassical correlation function in the static patch as given by quantum field theory in curved spacetime.
\be
\lim_{\alpha\to0}\omega_\alpha(A_{1,+}\cdots A_{n,+})&=&\int\prod_{k=1}^{n-1}\rmd E_k\,
\left\langle\Psi_0\middle|a_1\middle|E_1\right\rangle\left\langle E_1\middle|a_2\middle|E_2\right\rangle\cdots\left\langle E_{n-1}\middle|a_n\middle|\Psi_0\right\rangle\nonumber\\
&=&\left\langle\Psi_0\middle|a_1a_2\cdots a_n\middle|\Psi_0\right\rangle.\label{nptFunction3}
\ee

The resolution to Susskind's puzzle is simple:
since $\omega_\alpha$ is not a trace, cyclicity no longer forces the imaginary part of correlation functions to vanish, so the commutator part of correlators can survive in the holographic description.

In the above, we considered the state $\omega_\alpha(\,\cdot\,) = \langle\Psi_0 \otimes \zeta_\alpha|\,\cdot\,|\Psi_0 \otimes \zeta_\alpha\rangle$, where the cyclic and separating vector $\Psi_0$ for $\mathcal{A}_0$ is the Bunch-Davies state in the static patch. However, this construction can be generalized to other states by replacing $\Psi_0$ with a generic $\Phi\in\ch_0$, leading to a broader class of states:
\be
\o_{\Phi,\a}\lt(  \co \rt)= \langle\Phi\otimes\zeta_\a| e^{i\wh\tau\wh H}\co e^{-i\wh\tau\wh H}|\Phi\otimes\zeta_\a\rangle,\qquad \forall \co\in\cm_+.\label{omegaPhiAlpha}
\ee
For the state $\Phi\otimes\zeta_\a\in\ch_0\otimes \ch_{\rm obs} $ to make a nonzero contribution in the physical Hilbert space, it is required to be non-vanishing after projection onto $e^{i\wh\tau\wh H}\chi_+(\wh \pi)\chi_+(\wh H+\wh \pi)(\ch_0\otimes \ch_{\rm obs}) = \chi_+(\wh \pi)\chi_+(\wh \pi-\wh H)(\ch_0\otimes \ch_{\rm obs})$. The cyclicity of $\Psi_0$ means that $\ca_0|\Psi_0\rangle$ is dense in $\ch_0$. If $\Phi$ belongs to this dense domain, there is an operator $b\in\ca_0$ such that $|\Phi\rangle=b|\Psi_0\rangle$. In this case, $\o_{\Phi,\a}$ relates to $\o_{\a}$ by
\be
\o_{\Phi,\a}\lt(  \co \rt)=\o_{\a}\lt(B^{*}_{+}  \co B_{+}\rt),\qquad B_{+}=\Pi_{+}e^{-i\widehat{\tau}\widehat{H}}be^{i\widehat{\tau}\widehat{H}}\Pi_{+}.
\ee
By this relation, the correspondence \eqref{nptFunction3} trivially generalizes to $\o_\a\to\o_{\Phi,\a}$ on the left-hand side and $\Psi_0\to \Phi$ on the right-hand side.

For general situation of $\Phi\in\ch_0$ that includes $\Phi\notin\ca_0|\Psi_0\rangle$, we can proceed the derivation analogously as the above: The correlation function with respect to the new state $\Phi$ is expressed as
\be
\omega_{\Phi,\alpha}(A_{1,+}\cdots A_{n,+})=\int\prod^{n}_{j=0}\rmd E_{j}\,K^{(n+2)}_{\alpha}\left(E_{0},\cdots,E_{n}\right)m_{\Phi,n}\left(E_{0},\cdots,E_{n}\right)
\ee
where $n+1$ resolution of identities have been inserted. The function $m_{\Phi,n}\left(E_{0},\cdots,E_{n}\right)$ expressed below is again assumed to be absolutely integrable.
\be
m_{\Phi,n}\left(E_{0},\cdots,E_{n}\right)=\langle\Phi\mid E_{0}\rangle\langle E_{0}\mid a_{1}\mid E_{1}\rangle\langle E_{1}\mid a_{2}\mid E_{2}\rangle\cdots\langle E_{n-1}\mid a_{n}\mid E_{n}\rangle\langle E_{n}\mid\Phi\rangle.
\ee
By $\lim_{\a\to 0} K_\a^{(n+1)}=1$, 
we find that in the heavy-clock limit, the bulk correlation function with the state $\Phi$ is recovered:
\be
\lim_{\alpha\to0}\omega_{\Phi,\alpha}(A_{1,+}\cdots A_{n,+})=\left\langle\Phi\middle|a_1a_2\cdots a_n\middle|\Phi\right\rangle\ .\label{omegaPhiAlphaLimit}
\ee
Observe that in the limit $\a \to 0$, the state $\zeta_\a$ is supported around $\pi \sim \a^{-1}\to\infty$. So for any $\Phi$ whose support includes finite modular energies, the projection $\chi_+(\wh \pi-\wh H)\Phi\otimes\zeta_\a$ is nonzero in the physical Hilbert space.

\section{Observer with $\wh H_{\rm obs}=|\wh\pi|$, gauge invariant operators, and physical states}\label{sec:local-observer-time-reversal}

In the discussion above, we considered only the sector with $\pi\geq 0$, with the observer's Hamiltonian $\wh H_{\rm obs}=\wh\pi$. We now extend the framework to include both positive and negative momentum sectors, $\pi \geq 0$ and $\pi < 0$, by defining the observer's Hamiltonian as $\wh H_{\rm obs}=|\wh\pi|$. 
This setup is connected to the physical scenario where gravity and matter are coupled to a dust clock field, a topic to be discussed in more detail in Section \ref{sec:dust clock fields}. 
The implementation of time-reversal symmetry in this model is discussed in Section \ref{sec:Time reversal symmetry}.

Given the clock Hamiltonian $\wh H_{\mathrm{obs}}=|\wh\pi|$, the flow of $\wh\tau$ generated by $\wh H_{\mathrm{obs}}$ results in two clock orientations
\be
\wh\tau(s):=e^{\rmi t|\wh\pi|}\,\wh\tau\, e^{-\rmi t|\wh\pi|},
\qquad
\frac{\rmd\wh\tau(t)}{\rmd t}=\rmi\lt[|\wh\pi|,\wh\tau(t)\rt]=\operatorname{sgn}(\wh\pi).\label{clockOrientations}
\ee
Accordingly, we refer to the case $\pi > 0$ as the forward-running clock, and $\pi < 0$ as the backward-running clock. The associated Hilbert subspaces are denoted $\ch_\pm$, defined by
\be
\mathcal{H}_{\mathrm{obs}}=\mathcal{H}_+\oplus\mathcal{H}_-,
\qquad
\mathcal{H}_\pm:=\Pi_\pm\mathcal{H}_{\mathrm{obs}},
\ee
where $\mathcal{H}_{\mathrm{obs}}=L^2(\R)$ and $\Pi_\pm=\chi_{[0,\pm\infty)}(\wh\pi)$ are the spectral projection of $\wh\pi$ onto $\pi \geq 0$ and $\pi\leq 0$ respectively\footnote{
	On $\ch_{\rm obs}$, the inner product between two states $\langle \phi |\phi'\rangle_{\rm obs}=\int_\R\rmd\pi\, \overline{\phi(\pi)} \phi'(\pi)=\int_0^\infty\rmd\pi\, \overline{\phi(\pi)} \phi'(\pi)+\int_{-\infty}^0\rmd\pi\, \overline{\phi(\pi)} \phi'(\pi)$, while $\pi=0$ is of measure zero.
}.

By coupling the quantum clock to the system in one static patch, the total Hilbert space is $\mathcal{H}_{\mathrm{tot}}=\mathcal{H}_0\otimes\mathcal{H}_{\mathrm{obs}}$, and the total (kinematical) algebra is $\ca=\ca_0\bar\otimes\mathcal{B}(\mathcal{H}_{\mathrm{obs}})$. The Hamiltonian constraint with the new clock Hamiltonian is given by the self-adjoint operator
\be
\wh{\mathbf H}_1:=\wh H\otimes\mathbb{I}_{\rm obs}+\mathbb{I}_0\otimes|\wh\pi|.
\label{quantum_branch_constraint}
\ee
When expressing the Hilbert space as a direct sum $\ch_{\rm tot}=(\ch_0\otimes\ch_+)\oplus(\ch_0\otimes\ch_-)$, the Hamiltonian constraint takes the form as a direct sum
\be
\wh{\mathbf H}_1=\wh{\mathbf H}_{1,+}\oplus\wh{\mathbf H}_{1,-},\qquad \wh{\mathbf H}_{1,+} = \wh H \otimes \mathbb{I}_+ + \mathbb{I}_0 \otimes \wh\pi,\qquad \wh{\mathbf H}_{1,-} = \wh H \otimes \mathbb{I}_- - \mathbb{I}_0 \otimes \wh\pi.
\ee
where $\wh{\mathbf H}_{1,\pm}$ acts on $\ch_0\otimes\ch_\pm$ respectively.

In the context of (quantum) relational reference frames and their transformations \cite{Thiemann:2026vaj}, each choice of relational reference frame is characterized by selecting specific clock degrees of freedom and their conjugate momenta, along with gauge-fixing conditions. Here, the forward- and backward-oriented clocks correspond to two distinct choices of relational reference frames, associated to the gauge-fixing conditions $\tau(t) = \pm t$ (see \eqref{clockOrientations}), with $t$ a phase-space independent parameter that can be interpreted as coordinate time. The two sectors, $\mathcal{H}_0 \otimes \mathcal{H}_\pm$, can thus be viewed as arising respectively from two quantizations, each of which is within its own relational reference frame. The unitary symmetry $U$ in \eqref{Umap} below relates the quantum descriptions associated to these two frames. 
These two frames are also related by the anti-unitary time-reversal symmetry, as described in Section \ref{sec:Time reversal symmetry}.

As in the case of \eqref{HamConstr}, the Hamiltonian constraint generates gauge transformations via the flow $e^{it\wh{\mathbf H}_1}$. The corresponding algebra of gauge-invariant (relational) observables, $\cm$, is defined as the commutant of this flow within the total algebra $\ca$, i.e., it consists of all operators in $\ca$ that commute with $e^{it\wh{\mathbf H}_1}$ for all $t \in \mathbb{R}$.
\be
\mathcal{M}=
\lt\{\mathcal{O}\in\ca\;\middle|\;
e^{i t\wh{\mathbf H}_1}\mathcal{O}e^{-i t\wh{\mathbf H}_1}=\mathcal{O},
\;\forall t\in\mathbb{R}\rt\}.
\ee
The algebra $\cm$ differs from the crossed-product algebra $\cn \cong \ca_0\rtimes_\sigma \mathbb{R}$, since the Hamiltonian constraint $\wh{\mathbf{H}}_1$ governing $\cm$ is different from the constraint $\wh{\mathbf{H}}$ appearing in \eqref{HamConstr}.

In the following, we demonstrate that the algebra $\cm$ is isomorphic to $\cm_+ \otimes M_2(\mathbb{C})$, where $\cm_+$ is the algebra \eqref{cmPlus} previously constructed in the $\pi \geq 0$ sector, and $M_2(\mathbb{C})$ denotes the algebra of $2\times2$ complex matrices: Represent clock wave functions in momentum space and introduce the unitary reflection $U$ by
\be
(U\psi)(\pi)=\psi(-\pi),\qquad \forall\psi\in\ch_{\rm tot}.\label{Umap}
\ee
The operator $U$ maps $\ch_0\otimes\mathcal{H}_\pm$ onto $\ch_0\otimes\mathcal{H}_\mp$, satisfying $U\wh\pi U^{-1}=-\wh\pi,\ 
U\wh\tau U^{-1}=-\wh\tau$.
Using $U$ to identify the negative-momentum sector with the positive one, we define the unitary isomorphism $V$ by
\be
V:\mathcal{H}_{\mathrm{tot}}\longrightarrow\wt{\ch}_{\rm tot}:=
(\mathcal{H}_0\otimes\mathcal{H}_+)\otimes\mathbb{C}^2,
\qquad
V\psi=(\Pi_+\psi,\,U\Pi_-\psi)^\top, \qquad \forall\psi\in\ch_{\rm tot}.
\ee
The map $V=\mathbb{I}\oplus U$ leaves the $\ch_0\otimes\ch_+$ sector unchanged and reflects the $\ch_0\otimes\ch_-$ sector into $\ch_0\otimes\ch_+$.
Under $V$, the projections $\Pi_\pm$ become projections on $\C^2$:
\be
\wt\Pi_+:=V\Pi_+V^{-1}
=\mathbb{I}_{\mathcal{H}_0\otimes\mathcal{H}_+}\otimes
\begin{pmatrix}1&0\\0&0\end{pmatrix},
\qquad
\wt\Pi_-:=V\Pi_-V^{-1}
=\mathbb{I}_{\mathcal{H}_0\otimes\mathcal{H}_+}\otimes
\begin{pmatrix}0&0\\0&1\end{pmatrix}.
\ee

The full kinematical algebra $\ca$ represented on $\wt\ch_{\rm tot}$ is given by
\be
\wt{\mathcal{A}}:=V\mathcal{A}V^{-1}
=\mathcal{A}_0\bar\otimes\mathcal{B}(\mathcal{H}_+)
\bar\otimes M_2(\mathbb{C}).
\ee
Due to $U\wh \pi U^{-1}=-\wh \pi$ on $\ch_-$, the Hamiltonian constraint $\wh{\mathbf H}_1=\wh{\mathbf H}_{1,+}\oplus\wh{\mathbf H}_{1,-}$ takes the following tensor-product form on $\wt\ch_{\rm tot}$
\be
V\wh{\mathbf H}_1V^{-1}
=\wh{\mathbf H}_{1,+}\otimes\mathbb{I}_2,
\qquad
\wh{\mathbf H}_{1,+}:=\wh H\otimes\mathbb{I}_+
+\mathbb{I}_0\otimes\wh\pi .
\label{FoldedConstraint}
\ee
The commutant of the gauge flow on $\wt\ch_{\rm tot}$ is now transparent:
As established previously, the type II$_1$ factor $\cm_+$ forms the commutant of the gauge flow generated by $\wh{\mathbf H}_{1,+}$ inside $\mathcal{A}_0\bar\otimes\mathcal{B}(\mathcal{H}_+)$, while the full matrix algebra $M_2(\mathbb{C})$ commutes trivially with the identity $\mathbb{I}_2$.
Therefore, the algebra of gauge invariant operators in $\wt\ca$ has the following simple form
\be
\wt{\cm}
:=V\cm V^{-1}
=\mathcal{M}_+\bar\otimes M_2(\mathbb{C}),
\label{FoldedInvariantAlgebra}
\ee

We define the trace $\O$ on ${\cm}$ by 
\be
\O\lt(V^{-1}(\wh{a} \otimes m )V\rt) = \tr_+(\wh{a})  \mathrm{tr}(m),\qquad \wh{a}\in\cm_+, \qquad m\in M_2(\C),\label{traceOmega}
\ee 
where $\mathrm{tr}$ is the standard matrix trace. 
This trace $\O$ inherits faithfulness, normality, and finiteness from $\Tr_+$.

\begin{theorem}\label{thm:matrix-amplified-II1}
If $\mathcal{A}_0$ is a type III$_1$ factor, then the algebra of gauge-invariant operators $\cm$ is a type II$_1$ von Neumann factor. The trace on $\cm$ is uniquely given by $\O$ up to a complex rescaling.
\end{theorem}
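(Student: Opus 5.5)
The plan is to transport everything through the unitary $V$ and work with $\wt\cm=\cm_+\bar\otimes M_2(\C)$ from \eqref{FoldedInvariantAlgebra}. Since $V$ is unitary, $\mathcal O\mapsto V\mathcal OV^{-1}$ is a spatial $*$-isomorphism of von Neumann algebras. It therefore preserves being a factor, the type, and the set of normal traces. It suffices to prove the statement for $\wt\cm$ with the functional $\wt\O(\wh a\otimes m)=\tr_+(\wh a)\,\mathrm{tr}(m)$, extended by linearity and normality to $\cm_+\bar\otimes M_2(\C)$. Concretely, for $X=\sum_{ij}X_{ij}\otimes e_{ij}$ with $X_{ij}\in\cm_+$, this reads $\wt\O(X)=\tr_+(X_{11})+\tr_+(X_{22})$. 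We take as given from Section~\ref{sec:crossed-product-algebra} that $\cm_+=\Pi_+\cn\Pi_+$ is a type II$_1$ factor with faithful normal finite trace $\tr_+$, $\tr_+(\Pi_+)=1$. This follows because $\cn\cong\ca_0\rtimes_\sigma\R$ is a II$_\infty$ factor when $\ca_0$ is III$_1$, and $\Pi_+$ is a finite projection in it.

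\textbf{Factoriality.} The center of a tensor product of von Neumann algebras is the tensor product of the centers: $Z(\cm_+\bar\otimes M_2(\C))=Z(\cm_+)\bar\otimes Z(M_2(\C))=\C\otimes\C$. Alternatively, one can argue directly. If $X=\sum X_{ij}\otimes e_{ij}$ commutes with $\mathbb I\otimes e_{kl}$ for all $k,l$, then $X$ is diagonal with $X_{11}=X_{22}$. If it also commutes with $\wh b\otimes\mathbb I_2$, then $X_{11}\in Z(\cm_+)=\C$.

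\textbf{Type II$_1$.} We check directly that $\wt\O$ is a faithful normal tracial positive functional with $\wt\O(\mathbb I)=2<\infty$. Tracial: $\wt\O(XY)=\sum_{i,k}\tr_+(X_{ik}Y_{ki})=\sum_{i,k}\tr_+(Y_{ki}X_{ik})=\wt\O(YX)$ by the trace property of $\tr_+$. Faithful: $\wt\O(X^*X)=\sum_{i,k}\tr_+(X_{ki}^*X_{ki})$ vanishes only if all $X_{ki}=0$. Normal: $\wt\O$ is a finite sum of normal functionals of matrix entries. A factor carrying a faithful normal finite trace is finite. Moreover, $\wt\cm$ is not type I, since it contains the non-type-I corner $\cm_+\cong(\mathbb I\otimes e_{11})\wt\cm(\mathbb I\otimes e_{11})$, and a corner of a type I factor by a nonzero projection is type I. It is also not finite-dimensional. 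Hence $\wt\cm$, and therefore $\cm$, is a type II$_1$ factor.

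\textbf{Uniqueness of the trace.} This is the main point. I would invoke the standard fact that a II$_1$ factor has a unique normalized normal trace, i.e.\ any normal tracial functional is a scalar multiple of any other. To keep the argument self-contained, I would instead reduce to uniqueness on $\cm_+$. Let $\phi$ be a normal trace on $\wt\cm$. Then $\phi(X_{12}\otimes e_{12})=\phi\big((X_{12}\otimes e_{11})(\mathbb I\otimes e_{12})\big)=\phi\big((\mathbb I\otimes e_{12})(X_{12}\otimes e_{11})\big)=0$, so off-diagonal blocks contribute nothing, and similarly for $e_{21}$. Next, $\phi(\wh a\otimes e_{22})=\phi\big((\mathbb I\otimes e_{21})(\wh a\otimes e_{12})\big)=\phi(\wh a\otimes e_{11})$. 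Finally, $\wh a\mapsto\phi(\wh a\otimes e_{11})$ is a normal trace on $\cm_+$, hence equals $c\,\tr_+$ by uniqueness of the trace on the II$_1$ factor $\cm_+$ (the uniqueness cited from Section~\ref{sec:crossed-product-algebra}). Summing the pieces gives $\phi=c\,\wt\O$. Pulling back by $V$ yields $\phi=c\,\O$ on $\cm$, with $c\in\C$ as stated.
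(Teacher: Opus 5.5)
Your proof is correct and follows essentially the same route as the paper. You transport by $V$, get factoriality from $Z(\cm_+\bar\otimes M_2(\C))=Z(\cm_+)\bar\otimes Z(M_2(\C))$, use the faithful normal finite trace $\O$ to get a finite factor that cannot be type I (the paper argues via infinite-dimensionality, you via the II$_1$ corner $(\mathbb I\otimes e_{11})\wt\cm(\mathbb I\otimes e_{11})\cong\cm_+$; both work), and conclude with uniqueness of the trace on a II$_1$ factor. Your explicit check that $\O$ is tracial and faithful, and your matrix-unit reduction of uniqueness to $\cm_+$, fill in details the paper leaves implicit. The reduction does not make the argument more self-contained, though, since it still rests on the same standard uniqueness theorem, now applied to $\cm_+$ rather than to $\cm$.
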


\begin{proof}

The center of $\wt\cm$ is $Z(\cm_+)\bar\otimes Z\lt(M_2(\C)\rt)=\C\,\mathbb{I}_{\cm_+}\otimes\mathbb{I}_2 $, so 
$\wt\cm$ is a von Neumann factor.
By the classification of the von Neumann factors, the factors with faithful, normal, and finite trace are either type I$_n$ or type II$_1$. The algebra $\wt{\cm}$ is not isomorphic to the algebra of finite-dimensional matrices, so it cannot be type I$_n$. Therefore, the only possible type is II$_1$. Due to the unitary equivalence $\cm\cong \wt\cm$, the von Neumann algebra $\cm$ is a type II$_1$ factor. Since $\cm$ is a type II$_1$ factor, the trace on $\cm$ is unique up to rescaling.

\end{proof}

Generalizing from the discussion at the end of Section \ref{sec:two-point-crossed-product}, the physical Hilbert space can be constructed by coupling an observer with $\wh{H}_{\rm obs}'=|\wh\pi'|$ in the complementary static patch. Then the total kinematical Hilbert space is $\ch_0\otimes \ch_{\rm obs} \otimes \ch_{\rm obs}'$ with $\ch_{\rm obs}\cong\ch_{\rm obs}'\cong L^2(\R)$, and the total Hamiltonian constraint is given by
\be
\wh{\mathscr{H}}_1=\wh H+|\wh \pi|-|\wh\pi'|,\label{mathscrH1}
\ee
We define two unitary reflections $\bm{u}$ and $\bm{u}'$ respectively on $\ch_{\rm obs}$ and $\ch_{\rm obs}'$ by $\bm{u} f(\pi)=f(-\pi)$ and $\bm{u}' f'(\pi')=f'(-\pi')$. The previous operator $U$ relates to $\bm u$ by $U=\mathbb{I}_0\otimes \bm u$. By the decomposition $\ch_{\rm obs}=\ch_+\oplus\ch_-$, the previous unitary map $V$ is now expressed as $V=\mathbb{I}_0\otimes(\mathbb{I}_+\oplus \bm u)\otimes \mathbb{I}'_{\rm obs}$ with $\bm u$ restricted to the minus sector $\ch_-$. We use $\mathbb{I}'$ to denote the identity operator on $\ch_{\rm obs}'$. Analogously, using the decomposition $\ch_{\rm obs}'=\ch_+'\oplus\ch_-'$ with respect to $ \pi'\geq 0$ and $\pi'\leq 0$, we define the unitary map $V'$ by $V'=\mathbb{I}_0\otimes \mathbb{I}_{\rm obs}\otimes(\mathbb{I}\oplus \bm u')$ with $\bm u'$ restricted to $ \ch'_-$. The combined unitary map $\mathscr{V}=V'V$ gives the isomorphism
\be
\ch_0\otimes \ch_{\rm obs} \otimes \ch_{\rm obs}'\cong (\mathcal{H}_0\otimes\mathcal{H}_+\otimes \mathcal{H}_+')\otimes\mathbb{C}^2\otimes \C^2
\ee
simplifies the Hamiltonian constraint to
\be
\mathscr{V}\wh{\mathscr{H}}_1\mathscr{V}^{-1}=\wh{\mathscr{H}}_{1,+}\otimes \mathbb{I}_2\otimes \mathbb{I}_2
\ee
where $\wh{\mathscr{H}}_{1,+}$ is $\wh{\mathscr{H}}=\wh H+\wh \pi-\wh\pi'$ restricted to $\mathcal{H}_0\otimes\mathcal{H}_+\otimes \mathcal{H}_+'$. Then the argument at the end of Section \ref{sec:two-point-crossed-product} carries over and generalizes straightforwardly: now the physical Hilbert space of gauge-invariant states is isomorphic to
\be
\ch_{\rm phys}= \lt(\mathscr{P}_+\mathcal{H}_0\otimes\mathcal{H}_{\rm obs}\rt)\otimes\mathbb{C}^2\otimes \C^2, \qquad \mathscr{P}_+=\chi_+(\wh \pi)\chi_+(\wh H+\wh \pi)
\ee
The algebra $\wt\cm$ in the first static patch are represented on $\ch_{\rm phys}$ as $\mathcal{M}_+\otimes M_2(\mathbb{C})\otimes \mathbb{I}_2$, since all operators in $\wt\cm$ commute with the $\mathscr{V}\wh{\mathscr{H}}_1\mathscr{V}^{-1}$ and $\mathscr{P}_+$. The gauge invariant operators in the complementary static patch form the algebra $\wt\cm'\cong \mathcal{M}_+'\otimes  \mathbb{I}_2\otimes M_2(\mathbb{C})$, where generators in $\cm'_+$ are represented as $\mathscr{P}_+a'\mathscr{P}_+$, $\forall a'\in\ca_0'$, and $\mathscr{P}_+e^{-is(\wh H+\wh\pi)}\mathscr{P}_+$, $\forall s\in\R$, on $\mathscr{P}_+\mathcal{H}_0\otimes\mathcal{H}_{\rm obs}$.

\section{Heavy-clock, dressing map, and correlation functions}\label{sec:folded-bulk-boundary}

In the following, we focus on a single static patch. Given a bulk local field operator $a\in\ca_0$, there are two ways to dress this operator relating to two clock orientations: Relative to a forward-running clock, we have defined $A_+=\Pi_+ a(\wh\tau)\Pi_+$ as in \eqref{holographicMappingPlus}, whereas relative to a backward-running clock the same relational prescription uses $-\wh\tau$ and $\Pi_-$. We therefore have
\be
A_+:=\Pi_+ a(\wh\tau)\Pi_+
=\Pi_+e^{-\rmi\wh\tau\wh H}a e^{\rmi\wh\tau\wh H}\Pi_+,
\qquad
A_-:=\Pi_-a(-\wh\tau)\Pi_-
=\Pi_-e^{\rmi\wh\tau\wh H}a e^{-\rmi\wh\tau\wh H}\Pi_-.
\label{ForwardBackwardRelationalLifts}
\ee
They are the two relational descriptions of the same bulk operator, adapted to clocks running in opposite directions. The operators $A_\pm$ commute with the Hamiltonian constraint in their respective sectors in $\ch_{\rm tot}$:
\be
[A_+,e^{it\lt(\wh H+\wh\pi\rt)}]=0,
\qquad
[A_-,e^{it\lt(\wh H-\wh\pi\rt)}]=0,\qquad \forall t\in\R.
\ee
Including both clock orientations, a candidate dressed operator $\cw(a)$ in the algebra $\cm$ can be defined by the direct sum:
\be
\mathcal{W}(a)
:=A_+ + A_- .
\label{OrientationIndependentBulkLift}
\ee

By the isomorphism $\cm\cong \wt\cm=\cm_+\bar\otimes M_2(\C)$ established by the unitary transformation $V=1\oplus U$, the operator $\cw(a)$ can be equivalently represented as $A_+\in\cm_+$ discussed in previous sections, tensored with the identity matrix, i.e. 
\be
\wt{\cw}(a):=V\mathcal{W}(a)V^{-1}
=A_+\otimes\mathbb{I}_2 ,
\ee
by using $U\Pi_-U^{-1}=\Pi_+$ and $Ua(\wh \t)U^{-1}=a(-\wh \t)$. It is manifest that $\wt{\cw}(a)$ commutes with the Hamiltonian constraint $V\wh{\mathbf H}_1V^{-1}
=\wh{\mathbf H}_{1,+}\otimes\mathbb{I}_2$.

It is straightforward to generalize the discussion by considering a general candidate for the dressed operator $\cm_m(a)$, which replaces the identity matrix with an arbitrary matrix $m\in M_2(\C)$.
\be
\wt\cw_m(a)=A_+\otimes m,\qquad \mathcal{W}_m(a):=V^{-1}\wt{\mathcal{W}}_m(a)V.
\ee
Here the matrix $m$ may in general depend on $a$.
The operator $\mathcal{W}_m(a)$ belongs to the algebra $\cm$ of gauge invariant observables. In terms of the representation on $\ch_{\rm phys}$, there is the third tensor slot with an identity matrix, which is skipped here and below.

Any density state $\varrho(\wh\pi)$ of $\cm_+$ can be generalized to a density state $\cf_{\varrho,\rho}$ of $\cm$ by associating with any positive-definite Hermitian density matrix $\rho\in M_2(\C)$ satisfying $\mathrm{tr}(\rho)=1$:
\be
\wt\cf_{\varrho,\rho}=V\cf_{\varrho,\rho}V^{-1}=\varrho(\wh\pi)\otimes\rho
\ee
Given the bulk operators equivalent to $a_1,\cdots,a_n\in\ca_0$ and matrices $m_1,\cdots,m_n\in M_2(\C)$, 
the corresponding dressed operators $\cw_{m_1}(a_1),\cdots,\cw_{m_n}(a_n)$ are given by $\wt\cw_{m_1}(a_1)=A_{1,+}\otimes m_1,\ \cdots,\ \wt\cw_{m_n}(a_n)=A_{n,+}\otimes m_n$, where $A_{j,+}=\Pi_+e^{-\rmi\wh\tau\wh H}a_j e^{\rmi\wh\tau\wh H}\Pi_+ $, $j=1,\cdots,n$. Their correlation function with respect to $\cf_{\varrho,\rho}$ factorizes into a trace on $\cm_+$ and a trace of two-by-two matrices:
\be
\O\lt(\cf_{\varrho,\rho}^{1/2}\cw_{m_1}(a_1)\cdots \cw_{m_n}(a_n)\cf_{\varrho,\rho}^{1/2}\rt)=\tr_+\lt(\varrho(\wh\pi)^{1/2}A_{1,+}\cdots A_{n,+}\varrho(\wh\pi)^{1/2}\rt)\mathrm{tr}\lt(\rho m_1\cdots m_n\rt),
\ee
where the definition and details of the trace $\tr_+(\cdots)$ can be found in Sections \ref{sec:two-point-crossed-product} and \ref{sec:Holographic correspondence of npoint correlation functions}.

Let $\varrho(\wh\pi)$ be the heavy-clock state for $\cm_+$, i.e. $\varrho(\wh\pi)=f_\alpha(\wh{\pi})$ in \eqref{generalizefalpha} with small $\alpha>0$, restricting the spectrum of $\wh\pi$ to be non-negative.
We define the heavy-clock state $\o_{\a,\rho}$ on $\cm$ by
\be
\wt \cf_{\alpha,\rho}=V\cf_{\a,\rho}V^{-1}
=f_\alpha(\wh\pi)\otimes\rho_\a,
\qquad
\omega_{\alpha,\rho}(\co)
:=\Omega\lt(\cf_{\alpha,\rho}^{1/2}\co \cf_{\alpha,\rho}^{1/2}\rt),\qquad \forall \co\in\cm,
\label{FoldedHeavyClockState}
\ee
upon a choice of the density matrix $\rho_\a\in M_2(\C)$ for any $\a\in[0,1]$. 
In addition to $\mathrm{tr}(\rho_\a)=1$, $\rho_\a>0$ and $\rho_\a^\dagger =\rho_\a$, we assume $\rho_\a$ to be continuous as $\a\to0^+$.
The state $\omega_{\alpha,\rho}$ inherits positivity, normalization, continuity, and faithfulness from the state $\omega_\alpha$ on $\cm_+$ \footnote{
	To check positivity, diagonalize $\rho_\a$ as $ \rho_\a  = u\mathrm{diag}(\lambda_1, \lambda_2)u^\dagger$ for a unitary $u$ and eigenvalues $\lambda_1, \lambda_2 > 0$. For any $a \in \wt\cm$, define $a' = (\mathbb{I} \otimes u^\dagger)a(\mathbb{I} \otimes u)$, so that $a'$ is a $2\times2$ matrix with entries in $\cm_+$. Then,
	$
	\o_{\alpha,\rho}(a^*a) 
	= \lambda_1\,\o_\alpha(a'_{11}{}^* a'_{11} + a'_{21}{}^* a'_{21}) + \lambda_2\,\o_\alpha(a'_{12}{}^* a'_{12} + a'_{22}{}^* a'_{22})\geq 0
	$ by the positivity of $\o_\alpha$ and $\lambda_i > 0$.
	Faithfulness of $\o_\alpha$ together with $\lambda_1, \lambda_2 > 0$ ensures that if $\o_{\alpha,\rho}(a^*a)=0$, then all entries of $a'$ must vanish. This in turn implies $a=0$, and thus $\o_{\alpha,\rho}$ is faithful. 
	Since $\o_{\alpha,\rho}(a) = \lambda_1\,\o_\alpha(a'_{11}) + \lambda_2\,\o_\alpha(a'_{22})$ and $\l_1+\l_2=1$, it follows that $\o_{\alpha,\rho}$ is normalized and continuous when $\o_\alpha$ is normalized and continuous.
}.
The correlation function with respect to $\o_{\a,\rho}$ factorizes to
\be
\o_{\a,\rho}\lt(\cw_{m_1}(a_1)\cdots \cw_{m_n}(a_n)\rt)=\o_\a\lt(A_{1,+}\cdots A_{n,+}\rt)\mathrm{tr}\lt(\rho_\a m_1\cdots m_n\rt).
\ee
The state $\o_{\a,\rho}$ can be represented as a state of the physical Hilbert space $\ch_{\rm phys}$: If we denote by $\Tr_{\rm phys}$ the standard trace on $\ch_{\rm phys}$, then
\be
\o_{\a,\rho}(\co )=\Tr_{\rm phys}\lt(w_{\a,\rho}\co\rt),\qquad 
w_{\a,\rho}=\mid\Psi_0\otimes\zeta_\a\rangle\langle\Psi_0\otimes\zeta_\a\mid\otimes \rho_\a\otimes \rho',
\ee
for any density matrix $\rho'$ on the $\C^2$ factor associated to $\ch_{\rm obs}'$.

Recall the result in Section \ref{sec:Holographic correspondence of npoint correlation functions}: In the heavy-clock limit $\a\to0$, the correlation function $\o_\a\lt(A_{1,+}\cdots A_{n,+}\rt)$ for operators in $\cm_+$ recovers the bulk $n$-point correlation function given by quantum fields inside the static patch:
\be
\lim_{\alpha\to0}\omega_\alpha(A_{1,+}\cdots A_{n,+})
&=&\left\langle\Psi_0\middle|a_1a_2\cdots a_n\middle|\Psi_0\right\rangle.\label{nptFunction3}
\ee
Consequently, in the heavy-clock limit $\alpha \to 0$, the correlator $\o_{\a,\rho}\big(\cw_{m_1}(a_1) \cdots \cw_{m_n}(a_n)\big)$ recovers the bulk $n$-point function $\left\langle\Psi_0\middle|a_1a_2\cdots a_n\middle|\Psi_0\right\rangle$ if and only if the matrices $\rho_\alpha$, $m_1,\ldots,m_n$ satisfy the condition
\be
\mathrm{tr}\big(\rho_{\alpha=0} m_1 \cdots m_n\big) = 1, \qquad \forall\, n \geq 1. \label{traceCondit}
\ee

The trace condition \eqref{traceCondit}, together with the following two well-motivated assumptions, imposes strong constraints on the allowed choices of matrices $m_1,\ldots,m_n$:

\begin{description}

\item[Self-adjointness] For being physical observables, we require $\cw_m(a)$ to be self-adjoint if $a$ is self-adjoint, which in turn requires $m$ to be a normal matrix\footnote{
	On any tensor product Hilbert space $\ch_1 \otimes \ch_2$, suppose $A \otimes B = A^\dagger \otimes B^\dagger$ for nonzero operators $A$ on $\ch_1$ and $B$ on $\ch_2$. Taking matrix elements, for any $x, y \in \ch_1$ and $\phi, \psi \in \ch_2$ such that $\langle \phi, B^\dagger \psi \rangle \neq 0$, we have $\langle x, A y \rangle \langle \phi, B \psi \rangle = \langle x, A^\dagger y \rangle \langle \phi, B^\dagger \psi \rangle$. Thus, $\langle x, A^\dagger y \rangle = (\langle \phi, B \psi \rangle / \langle \phi, B^\dagger \psi \rangle) \langle x, A y \rangle$. Since this holds for all $x, y$, it follows that $A^\dagger = c A$ for some complex scalar $c = \langle \phi, B \psi \rangle / \langle \phi, B^\dagger \psi \rangle$. Similarly, $B^\dagger = d B$ for some $d \in \mathbb{C}$, and the original equality requires $c d = 1$.
}, i.e., $[m, m^\dagger] = 0$.

\item[Linear dressing map] We require the dressing map $a \mapsto \cw_m(a)$ to be well-defined as a linear map from $\ca_0$ to $\cm$. This means that the choice of matrix $m$ does not depend on the particular operator $a$ (see explanation below and Appendix \ref{app:restriction of dressing map}); in other words, $m$ is a single universal matrix applied to all operators.

\end{description}

If we write in general that $\wt\cw_m(a) = A_+(a) \otimes m(a)$, where $A_+ : \ca_0 \to \cm_+$ is linear and $m : \ca_0 \to M_2(\C)$ is a matrix-valued function, then one can show that the dressing map $\cw_m(a)$ is linear if and only if $m(a)$ is a constant matrix independent of $a$. Then the trace condition \eqref{traceCondit} reduces to $\mathrm{tr}(\rho_0 m^n)=1$ for all $n\geq 1$. Then by $[m, m^\dagger] = 0$ from the self-adjointness and $\rho_0>0$, the trace condition restricts that
\be
m=\mathbb{I}_2.
\ee
The proofs of these statements are provided in Appendix \ref{app:restriction of dressing map}.

As a result, under the requirements that the dressing map is linear and $\cw_m(a)$ entering correlation functions are self-adjoint operators, the condition 
\be
\mathrm{tr}\big(\rho_{\alpha} m_1 \cdots m_n\big) = 1, \qquad \forall\, n \geq 1.
\ee
holds for all $\a$, since all matrices $m_j$ are identities. Therefore, $\cw_{m}(a)=\cw(a)$, and we recover the semiclassical bulk correlation function in the heavy-clock limit:
\be
\lim_{\alpha\to 0}\omega_{\alpha,\rho}\big(\mathcal W(a_1) \cdots \mathcal W(a_n)\big) = \left\langle \Psi_0 \middle| a_1 a_2 \cdots a_n \middle| \Psi_0 \right\rangle.
\ee

The correspondence can be generalized beyond the Bunch-Davies state $\Psi_0$ by replacing $\Psi_0$ with a generic state $\Phi\in\ch_0$:
\be
\o_{\Phi,\a,\rho}(\co )=\Tr_{\rm phys}\lt(w_{\Phi,\a,\rho}\co\rt),\qquad 
w_{\Phi,\a,\rho}= e^{-i\wh\tau\wh H}\mid\Phi\otimes\zeta_\a\rangle\langle\Phi\otimes\zeta_\a\mid e^{i\wh\tau\wh H}\otimes \rho_\a\otimes \rho',\label{wPhiAlphaRho}
\ee
for an arbitrary normalized density matrix $\rho'$ and the standard trace $\Tr_{\rm phys}$ on the physical Hilbert space $\ch_{\rm phys}= \lt(\mathscr{P}_+\mathcal{H}_0\otimes\mathcal{H}_{\rm obs}\rt)\otimes\mathbb{C}^2\otimes \C^2,\ \mathscr{P}_+=\chi_+(\wh \pi)\chi_+(\wh H+\wh \pi)$. The state $\mid\Phi\otimes\zeta_\a\rangle$ is required to have a nonzero projection onto the range of $e^{i\wh\tau\wh H}\mathscr{P}_+e^{-i\wh\tau\wh H}=\chi_+(\wh\pi)\chi_+(\wh \pi-\wh H)$ so that $\o_{\Phi,\a,\rho}$ is non-vanishing. The result on $\Psi_0$ generalizes to
\be
\lim_{\alpha\to 0}\omega_{\Phi,\alpha,\rho}\big(\mathcal W(a_1) \cdots \mathcal W(a_n)\big) = \left\langle \Phi \middle| a_1 a_2 \cdots a_n \middle| \Phi \right\rangle.
\ee
by the trace condition and the limit \eqref{omegaPhiAlphaLimit} for $\o_{\Phi,\a}$.

\section{Time reversal symmetry}\label{sec:Time reversal symmetry}

Let us assume the bulk theory to be time-reversal invariant within one static patch. That is, we assume there exists an anti-unitary operator $\ct_0$ acting on $\ch_0$ such that (1) $\ct_0$ leaves the algebra $\ca_0$ invariant, and (2) both the modular Hamiltonian $\wh H$ and the cyclic separating state $\Psi_0$ are invariant under $\ct_0$:
\be
T_0\mathcal A_0T_0^{-1}=\mathcal A_0,\qquad 
\ct_0 f(\wh H)\ct_0^{-1}=f(\wh H), \qquad \ct_0|\Psi_0\rangle =|\Psi_0\rangle.
\ee
Here $f$ is any bounded real measurable functions on $\R$. 
In this section and Section \ref{sec:time-reversal-invariant-operators}, we focus on a single static patch with one observer.
We define the anti-unitary map $\ct_{\rm obs}$ for the observer on $\mathcal{H}_{\mathrm{obs}}$ by requiring
\be
\ct_{\rm obs} f(\wh\tau)\ct_{\rm obs}^{-1}=f(\wh\tau),
\qquad
\ct_{\rm obs} f(\wh\pi)\ct_{\rm obs}^{-1}=f(-\wh\pi).\label{Tobs}
\ee
The anti-unitary operator $\ct:=\ct_0\otimes\ct_{\rm obs}$ is the time-reversal map on $\mathcal{H}_{\mathrm{tot}}$. 
The Hamiltonian constraints in both cases of a single and two observers are invariant under time reversal: $\ct e^{it \wh{\bf H}_1}\ct^{-1}=e^{-it \wh{\bf H}_1}$ and $\ct e^{it \wh{\mathscr{H}}_1}\ct^{-1}=e^{-it \wh{\mathscr H}_1}$ for all $t\in\R$.

The anti-unitary $\ct$ exchanges the two clock-orientation sectors, i.e. $\ct$ restricts to anti-unitary isomorphisms satisfying
\be
\ct:\mathcal{H}_0\otimes\mathcal{H}_\pm\longrightarrow\mathcal{H}_0\otimes\mathcal{H}_\mp ,\qquad 
\ct(\mathbb{I}_0\otimes\Pi_\pm)\ct^{-1}=\mathbb{I}_0\otimes\Pi_\mp .
\ee
The second relation can be shown by $
\ct\chi_{[0,\pm\infty)}(\wh\pi)\ct^{-1}
=\chi_{[0,\pm\infty)}(-\wh\pi)
=\chi_{[0,\mp\infty]}(\wh\pi)
$.
If $\psi_+\in\mathcal{H}_0\otimes\mathcal{H}_+$, then $\Pi_-\ct\psi_+=\ct\Pi_+\psi_+=\ct\psi_+$, so $\ct\psi_+\in\mathcal{H}_0\otimes\mathcal{H}_-$. The reverse inclusion follows in the same way, and anti-unitarity of the restrictions is inherited from $\ct$.

The representation of time reversal on $\wt\ch_{\rm tot}$ is given by
\footnote{
	For any $(\phi_1,\phi_2)^\top\in\wt\ch_{\rm tot}$, using $V^{-1}(\phi_1,\phi_2)^\top=\phi_1+U\phi_2$, we obtain $V\ct V^{-1}(\phi_1,\phi_2)^\top =V(\ct\phi_1+\ct U\phi_2)=(\ct U\phi_2,\,U\ct\phi_1)^\top$, where $\ct\phi_1\in \ch_0\otimes\ch_-$ and $\ct U\phi_2\in \ch_0\otimes\ch_+$. We need to prove $\mathcal{T} U \phi_2 = U \mathcal{T} \phi_2$: We introduce a complex conjugation operator $K$ on $\mathcal{H}_{\mathrm{tot}} \cong L^2(\mathbb{R}, \mathcal{H}_0)$. Fix an arbitrary orthonormal basis $e_A$ for $\mathcal{H}_0$ and express $\psi(\pi)=\sum_A e_A\psi^A(\pi)$ for any $\psi\in\mathcal{H}_{\rm tot}$, we define $(K\psi)(\pi) = \sum_A e_A\overline{\psi^A(\pi)}$. $K=K^{-1}$ is anti-unitary and satisfies
	$K \widehat{\pi} K^{-1} = \widehat{\pi} \quad \text{and} \quad K \widehat{\tau} K^{-1} = -\widehat{\tau}$.
	Introducing the anti-unitary operator $X = \mathcal{T} U$, the operator $X K^{-1}$ is unitary and commutes with both $\widehat{\pi}$ and $\widehat{\tau}$.
	The Weyl algebra generated by $\wh\t,\wh\pi$ acts irreducibly on $\mathcal{H}_{\mathrm{obs}} = L^2(\mathbb{R})$. 
	Therefore, $X K^{-1} = U_S \otimes \mathbb{I}_{\mathrm{obs}}$ where $U_S$ is a unitary operator on $\mathcal{H}_0$. We have $X U = U X$, because $U K = K U$ and obviously $U$ commutes with $XK^{-1}$.
	Substituting $X = \mathcal{T} U$ and using $U^2 = \mathbb{I}$, we obtain $\mathcal{T} U = U \mathcal{T}$. As a result,
	$\widetilde{\mathcal{T}} (\phi_1, \phi_2)^\top = (U \mathcal{T} \phi_2, U \mathcal{T} \phi_1)^\top$.
}
\be
\wt\ct:=V\ct V^{-1}
=\theta\otimes\lt(\sigma_1K_{\mathbb{C}^2}\rt),\qquad \theta:=U\ct\Pi_+ ,  
\qquad
\sigma_1=\begin{pmatrix}0&1\\1&0\end{pmatrix}, 
\label{blockOffDiag}
\ee
where $K_{\mathbb{C}^2}$ denotes the complex conjugate (in the standard basis) of $\mathbb{C}^2$. Equivalently,
\be
\wt\ct(\phi_1,\phi_2)^\top
=(\theta\phi_2,\theta\phi_1)^\top ,\qquad \forall (\phi_1,\phi_2)^\top\in\wt\ch_{\rm tot},
\ee
since $(\phi_1,\phi_2)^\top=\phi_1\otimes(1,0)^\top+\phi_2\otimes(0,1)^\top$. Since $\theta$ commutes with $\mathscr{P}_+=\chi_+(\wh \pi)\chi_+(\wh H+\wh \pi)$, the time-reversal operator $\wt\ct=\theta\otimes\lt(\sigma_1K_{\mathbb{C}^2}\rt)\otimes K_{\mathbb{C}^2}$ is defined on the physical Hilbert space $\ch_{\rm phys}$, while it acts trivially on the $\C^2$ factor associated to the observer in the complementary patch.

On operators $\co\in\ca$, the time-reversal leads to the following linear anti-automorphism
\be
\Theta(\mathcal{O})=\ct\,\mathcal{O}^*\,\ct^{-1},
\ee
It satisfies $\Theta(\mathcal{O}_1\mathcal{O}_2)=\Theta(\mathcal{O}_2)\Theta(\mathcal{O}_1)$ and $\Theta(\co^*)=\Theta(\co)^*$. See Appendix \ref{anti-automorphism} for more details on anti-automorphisms. 
Since $\ct$ commutes with $\wh{\mathbf{H}}$, the anti-automorphism $\Theta$ preserves the algebra $\cm$ of gauge-invariant observables and induces an anti-automorphism on $\cm$.

The trace $\O$ on $\cm$ is time-reversal invariant:

\begin{theorem}
	The trace $\O$ is invariant under the anti-automorphism ${\Theta}$, i.e. 
	\be
	\O\lt(\Theta(\co)\rt)=\O\lt({\ct}\co^*{\ct}^{-1}\rt)=\O\lt(\co\rt),\qquad \forall \co\in\cm.\label{timeRevInvTrace}
	\ee 
\end{theorem}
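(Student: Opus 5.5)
The plan is to deduce the invariance from uniqueness of the trace (Theorem~\ref{thm:matrix-amplified-II1}) and then fix the scalar by evaluating on the identity. Define $\O'(\co):=\O(\Theta(\co))$ for $\co\in\cm$. Since $\Theta$ maps $\cm$ onto $\cm$ (it preserves gauge invariance because $\ct$ commutes with $\wh{\bf H}_1$ up to $t\to -t$, and $\Theta^2=\mathrm{id}$ when $\ct^2$ is a scalar phase, or more generally $\Theta$ is invertible with inverse $\ct^{-1}(\cdot)^*\ct$), $\O'$ is a well-defined linear functional on $\cm$. The key steps, in order, are as follows.

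\textbf{Step 1: $\O'$ is a trace.} I would check that $\O'$ is a faithful, normal, finite, positive trace. The trace property follows from the anti-multiplicativity of $\Theta$:
\be
\O'(\co_1\co_2)=\O(\Theta(\co_2)\Theta(\co_1))=\O(\Theta(\co_1)\Theta(\co_2))=\O'(\co_2\co_1).
\ee
For positivity, I would use $\Theta(\co^*\co)=\ct\co^*\co\ct^{-1}=(\ct\co\ct^{-1})^*(\ct\co\ct^{-1})\geq0$, using that conjugation by an anti-unitary preserves adjoints and positivity. Faithfulness follows because $\Theta$ is injective. Normality follows because $\co\mapsto\ct\co^*\ct^{-1}$ is continuous under the ultraweak topology on bounded sets, and it preserves suprema of increasing nets of positive operators.

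\textbf{Step 2: uniqueness of the trace.} By Theorem~\ref{thm:matrix-amplified-II1}, $\cm$ is a type II$_1$ factor, so $\O'=c\,\O$ for some constant $c$.

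\textbf{Step 3: fixing the constant.} Since $\Theta(\mathbb I)=\ct\ct^{-1}=\mathbb I$, we get $c=\O'(\mathbb I)/\O(\mathbb I)=1$. Here $\O(\mathbb I)=\Tr_+(\Pi_+)\,\mathrm{tr}(\mathbb I_2)=2$ is finite and nonzero, so the division is legitimate. This would complete the proof of \eqref{timeRevInvTrace}.

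\textbf{Main obstacle.} The main obstacle is Step 1's normality and positivity for an anti-unitary conjugation. The subtle point is that $\Theta$ is linear (the antilinearity of $\ct$ is compensated by the adjoint), and that it maps $\cm$ into $\cm$. I would verify linearity directly:
\be
\ct(\lambda\co)^*\ct^{-1}=\ct\bar\lambda\co^*\ct^{-1}=\lambda\,\ct\co^*\ct^{-1}.
\ee

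\textbf{Fallback computation.} As a cross-check, and as a fallback if one wants to avoid the uniqueness argument, I would compute explicitly using \eqref{blockOffDiag}. Conjugation by $\wt\ct=\theta\otimes\sigma_1K$ sends $\wh a\otimes m$ to $\theta\wh a^*\theta^{-1}\otimes\sigma_1\bar m^*\sigma_1$, that is, to $\theta\wh a^*\theta^{-1}\otimes\sigma_1 m^{\top}\sigma_1$. The matrix trace of the second factor equals $\mathrm{tr}(m)$. It then remains to show $\Tr_+(\theta\wh a^*\theta^{-1})=\Tr_+(\wh a)$. This again follows from uniqueness of the trace on the factor $\cm_+$, with normalization fixed by $\theta\Pi_+\theta^{-1}=\Pi_+$. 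It could also be checked on the generators $\Pi_+\wh a_F\Pi_+$ via \eqref{PiaFPi}, using $\ct_0\Psi_0=\Psi_0$ and $\ct_0\wh H\ct_0^{-1}=\wh H$.
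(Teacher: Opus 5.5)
Your proof is correct, but it is not the paper's route.

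The paper never invokes uniqueness of the trace. It passes to $\widetilde{\mathcal{M}}=\mathcal{M}_+\bar\otimes M_2(\mathbb{C})$ and uses the block form \eqref{blockOffDiag} to peel off the matrix factor, where $\mathrm{tr}(\sigma_1 m^\top\sigma_1)=\mathrm{tr}(m)$. It then verifies $\mathrm{Tr}_+(\theta\widehat a^*\theta^{-1})=\mathrm{Tr}_+(\widehat a)$ by explicit computation on the generators $\Pi_+\pi_0(a)\lambda_s\Pi_+$ of $\mathcal{M}_+$, see \eqref{thetaatheta}. That computation uses $\mathcal{T}_0\widehat H\mathcal{T}_0^{-1}=\widehat H$, $\mathcal{T}_0\Psi_0=\Psi_0$ and anti-unitarity, and extends by linearity and continuity. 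This is essentially the last sentence of your fallback.

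Your main argument is shorter and structural. It shows that any bijective anti-automorphism of a type II$_1$ factor preserves the normalized trace. It therefore uses neither $\mathcal{T}_0\Psi_0=\Psi_0$, nor the explicit trace formula, nor the density-and-continuity step. Its only inputs are that $\Theta$ maps $\mathcal{M}$ onto itself and Theorem~\ref{thm:matrix-amplified-II1}. Your normality check is in fact dispensable: by Dixmier's approximation theorem, every tracial state on a II$_1$ factor coincides with the unique normal one.

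The price is reliance on factoriality, i.e.\ on $\mathcal{A}_0$ being type III$_1$. The paper's computation does not use uniqueness, so it would carry over to a non-factor $\mathcal{A}_0$. There the trace is not unique, and invariance must come from the $\mathcal{T}_0$-invariance of $\Psi_0$ used explicitly in the computation. The generator-level identity \eqref{thetaatheta} is also what the paper reuses for the covariance relation \eqref{covarianceCondi1} of the dressing map.
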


\begin{proof}
The commutativity between $\ct$ and $\wh{\mathbf{H}}$ implies the commutativity between $\widetilde{\ct}$ and $\wh{\mathbf{H}}_+ \otimes \mathbb{I}_2$. Then, for any $\wh{a}\otimes m\in \wt{\cm}$, the linear map $\wt\Theta:\wh{a}\otimes m\mapsto\widetilde{\ct}(\wh{a}^*\otimes m^*)\widetilde{\ct}^{-1}$ leaves $\wt{\cm}$ invariant and is an anti-automorphism on $\wt\cm$.

If we denote the trace \eqref{traceOmega} equivalently by $\wt\O(\wh a\otimes m)$, we only need to show $\wt\O(\wt\Theta(\wh a\otimes m))=\wt\O(\wh a\otimes m)$, then \eqref{timeRevInvTrace} follows by linearity.  
Recall the block form of $\wt{\ct}$ in \eqref{blockOffDiag}. Since $\mathrm{tr}(m)= \mathrm{tr}(\sigma_1K_{\C^2} m^* K_{\C^2}^{-1}\sigma_1^{-1})$, 
the only task is to show that $\mathrm{Tr}_+(\theta\wh  a^* \theta^{-1})=\mathrm{Tr}_+(\widehat{a})$, where $\theta=U\ct \Pi_+$ and $\wh a\in\cm_+$: By the linearity and continuity of the trace, it suffices to consider the generators
$$
\wh a= e^{-i\wh\tau\wh H}\lt[\chi_{[0,\infty)}(\wh\pi-\wh H)a e^{is (\wh{H}-\wh \pi)}\chi_{[0,\infty)}(\wh\pi-\wh H)\rt]e^{i\wh\tau\wh H}\in\cm_+,\qquad a\in\ca_0.
$$  
Since $\wh{H}$ is $\ct$ invariant and $U\wh\t =-\wh\t U$, we obtain
\be
\theta\wh  a^* \theta^{-1}&=&e^{-i\wh\tau\wh H}\lt[ U\ct \chi_{[0,\infty)}(\wh\pi-\wh H)\,e^{-is \lt(\wh{H}-\wh \pi\rt)} a ^*\chi_{[0,\infty)}(\wh\pi-\wh H)\ct^{-1} U^{-1}\rt]e^{i\wh\tau\wh H}\nonumber\\
&=&e^{-i\wh\tau\wh H}\lt[\chi_{[0,\infty)}(\wh\pi-\wh H) e^{is \lt(\wh{H}-\wh \pi\rt)}\ct a^*\ct^{-1} \chi_{[0,\infty)}(\wh\pi-\wh H)\rt]e^{i\wh\tau\wh H}.\label{thetaatheta}
\ee
The time reversal $\ct=\ct_0\otimes\ct_{\rm obs}$ gives anti-automorphism on $\ca_0$, so we have that $\ct a^*\ct^{-1}=\ct_0 a^*\ct_0^{-1}\in\ca_0$ commutes with $U$ and $e^{is\wh \pi}$. We compute the trace
\be
\mathrm{Tr}_+(\theta\wh  a^* \theta^{-1})&=&\int_{0}^{\infty} \mathrm{d}\pi \, e^{-\pi} \langle \Psi_0 | \ct_0 a^* \ct_0^{-1}| \Psi_0 \rangle e^{-is\pi}
=\int_{0}^{\infty} \mathrm{d}\pi \, e^{-\pi} \langle \Psi_0 | a | \Psi_0 \rangle e^{-is\pi}\nonumber\\
&=&\int_{0}^{\infty} \mathrm{d}\pi \, e^{-\pi} \langle \Psi_0 | a \,e^{is(\widehat{H}-\pi)}| \Psi_0 \rangle =\mathrm{Tr}_+(\wh a).
\ee
We have used the anti-unitarity of $\ct_0$ and the $\ct_0$-invariance of $\Psi_0$.
\end{proof}

For the dressing map of operators, the time reversal of $\cw(a)$ relates to the time reversal of the bulk operator $a$ by
\be
\Theta\lt(\cw(a)\rt)=\cw (\vartheta_0(a)),\qquad \vartheta_0(a):=\ct_0 a^*\ct_0^{-1},\label{covarianceCondi1}
\ee
which follows by an argument similar to that used for \eqref{thetaatheta}. This relation is the time-reversal covariance of the bulk--boundary map. The dual operator $\mathcal W(a)$ is time-reversal invariant, when the bulk operator is invariant $\theta_0(a)=a$. As the generalization, the transformation of $\mathcal{W}_m(a)$ under time reversal involves the transformation of the matrix $m$
\be
\Theta\lt(\cw_m(a)\rt)=\cw_{\varsigma(m)} (\vartheta_0(a)),\qquad \varsigma(m):=\sig_1 m^\top\sig_1^{-1},\label{covarianceCondi2}
\ee
where $\varsigma$ is the time reversal anti-automorphism $\Theta$ restricted on $M_2(\C)$, recall \eqref{blockOffDiag}.

The density matrix $\rho_\alpha$ appearing in the heavy-clock state has been largely considered arbitrary. We may additionally require that the heavy-clock state $\mathcal{F}_{\alpha,\rho}$ is invariant under time-reversal symmetry. This invariance translates to the condition that $\rho_\alpha$ satisfies $\rho_\alpha = \sigma_1 \rho_\alpha^\top \sigma_1^{-1}$. Further imposing positivity, Hermiticity, and normalization $\operatorname{tr}(\rho_\alpha) = 1$ restricts $\rho_\alpha$ to depend on a single complex function of $\a$,
\be
\rho_\a=
\begin{pmatrix}
{1}/{2} & b_\a \\
\overline{b}_\a & {1}/{2}
\end{pmatrix},\qquad b_\a\in\C ,\quad |b_\a|<\frac{1}{2} .
\ee

\section{Time-reversal invariant operators}\label{sec:time-reversal-invariant-operators}

While the heavy-clock limit successfully reproduces the semiclassical correlation functions in the static patch, a recent proposal~\cite{Susskind:2026fjc} advocates an alternative approach, where one restricts attention to dressed operators that are strictly invariant under time reversal. 
This proposal relates to the interpretation of time reversal in a role similar to that of a gauge symmetry.
Although in our framework time reversal is viewed as a global symmetry that exchanges two quantum reference frames, and the dressed operators yielding the correct semiclassical limit are generally not invariant under time reversal, we will nonetheless explore the viability of the approach in which one works strictly with time-reversal invariant operators.

Let us focus on the diagonal subalgebra $\cm_{\rm diag}=\cm_+\oplus\cm_-\subset \cm$, where $\cm_\pm$ is the algebra of gauge invariant operators on $\ch_0\otimes \ch_\pm$. Given any bulk local operator $a\in\ca_0$, we define the time-reversal invariant dressing map $\eth:\ca_0\to\cm_{\rm diag}$ by
\be
\eth(a):=\Pi_+a(\wh\tau)\Pi_+
+\Pi_-a^\vartheta(-\wh\tau)\Pi_- .
\label{PiaPiaprime00}
\ee
where $a(\wh\tau)=\pi_0(a)=e^{-\rmi\wh\tau\wh H}\,a\,e^{\rmi\wh\tau\wh H}$ and $a^\vartheta\equiv \vartheta_0(a)=\ct_0 a^*\ct_0^{-1}$. The first and second terms in the definition of $\eth(a)$ belongs to $\cm_\pm$ respectively. Since $\ct\Pi_\pm\ct^{-1}=\Pi_\mp$ and 
\be
\Theta\lt(a(\wh\tau)\rt)
=\ct a(\wh\tau)^*\ct^{-1}
=e^{\rmi\wh\tau\wh H}a^\vartheta e^{-\rmi\wh\tau\wh H}
=a^\vartheta(-\wh\tau),
\ee
the second term on the right-hand side in \eqref{PiaPiaprime00} is the time reversal of the first term, therefore $\Theta(\eth(a))$ is even under time reversal: $\Theta(\eth(a))=\eth(a)$. Only when $a^\vartheta=a$, the operator $\eth(a)$ reduces to the previous dressed operator $\mathcal W(a)$. The representation of the dressed operator on $\wt\ch_{\rm tot}=(\ch_0\otimes\ch_+)\otimes\C^2$ is given by:
\be
\wt\eth(a)=V\eth(a)V^{-1}=\begin{pmatrix}
	\Pi_+a(\wh\tau)\Pi_+ & 0 \\
	0 & \Pi_+a^\vartheta(\wh\tau)\Pi_+
	\end{pmatrix}.
\ee

Instead of representing the diagonal algebra $\cm_{\rm diag}$ on $\ch_{\rm tot}=(\ch_0\otimes\ch_+)\oplus(\ch_0\otimes\ch_-)$, let us consider the representation on 
\be
\Fh_{\rm tot}:=(\ch_0\otimes\ch_+)\oplus(\ch_0\otimes\ch_-)^*
\ee
by dualizing the minus sector. The representation is given by $\varsigma=\varsigma_+\oplus\varsigma_-:
\cm_+\oplus\cm_-^{\rm op}\longrightarrow\mathcal{B}(\Fh_{\rm tot})$,
defined by
\be
\varsigma_+(X_+)|\psi_+\rangle=X_+|\psi_+\rangle,
\qquad
\varsigma_-(X_-)\langle\psi_-|=\langle\psi_-|X_- ,\qquad \forall X_\pm\in\cm_\pm,
\ee
with $|\psi_+\rangle\in\ch_0\otimes\ch_+$ and $\langle\psi_-|\in(\ch_0\otimes\ch_-)^*$. The minus branch represents the opposite algebra because
\be
\varsigma_-(X_-)\varsigma_-(Y_-)\langle\psi_-|
=\langle\psi_-|Y_-X_-
=\varsigma_-(Y_-X_-)\langle\psi_-| .
\ee
For $X=X_+\oplus X_-$ and $Y=Y_+\oplus Y_-$ belonging to $\cm_{\rm diag}$, the operator product in the representation $\varsigma$ is given by
\be
\varsigma(X)\varsigma(Y)
=X_+Y_+\oplus Y_-X_-
=\frac{1}{2}\{X,Y\}+\frac{1}{2}[X,Y]\Pi_{\rm odd},
\qquad
\Pi_{\rm odd}:=\Pi_+-\Pi_- ,
\label{varsigmaProd}
\ee
where the operator in the minus sector acts on $\langle\psi_-|\in(\ch_0\otimes\ch_-)^*$. The right-hand side of \eqref{varsigmaProd} coincides with Susskind's proposal \cite{Susskind:2026fjc} for the product of time-reversal invariant operator. Indeed, the operator product preserves the time-reversal invariance for diagonal operators: if $\Theta(X)=X$ and $\Theta(Y)=Y$, then $\varsigma(X)\varsigma(Y)$ is time-reversal invariant\footnote{
	We denote by $\iota_-$ the anti-unitary Riesz map $\iota_-:\ch_-\to\ch_-^*$, $|\psi_-\rangle\mapsto\langle\psi_-| $, and 
	we extend $\iota_-$ to $\ch_{\rm tot}$ by the identity on $\ch_+$, and we take $\Fp_+=\Pi_+$ and $\Fp_-:=\iota_-\Pi_-\iota_-^{-1}$ as the projections on $\Fh_{\rm tot}$. Time reversal then induces the unitary maps $\Ft_-:=\ct\iota_-^{-1}:\ch_-^*\to\ch_+$ and $\Ft_+:=\iota_-\ct:\ch_+\to\ch_-^* $, which satisfy $\Ft_-\Fp_-\Ft_-^{-1}=\Fp_+$ and $\Ft_+\Fp_+\Ft_+^{-1}=\Fp_-$. The representation $\varsigma$ is equivariant under time reversal $\Ft:=\Ft_+\oplus\Ft_-$: $\Ft\varsigma(X)\Ft^{-1}=\varsigma\lt(\Theta(X)\rt)$, for any $X=X_+\oplus X_-\in\cm_{\rm diag}$, by the relation $\varsigma_-(X_-)=\iota_- X_-^* \iota_-^{-1}$. If $\Theta(X)=X$ and $\Theta(Y)=Y$, then $\varsigma(X)\varsigma(Y)$ is time-reversal invariant: $\Ft\varsigma(X)\varsigma(Y)\Ft^{-1}=\varsigma(X)\varsigma(Y)$.
}.
In contrast, the operator product $XY$ on $\ch_{\rm tot}$ is not time-reversal invariant, since $\Theta$ is anti-automorphic.

Susskind in his recent works \cite{Susskind:2026fjc,IntroSusskind2603} considers the same type of observer with $H_{\rm obs}=|\wh\pi|$, which contains both forward- and backward-running clock sectors. In these papers, the operator $\Pi_{\rm odd} = \Pi_+ - \Pi_-$ between the two projectors is introduced\footnote{
	In Susskind's notation, the projectors $\Pi_+$ and $\Pi_-$ correspond to $\Pi_{\rm f}$ and $\Pi_{\rm b}$, projecting onto the forward- and backward-time clock states, respectively. The combination $\Pi_{\rm odd}=\Pi_+-\Pi_-$ matches Susskind's $\Pi_- := \Pi_{\rm f} - \Pi_{\rm b}$.
}, in order to convert the commutator of observables into a time-reversal even quantity: Let $a$ and $b$ be self-adjoint, time-reversal invariant bulk operators. Their commutator $C:=[a,b]$ is time-reversal odd $\Theta(C) = -C$. 
The operator $C$ commutates with $\Pi_{\rm odd}$, since they act on separate factors. 
Since both $C$ and $\Pi_{\rm odd}$ are time-reversal odd their product $C\Pi_{\rm odd}$ is time-reversal even: $\Theta(C \Pi_{\rm odd}) = C\Pi_{\rm odd}$.
As a result, in a time-reversal-invariant state, the separate expectations of $C$ and $\Pi_{\rm odd}$ vanish, but the expectation value of the product $C \Pi_{\rm odd}$ is not necessarily vanishing. 
The product in Eq.~\eqref{varsigmaProd} achieves the same effect: for time-reversal even operators $X$ and $Y$, the anticommutator remains time-reversal even, while the commutator is paired with $\Pi_{\rm odd}$, ensuring the overall expression is time-reversal even.

Denote the restriction of time reversal from the forward to the backward branch by
\be
\vartheta:=\Theta|_{\cm_+}:\cm_+\longrightarrow\cm_- .
\ee
For every $X\in\cm_+$, we define the map $j$ from $\cm_+$ to $\cm_+\oplus\cm_-^{\rm op}$
\be
j(X):=X\oplus\vartheta(X),
\qquad
\mathcal F:=j(\cm_+)\subset\mathcal \cm_+\oplus\cm_-^{\rm op} .
\ee
The map $j:\cm_+\to\mathcal F$ is an isomorphism\footnote{ 
	If we denote by $\star$ the product in $\cm_+\oplus\cm_-^{\rm op}$: 
	$j(X)\star j(Y)=XY\oplus\vartheta(Y)\vartheta(X)=j(XY)$, and $j(X^*)=j(X)^* $. The inverse of $j$ is the projection onto the first component. In addition, $j=\mathrm{id}\oplus \vartheta$ is continuous under strong/weak operator topology.
}. Therefore, $\mathcal F$ is a type II$_1$ factor. 
By the above argument, $\cf$ is an algebra of time-reversal invariant operators acting on $\Fh_{\rm tot}$.
It is clear that the dressing map $\eth$ relates to $j$ by 
\be
\eth(a)=j\lt(\Pi_+a(\wh\tau)\Pi_+\rt)\in \cf.
\ee

For time-reversal invariant operators, recovering the semiclassical bulk correlation function still requires application of the heavy-clock state. This is because the product of two time-reversal invariant dressed operators still involve insertions of $\Pi_\pm$, just as what has been discussed at the beginning of Section \ref{sec:two-point-crossed-product}. 
Given bulk operators $a_1,\cdots,a_n$, the product of their time-reversal invariant dressing gives
\be
V\varsigma\lt(\eth(a_1)\rt)\cdots \varsigma\lt(\eth(a_n)\rt)V^{-1}=\begin{pmatrix}
	\Pi_+a_1(\wh\tau)\Pi_+\cdots\Pi_+a_n(\wh\tau)\Pi_+ & 0 \\
	0 & \Pi_+a_n^\vartheta(\wh\tau)\Pi_+\cdots \Pi_+a_1^\vartheta(\wh\tau)\Pi_+
	\end{pmatrix}.
\ee
Then for the heavy-clock state $\o_{\a,\rho}$ with the time-reversal invariant state $\Psi_0$ and an arbitrary normalized density matrix $\rho_\a\in M_2(\C)$, the correlation function of the dressed operators is given by \footnote{
	On the backward branch, operators act from the left on the dual Hilbert space according to \(\varsigma_-(X)\langle\psi|=\langle\psi|X\), so their composition satisfies \(\varsigma_-(X)\varsigma_-(Y)=\varsigma_-(YX)\). The corresponding state is defined by \(\omega_-^{\mathrm{dual}}(\varsigma_-(X))=\omega_-(X)\), which yields \(\omega_-^{\mathrm{dual}} (\varsigma_-(X_1)\cdots\varsigma_-(X_n))=\omega_-(X_n\cdots X_1)\).
}
\be
\o_{\a,\rho}\lt(\varsigma\lt(\eth(a_1)\rt)\cdots \varsigma\lt(\eth(a_n)\rt)\rt)=\l_{1,\a}\o_{\a}\lt(\Pi_+a_1(\wh\tau)\Pi_+\cdots\Pi_+a_n(\wh\tau)\Pi_+ \rt)+\l_{2,\a}\o_{\a}\lt(\Pi_+a_n^\vartheta(\wh\tau)\Pi_+\cdots \Pi_+a_1^\vartheta(\wh\tau)\Pi_+\rt)
\ee
where $\l_{1,\a},\l_{2,\a}$, satisfying $\l_{1,\a}+\l_{2,\a}=1$, are the diagonals of  $\rho_\a$. By the heavy-clock limit, the correlation function reproduce the semiclassical bulk correlation function:
\be
\lim_{\a\to0}\o_{\a,\rho}\lt(\varsigma\lt(\eth(a_1)\rt)\cdots \varsigma\lt(\eth(a_n)\rt)\rt)&=&\l_{1,0}\left\langle\Psi_0\middle|a_1\cdots a_n\middle|\Psi_0\right\rangle+\l_{2,0}\left\langle\Psi_0\middle|a_n^\vartheta\cdots a_1^\vartheta\middle|\Psi_0\right\rangle\nonumber\\
&=&\left\langle\Psi_0\middle|a_1\cdots a_n\middle|\Psi_0\right\rangle, 
\ee
by the property of anti-automorphism $a_n^\vartheta\cdots a_1^\vartheta=(a_1\cdots a_n)^\vartheta$ and the time-reversal invariance of $\Psi_0$.

\section{$\wh{H}_{\rm obs}=|\wh\pi|$ and dust clock fields}\label{sec:dust clock fields}

We prefer the choice $\wh{H}_{\rm obs}=|\wh{\pi}|$ because it more closely mirrors the physical scenario of gravity and matter interacting with clock fields. Let us consider the clock fields to be the Brown-Kucha\v r dust, following \cite{BrownKuchar1995,Giesel2010MGIPT,Giesel:2007wn}.
The total action of the theory is
\be
S_{\rm tot}=S_{\rm EH}+S_{\rm matter}+S_{\rm dust},
\ee
which sums the Einstein-Hilbert action $S_{\rm EH}$ for gravity (with a positive cosmological constant), the action $S_{\rm matter}$ for conventional matter fields (such as scalar, fermion, and Yang-Mills), and the action $S_{\rm dust}$ for dust fields that play the role of clocks. The dust sector contains scalar fields $\rho$, $T$, $S^j$, and $W_j$ ($j = 1,2,3$): The fields $T$ and $S^j$ define the dust reference frame, where they represent the time and space coordinates respectively. $\rho$ and $W_j$ are auxiliary fields. The explicit form of the dust action $S_{\rm dust}$ is
\be
S_{\rm dust}=-\frac{1}{2}\int \rmd^4x\sqrt{|\det(g)|}\,\rho\lt[g^{\mu\nu} U_\mu U_\nu+1\rt],\qquad U_\mu=-\partial_\mu T+W_j\partial_\mu S^j.
\ee
Varying $S_{\rm dust}$ with respect to $\rho$ yields the condition $g^{\mu\nu} U_\mu U_\nu = -1$, indicating that $U^\mu$ is a unit timelike vector. Variation with respect to the metric $g_{\mu\nu}$ leads to the energy-momentum tensor of a pressureless perfect fluid, where $\rho$ is the energy density and $U^\mu$ is the fluid velocity. We require the energy density to be positive $\rho>0$. Variation with respect to $W_j$ implies that $S^j$ remains constant along the integral lines of $U^\mu$, with $T$ being the proper time along each curve. Finally, varying with respect to $T$ and $S^j$ enforces that the integral curves of $U^\mu$ are geodesics.
The fields $T$ and $S^j$ determine a dynamical foliation of the spacetime manifold $M$, with each leaf of the foliation specified by a constant value of $T$, while $S^j$ defines the spatial coordinates on each leaf.

Upon completing Dirac's algorithm (see, e.g., \cite{Giesel2010MGIPT} for details), the Hamiltonian for the theory is expressed as a linear combination of first-class and second-class constraints. We denote the conjugate momenta of $T$ and $S^j$ by $P$ and $P_j$, respectively, and the 3-metric by $q_{ab}$.
After eliminating all second-class constraints\footnote{
	The non-trivial second class constraints are $\rho^2=\frac{P^2}{{\det(q)}}\left(1+q^{ab}U_aU_b\right)^{-1}$ and $W_j=-P_j/P$.
}, the theory is left with the first-class Hamiltonian and diffeomorphism constraints:
\begin{eqnarray}
C^{\rm (tot)} &= C + \sqrt{P^2+q^{ab}C^{\rm dust}_a C^{\rm dust}_b},\qquad
C_a^{\rm (tot)} &= C_a + C^{\rm dust}_a,\qquad C^{\rm dust}_a=P\,T_{,a}+P_j\,S^j_{,a},\label{dustconstraint}
\end{eqnarray}
Here $C,C_a$ are standard contributions from gravity and matter fields.

Let us study perturbative gravity on the empty de Sitter background. We expand the metric as $g_{\mu\nu} = \mathring{g}_{\mu\nu} + \sqrt{G}\,\delta g_{\mu\nu}$, where $\mathring{g}_{\mu\nu}$ denotes the de Sitter metric and $\delta g_{\mu\nu}$ is the perturbation. We expand the following constraint in a power series in $G$ and consider the limit $G\to 0$:
\be
\mathbf{H}(\xi)=\int_\Sig\rmd^3 \theta\lt(\xi_\perp C^{\rm (tot)}+\xi^aC_a^{\rm (tot)}\rt),\qquad \forall \xi^\mu=\xi_\perp n^\mu+\xi^a q^\mu_a
\ee
where $\Sig$ is the spatial slice (with the unit normal $n^\mu$) of the de Sitter global coordinates $(\ft,\vec{\theta})$. 
Non-perturbatively, the gravity contribution to $\mathbf{H}(\xi)$ has the prefactor $G^{-1}$, so the leading order in small $G$ is linear in $\delta g_{\mu\nu}$:
\be
\mathbf{H}(\xi)=\frac{1}{\sqrt{G}}\int_\Sig\rmd^3 \theta\lt(\delta p_{ab}\cl_\xi\mathring{q}_{ab}-\delta q_{ab}\cl_\xi\mathring{p}_{ab}\rt)+O(G^0)
\ee
where $p^{ab}$ is the momentum conjugate to $q_{ab}$. 
For arbitrary $\xi^\mu$, quantizing this leading-order term leads to a gauge constraint imposed on the physical state~\cite{Higuchi:1991tk}.

Let us focus on the case where $\xi^\mu$ is a Killing vector of de Sitter spacetime. Since $\cl_\xi\mathring{q}_{ab}=\cl_\xi\mathring{p}_{ab}=0$, the $O(G^{-1/2})$ term in the constraint $\mathbf{H}(\xi)$ vanishes identically, so the leading-order constraint is of $O(G^0)$ and has a gravitational contribution quadratic in the metric perturbation.
Let us denote the $O(G^0)$ contribution by $\mathbf{H}^{(0)}(\xi)$. In the limit $G \to 0$, we have $\mathbf{H}(\xi) \to \mathbf{H}^{(0)}(\xi)$, 
where gravity and matter decouple.
The gravity sector of $\mathbf{H}^{(0)}(\xi)$ has already been analyzed in \cite{Higuchi:1991tk}, and the matter sectors including the dust are their Hamiltonians on the fixed de Sitter background.  

Let us focus on the dust sector and transform the total canonical action of dust into the co-moving frame of the dust \cite{Giesel2010MGIPT}. Let $t$ be any time coordinate whose $(\partial/\partial t)^\mu=N n^\mu+q^\mu_a N^a$ is not necessarily orthogonal to $\Sig$. 
The dust fields are functions $T(t,\theta^a)$ and $S^j(t,\theta^a)$. We define the fields $\tilde{S}^{a}(t,\sigma)$ and $\tilde{T}(t,\sigma)$ by 
\be
S^{j}\left(t,\tilde{S}^{a}(t,\sigma)\right)=\sigma^j,\qquad T\left(t,\tilde{S}^{a}(t,\sigma)\right)=\tilde{T}(t,\sigma).
\ee
Here, $\tilde{S}^{a}(t,\sigma)$ is the global coordinate $\theta^a$ expressed in terms of $t$ and the (dynamical) dust coordinate $\sig^j$.
The hypersurface of constant $T$ is often called the dust space $\cs$, where $\sigma^{j=1,2,3}$ provides a parametrization. In terms of $\tilde{S}^a,\tilde{T}$, the action of dust within the total canonical action $\int\rmd t\rmd^3\theta \,(p_I\dot q^I-NC^{\rm tot}-N^a C_a^{\rm tot})$ is expressed as
\be
\int \rmd t\int_\cs \rmd^3 \sig\lt(\tilde P\partial_{t}\tilde T+\tilde P_a \partial_{t} \tilde{S}^a+N^a\tilde P_a-N \sqrt{\tilde{P}^2+{q}^{ab}\tilde{P}_{a}\tilde{P}_{b}}\rt).
\ee
where all the tilded quantities are fields transformed onto the dust space, in particular,
\be
\tilde{P}=\frac{P}{\left|\det\left(\partial S/\partial \tilde S\right)\right|},\qquad 
\tilde{P}_{a}=-\frac{P\partial_{a}T+P_{j}\partial_{a}S^{j}}{\left|\det\left(\partial S/\partial \tilde S\right)\right|}
\ee
This action contains no spatial derivatives with respect to $\sigma$. Variation with respect to $T$ and $\tilde S^a$ shows that both $\tilde P$ and $\tilde P_a$ are conserved quantities. If we choose the initial condition $\tilde P_a=0$, the dust sector of the Hamiltonian reduces to $\int_\cs\rmd^3\sig\, N|\tilde P|=\int_\Sig\rmd^3 \theta\, N | P|$. Identifying $N$ and $\xi_\perp$, we obtain
\be
\mathbf{H}^{(0)}(\xi)=H+\int_\Sig\rmd^3 \theta\, \xi_\perp | P|.
\ee
The Hamiltonian $H$ generates the Killing flow acting on the fields: $\{\phi,H\}=\cl_\xi \phi$, where $\phi$ contains the on-shell graviton and conventional matter fields.
Consider the timelike Killing vector $\xi$ generating the modular flow in the static patch. For this choice, the quantization of $H$ gives the modular Hamiltonian, and the function $\xi_\perp$ is given by $\xi_\perp = \cos \theta_1$, where $\theta_1 \in [0, \pi]$.

If all fields are supported within an infinitesimal neighborhood $\cu\subset\cs$, the canonical action of dust approximates a worldline action
\be
\int \rmd t\lt(\pi \dot\t + p_a\dot \theta^a+{N}^a p_a-N\sqrt{\pi ^2+{{q}^{ab}{p}_{a}{p}_{b}}}\rt)
\ee
by identifying $\t=\tilde T$, $\pi=\int_{\cu}\rmd^3\sigma\,\tilde P$, $\theta^a=\tilde{S}^a$, and $p_a=\int_{\cu}\rmd^3\sigma\,\tilde P_a$. 
This canonical action is equivalent to that of a relativistic particle with mass $\pm\pi$ and proper time parameter $\pm\t$ \cite{Giesel2010MGIPT}, which can be interpreted as describing the dynamics of an observer.
Again by the conservation of $p_a$ and the initial condition $p_a=0$, we obtain
\be
\mathbf{H}^{(0)}(\xi)=H+\cos \theta_1 | \pi|,
\ee
for $\xi$ generating the modular flow in the static patch. If we choose $t=\ft$ to be the time coordinate of the global patch, we have $N=1$ and $N^a=0$, and the variation with respect to $p_a$ gives $\dot\theta^a\propto q^{ab}p_a=0$, in other words, the worldline is a timelike geodesic located at constant $\theta^a$. 
The only such worldlines lying completely within a single static patch are those at $\theta_1 = 0$ or $\theta_1 = 180^\circ$, corresponding respectively to $r=0$ in the left and right static patches. Thus, we have
\be
\mathbf{H}^{(0)}(\xi)=\begin{cases}
	H+| \pi|, & \text{when the worldline is at $r=0$ in the left static patch},\\
    H-| \pi|, & \text{when the worldline is at $r=0$ in the right static patch}.
\end{cases}
\ee
The quantization of the first case reproduces exactly the previous constraint $\wh{\bf H}_1$ with $\wh H_{\rm obs}=|\wh\pi|$. If we include both worldlines and denote the canonical variable as primed for the worldline in the right patch, we obtain 
\be
\mathbf{H}^{(0)}(\xi)=H+| \pi|-| \pi'|,
\ee
whose quantization reproduces the constraint $\wh{\mathscr{H}}_1$ in \eqref{mathscrH1}.

Next, let us analyze the action of time reversal on the fields. 
Consider the foliation with $\Sig$ with coordinates $\theta^a$ and arbitrarily time coordinate $t$. In the coordinates, the spacetime metric takes the form $\rmd s^2=-N^2dt^2+q_{ab}(\rmd\theta^a+N^a\rmd t)(\rmd\theta^b+N^b\rmd t)$. The time reversal diffeomorphism is 
\begin{equation}
(t,\theta^a)\mapsto (t',\theta'^a)=(-t,\theta^a).
\end{equation}
The new time function $t'$ and the associated vector field $\partial/\partial{t'}$ defines the opposite time-orientation. In the decomposition $(\partial/\partial{t'})^\mu = N'{n'}^\mu + q'{}^\mu_\a {N'}^a$, we set the lapse $N'$ to be positive as for $N$, then ${n'}^\mu=-n^\mu$ be the unit normal of $\Sig$ aligned with the time-orientation defined by ${t'}^\mu$. Rewriting the same line element in the primed coordinates indicates the following transformation of fields
\begin{equation}
N'(t',\theta')=N(t,\theta),\qquad N'^a(t',\theta')=-N^a(t,\theta),\qquad q'_{ab}(t',\theta')=q_{ab}(t,\theta).
\end{equation}
By these transformations, the extrinsic curvature transforms as $K'_{ab}=-K_{ab}$ and therefore $p_{ab}\to -p_{ab}$. For the dust, $\rho$, $W_j$, $T$ and $S^j$ are spacetime scalars invariant under time reversal, whereas the conjugate momenta flip signs:
\be
P'&=&-\frac{\rho'\sqrt{\det(q')}}{N'}\left(U'_{t'}-N'^aU'_a\right)
=\frac{\rho\sqrt{\det(q)}}{N}\left(U_t-N^aU_a\right)
=-P,\\
P'_j&=&-W'_jP'=W_jP=-P_j,
\ee
where $U_\mu=-\partial_\mu T+W_j\partial_\mu S^j$. Consequently, for the worldline observer, time reversal leaves $\t$ invariant and maps
\be
\pi\mapsto -\pi. 
\ee
The quantization gives the time-reversal acting on the operators \eqref{Tobs}.

\begin{acknowledgements}

The authors are indebted to Zhenbin Yang for numerous illuminating discussions. In particular, this work receives valuable contributions from him on the idea of heavy-clock state, time-reversal invariant operators, and the \(G\to0\) limit of gravitational constraints. This work was made possible through the support of the WOST, WithOut SpaceTime project (https://withoutspacetime.org), supported by Grant ID\# 63683 from the John Templeton Foundation (JTF). The opinions expressed in this work are those of the author(s) and do not necessarily reflect the views of the John Templeton Foundation. In addition, MH receives support from the Center for SpaceTime and the Quantum (CSTQ) and from the National Science Foundation through grant PHY-2512890.
\end{acknowledgements}

\appendix

\section{Takesaki duality}\label{Takesaki duality}

Let the total Hilbert space be $\mathcal{H}_{\text{tot}} = \mathcal{H}_0 \otimes L^2(\mathbb{R}, \mathrm{d}\pi)$, and let the total algebra be $\mathcal{A} = \mathcal{A}_0 \bar{\otimes} \mathcal{B}(L^2(\mathbb{R}, \mathrm{d}\pi))$. The constraint operator is $\widehat{\mathbf{H}} = \widehat{H} \otimes \mathbb{I}_{\text{obs}} + \mathbb{I}_S \otimes \widehat{\pi}$.
We introduce the Fourier-transform unitary $\mathcal{F}: L^2(\mathbb{R}, \mathrm{d}\pi) \to L^2(\mathbb{R}, \mathrm{d}\tau)$ defined by
$$
(\mathcal{F}\psi)(\tau) = \frac{1}{\sqrt{2\pi}} \int_{\mathbb{R}} {e^{i\pi\tau}} \psi(\pi) \, \mathrm{d}\pi .
$$
Let $U_{\mathcal{F}} = \mathbb{I}_S \otimes \mathcal{F}$. This identifies $\mathcal{H}_{\text{tot}}$ with $L^2(\mathbb{R}_\tau, \mathcal{H}_0) \cong \mathcal{H}_0 \otimes L^2(\mathbb{R}_\tau)$.
Under this transformation, the multiplication operator $\widehat{\pi}$ is mapped to the generator of translations,
$$
(\l_t f)(\tau)= (e^{-it\widehat{\pi}} f)(\tau)= f(\tau - t),
$$
and the unitary group generated by the constraint $\widehat{\mathbf{H}}$ transforms as
$$
\tilde{W}(t) := U_{\mathcal{F}} e^{it\widehat{\mathbf{H}}} U_{\mathcal{F}}^* = e^{it\widehat{H}} \otimes \l_{-t}.
$$

The gauge-invariant algebra is defined as the commutant of the flow $\tilde{W}(t)$ in the algebra $\tilde{\mathcal{A}} = \mathcal{A}_0 \bar{\otimes} \mathcal{B}(L^2(\mathbb{R}_\tau))$:
$$
\tilde{\mathcal{A}}^{\widehat{\mathbf{H}}} = \{ \co \in \tilde{\mathcal{A}} \mid \tilde{W}(t) \co \tilde{W}(t)^* = \co, \quad \forall t \in \mathbb{R} \}.
$$

The total von Neumann algebra is $\mathcal{A} = \mathcal{A}_0 \bar{\otimes} \mathcal{B}(L^2(\mathbb{R}))$. We seek the gauge-invariant observable algebra
$$
\mathcal{A}^{\widehat{\mathbf{H}}} = \mathcal{A} \cap \{e^{it\widehat{\mathbf{H}}}\}'.
$$
By the bicommutant theorem, taking the commutant gives\footnote{Given von Neumann algebras $\cm_1,\cm_2$, the join $\mathcal{M} \vee \mathcal{N}=(\mathcal{M} \cup\mathcal{N})''$ is the smallest von Neumann algebra that contains both $\mathcal{M}$ and $\mathcal{N}$. The commutant of an intersection is the join of the commutants, $(\cm_1\cap\cm_2)'=\cm_1'\vee\cm_2'$.}
\be
\mathcal{A}^{\widehat{\mathbf{H}}\prime} = \mathcal{A}' \vee \{e^{it\widehat{\mathbf{H}}}\}'' = (\mathcal{A}_0' \otimes \mathbb{I}) \vee \{e^{it\widehat{H}} \otimes \lambda_{-t}\}''.\label{joinAlge}
\ee

There are two equivalent descriptions of the crossed product $\mathcal{N} = \mathcal{A}_0 \rtimes_\sigma \mathbb{R}$ acting on $L^2(\mathbb{R}_\t, \mathcal{H}_0)$:

\begin{enumerate}
\item $\mathcal{N}$ is generated by $ \lambda_t =e^{-it\wh{\pi}}$ and the operators $\pi_0(a)=e^{-i\wh\t \widehat{H}}a e^{i\wh\t\widehat{H}}$ for $a \in \mathcal{A}_0$, given explicitly by
\be
(\lambda_t \xi)(\t)=\xi(\t-t),\qquad (\pi_0(a)\xi)(\t) = \sigma_{-\t}(a)\xi(\t) = e^{-i\t \widehat{H}}a e^{i\t\widehat{H}}\xi(\t),\qquad \forall \xi\in L^2(\R_\t,\mathcal{H}_0).\label{crossprodGenerator}
\ee

\item Equivalently, $\mathcal{N}$ is generated by elements $a \in \mathcal{A}_0$ and operators of the form $e^{is\widehat{H}}\otimes \lambda_s$. This description is unitarily equivalent to the one above in \eqref{crossprodGenerator}. Indeed, defining the unitary map $U = e^{i\widehat{\tau} \widehat{H}}$, we have $U^* a U = \pi_0(a)$ and
\be
	U^* (e^{is\widehat{H}}\otimes \lambda_s) U = e^{-i\widehat{\tau} \widehat{H}} e^{is\widehat{H}} e^{i(\widehat{\tau} - s)\widehat{H}} \lambda_s = \lambda_s.
\ee
Both definitions therefore lead to isomorphic crossed-product algebras under this unitary transformation. We denote by $\mathcal{M}_1 = U\mathcal{N}U^*$ the von Neumann algebra generated by $\mathcal{A}_0 \otimes \mathbb{I}$ and $e^{it\widehat{H}} \otimes \lambda_t$.

\end{enumerate}
Our goal is to prove $\mathcal{A}^{\widehat{\mathbf{H}}\prime} = \mathcal{N}'$, from which $\mathcal{A}^{\widehat{\mathbf{H}}} = \mathcal{N}$ follows directly.

Next, we conjugate the generators of $\mathcal{A}^{\widehat{\mathbf{H}}\prime}$ in \eqref{joinAlge}. For $x' \in \mathcal{A}_0'$,
\be
(U(x' \otimes \mathbb{I})U^*\xi)(\tau) = e^{i\tau\widehat{H}}x'e^{-i\tau\widehat{H}}\xi(\tau) = \sigma'_\tau(x')\xi(\tau),\label{multiplicationOperator}
\ee
where $\sigma'_\t(x') = e^{i\t \widehat{H}}x'e^{-i\t \widehat{H}}$ is the modular automorphism leaving $\mathcal{A}_0'$ invariant. We denote this multiplication operator in \eqref{multiplicationOperator} by $\pi'_{0,\text{right}}(x')=e^{i\wh \t \widehat{H}}x'e^{-i\wh\t \widehat{H}}$.

For the translation-involving generator in \eqref{joinAlge},
$$
\lt(U(e^{it\widehat{H}} \otimes \lambda_{-t})U^*\xi\rt)(\t) = e^{i\t\widehat{H}} e^{it\widehat{H}} e^{-i(\t+t)\widehat{H}} \xi(\t+t) = \xi(\t+t),
$$
so $U(e^{it\widehat{H}} \otimes \lambda_{-t})U^* = \mathbb{I} \otimes \lambda_{-t}$. Thus $\mathcal{M}_2 = U\mathcal{A}^{\widehat{\mathbf{H}}\prime}U^*$ is generated by $\pi'_{0,\text{right}}(\mathcal{A}_0')$ and $\mathbb{I} \otimes \lambda_{-t}$.

We now compute the commutant $\mathcal{M}_1'$ directly and show that it equals $\mathcal{M}_2$.
Since $\mathcal{A}_0 \otimes \mathbb{I} \subseteq \mathcal{M}_1$, any operator $Y \in \mathcal{M}_1'$ must lie in $(\mathcal{A}_0 \otimes \mathbb{I})' = \mathcal{A}_0' \otimes \mathcal{B}(L^2(\mathbb{R}))$. This algebra is generated by $\mathcal{C} = L^\infty(\mathbb{R}, \mathcal{A}_0')$ and the group $\{\mathbb{I} \otimes \lambda_r\}_{r \in \mathbb{R}}$; these two types of generators are the multiplication operators and the translation operators on $L^2(\R)$.

\begin{lemma}

(1) $\mathcal{B}(L^2(\mathbb{R})) \cong L^\infty(\mathbb{R}) \rtimes \mathbb{R}$. This crossed product is generated by $ L^\infty(\mathbb{R})$ and the group $\{\mathbb{I} \otimes \lambda_r\}_{r \in \mathbb{R}}$.

(2) $\mathcal{A}_0' \bar{\otimes} \mathcal{B}(L^2(\mathbb{R}))\cong\cc \rtimes \mathbb{R}$. This crossed product is generated by $\mathcal{C} = L^\infty(\mathbb{R}, \mathcal{A}_0')$ and the group $\{\mathbb{I} \otimes \lambda_r\}_{r \in \mathbb{R}}$.

\end{lemma}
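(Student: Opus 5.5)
The plan is to prove both parts by computing commutants and applying the bicommutant theorem. Whenever a crossed product $\cdot\rtimes\mathbb R$ appears, I will identify it with the von Neumann algebra generated \emph{spatially} on $L^2(\mathbb R)$ (respectively $\mathcal H_0\otimes L^2(\mathbb R)$) by the multiplication operators and the translations $\lambda_r$. This is legitimate because the action $\alpha_r(\phi)=\lambda_r\phi\lambda_r^{*}$ (translation of functions) is spatially implemented by $\lambda_r$. The abstract crossed product, realized on $L^2(\mathbb R,L^2(\mathbb R))$, is then unitarily equivalent to the spatial algebra tensored with the identity on an auxiliary $L^2(\mathbb R)$. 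The unitary is the change of variables $(W\xi)(s,\tau)=\xi(s,\tau+s)$, the same trick as the conjugation by $e^{i\widehat\tau\widehat H}$ used above.

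For part (1), let $\mathcal R=\big(L^\infty(\mathbb R)\cup\{\lambda_r\}_{r\in\mathbb R}\big)''$ and take $Y\in\mathcal R'$.

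1. $Y$ commutes with $L^\infty(\mathbb R)$. Since $L^\infty(\mathbb R)$ is maximal abelian in $\mathcal B(L^2(\mathbb R))$, $Y$ is a multiplication operator $M_\phi$ with $\phi\in L^\infty(\mathbb R)$.
2. $Y$ also commutes with every $\lambda_r$. Because $\lambda_r M_\phi\lambda_r^{*}=M_{\phi(\cdot-r)}$, we get $\phi(\tau-r)=\phi(\tau)$ for almost every $\tau$, for every $r$.
3. A translation-invariant $L^\infty$ function is a.e. constant. One way to see this: its distributional derivative vanishes. Another: ergodicity of Lebesgue measure under translations, via Fubini.
4. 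Hence $\mathcal R'=\mathbb C\,\mathbb I$ and $\mathcal R=\mathcal B(L^2(\mathbb R))$.

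Equivalently, this is the irreducibility of the Weyl pair $(\widehat\tau,\widehat\pi)$ (Stone--von Neumann), which the paper already invokes in the footnote on $\mathcal T U=U\mathcal T$.

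For part (2), note that $\mathcal C=L^\infty(\mathbb R,\mathcal A_0')\cong\mathcal A_0'\bar\otimes L^\infty(\mathbb R)$. Therefore the algebra generated by $\mathcal C$ and $\{\mathbb I\otimes\lambda_r\}$ equals
$$
\big((\mathcal A_0'\otimes\mathbb I)\cup(\mathbb I\otimes L^\infty(\mathbb R))\cup\{\mathbb I\otimes\lambda_r\}\big)''=(\mathcal A_0'\otimes\mathbb I)\vee(\mathbb I\otimes\mathcal R).
$$
By part (1), $\mathcal R=\mathcal B(L^2(\mathbb R))$. The standard identity $(\mathcal M_1\otimes\mathbb I)\vee(\mathbb I\otimes\mathcal M_2)=\mathcal M_1\bar\otimes\mathcal M_2$ then gives $\mathcal A_0'\bar\otimes\mathcal B(L^2(\mathbb R))$. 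As a cross-check, the commutant of the generators is $(\mathcal A_0\bar\otimes\mathcal B(L^2))\cap(\mathbb I\otimes\mathcal R')=\mathcal A_0\otimes\mathbb I$, using the tensor commutation theorem $(\mathcal M_1\bar\otimes\mathcal M_2)'=\mathcal M_1'\bar\otimes\mathcal M_2'$. Taking the commutant again returns $\mathcal A_0'\bar\otimes\mathcal B(L^2(\mathbb R))$.

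The main obstacle I expect is the precise identification of the spatially generated algebras with the abstract crossed products $L^\infty(\mathbb R)\rtimes\mathbb R$ and $\mathcal C\rtimes\mathbb R$. My approach is to write the regular representation on $L^2(\mathbb R_s;L^2(\mathbb R_\tau))$ and conjugate by $W$. This turns $\pi(\phi)$ into $\phi(\tau)\otimes\mathbb I$ and the implementing translations into $\lambda_r\otimes\lambda_r$. It then remains to show that the algebra these generate is the spatial $\mathcal R\otimes\mathbb I$. One route is the fact that a crossed product by a spatially implemented action is isomorphic to the algebra generated by the original algebra and the implementing unitaries, provided the Fell-absorption/duality argument applies to the abelian group $\mathbb R$. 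The other is a direct check: the commutant of the generated algebra contains no operator outside $\mathbb C\,\mathbb I\otimes\mathcal B(L^2(\mathbb R_s))$, after which one takes bicommutants. Everything else is routine bicommutant bookkeeping.
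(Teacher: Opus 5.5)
Your commutant argument for part (1) and your join identity for part (2) are correct, and they are exactly the spatial content of the paper's proof. The paper also shows that $M_f$ and $\lambda_r$ generate $\mathcal{B}(L^2(\mathbb{R}))$ by the trivial-commutant argument, and it writes the crossed product in (2) as $\mathcal{A}_0'\bar\otimes\mathcal{R}$. (In your cross-check, the second set in the intersection should be $\mathcal{B}(\mathcal{H}_0)\bar\otimes\mathcal{R}'$, not $\mathbb{I}\otimes\mathcal{R}'$.)

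\textbf{The gap.} It lies in the step you flag as the main obstacle, and that step is the actual content of ``$\cong L^\infty(\mathbb{R})\rtimes\mathbb{R}$''. Your shear $W$ keeps $s$ as a coordinate, but $s$ is the variable that $u_r$ translates. So you end with $M_\phi\otimes\mathbb{I}$ and $\lambda_r\otimes\lambda_r$. The algebra these generate contains $\lambda_r\otimes\lambda_r$, so it is not $\mathcal{R}\otimes\mathbb{I}$; it is only unitarily equivalent to it, and proving that is the whole point.

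Neither of your proposed completions closes this.
\begin{itemize}
\item \emph{The general principle fails.} It is not true that a crossed product by a spatially implemented action equals (the algebra generated by $M$ and the implementing unitaries)$\otimes\mathbb{I}$. For the trivial action on $M=\mathbb{C}$, implemented by $v_r=1$, the crossed product is $L(\mathbb{R})\cong L^\infty(\mathbb{R})$, not $\mathbb{C}$. In general one only gets $M\rtimes_{\mathrm{Ad}\,v}\mathbb{R}\simeq(M\otimes\mathbb{I}\cup\{v_r\otimes\lambda_r\})''$, which is exactly where your $W$ leaves you.
\item \emph{The direct commutant check fails.} Multiplication by $g(s-\tau)$ commutes with $M_\phi\otimes\mathbb{I}$ and with $\lambda_r\otimes\lambda_r$, yet it is not in $\mathbb{C}\,\mathbb{I}\otimes\mathcal{B}(L^2(\mathbb{R}_s))$. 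The full commutant is the set of decomposable fields $\tau\mapsto\lambda_\tau Y_0\lambda_\tau^*$ with $Y_0\in\mathcal{B}(L^2(\mathbb{R}_s))$.
\end{itemize}

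\textbf{The fix.} Take as spectator the variable that $u_r$ leaves invariant, namely $\tau$. Use the convention $(\pi(\phi)\xi)(s,\tau)=\phi(\tau+s)\xi(s,\tau)$ and pass to coordinates $(x,y)=(\tau+s,\tau)$. Then $\pi(\phi)$ becomes $M_\phi\otimes\mathbb{I}$ and $u_r$ becomes $\lambda_r\otimes\mathbb{I}$, so the conjugated crossed product is exactly $\mathcal{B}(L^2(\mathbb{R}_x))\otimes\mathbb{I}_y$. Up to relabeling, this is the paper's unitary $(Z\xi)(s,t)=\xi(s+t,-s)$.

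Equivalently, keep your $W$ and then conjugate by the $\tau$-decomposable unitary $(\mathcal{U}\xi)(\tau,s)=\xi(\tau,s-\tau)$. It satisfies $\mathcal{U}^*(M_\phi\otimes\mathbb{I})\mathcal{U}=M_\phi\otimes\mathbb{I}$ and $\mathcal{U}^*(\lambda_r\otimes\lambda_r)\mathcal{U}=\lambda_r\otimes\mathbb{I}$. Such an intertwiner exists here only because $L^\infty(\mathbb{R}_\tau)$ contains the position variable that $\lambda_r$ shifts. That is why no general absorption principle can replace this step.

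For part (2) you also need, as the paper does, the identity $\mathcal{C}\rtimes_{\mathrm{id}\otimes\alpha}\mathbb{R}=\mathcal{A}_0'\bar\otimes(L^\infty(\mathbb{R})\rtimes_\alpha\mathbb{R})$ before conjugating by $\mathbb{I}\otimes Z$.
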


\begin{proof}

(1) Let $H = L^2(\mathbb{R})$. The von Neumann algebra $M = L^\infty(\mathbb{R})$ acts on $H$ by multiplication operators, $(M_f \eta)(t) = f(t)\eta(t)$ for $f \in L^\infty(\mathbb{R})$ and $\eta \in H$. The group $\mathbb{R}$ acts on $M$ by translation, $\alpha_r(M_f) = M_{\alpha_r(f)}$, where $\alpha_r(f)(t) = f(t-r)$.

The crossed product $\mathcal{R} = L^\infty(\mathbb{R}) \rtimes_\alpha \mathbb{R}$ is defined on the Hilbert space $L^2(\mathbb{R}, H) \cong L^2(\mathbb{R}^2)$. It is generated by the representation $\pi_\alpha(M)$ and the unitary group $u_r$ implementing the action, defined for $\xi \in L^2(\mathbb{R}^2)$ by
\be
(\pi_\alpha(f)\xi)(s, t) = (\alpha_{-s}(f)\xi(s, \cdot))(t) = f(t+s)\xi(s, t), \qquad
(u_r\xi)(s, t) = \xi(s-r, t) . \nonumber
\ee

Define a unitary operator $Z: L^2(\mathbb{R}^2) \to L^2(\mathbb{R}^2)$ by $(Z\xi)(s, t) = \xi(s+t, -s)$. Its adjoint $Z^*$ is $(Z^*\xi)(s, t) = \xi(-t, t+s)$. We compute the conjugation of the generators of $\mathcal{R}$ by $Z$:
$$ (Z \pi_\alpha(f) Z^* \xi)(s, t) = (\pi_\alpha(f) Z^* \xi)(s+t, -s) = f(-s + (s+t)) (Z^* \xi)(s+t, -s) = f(t) \xi(s, t). $$
Thus $Z \pi_\alpha(f) Z^* = 1_H \otimes M_f$, acting on $L^2(\mathbb{R}) \otimes L^2(\mathbb{R})$. Similarly, for the group representation,
$$ (Z u_r Z^* \xi)(s, t) = (u_r Z^* \xi)(s+t, -s) = (Z^* \xi)(s+t-r, -s) = \xi(-(-s), -s + (s+t-r)) = \xi(s, t-r), $$
so $Z u_r Z^* = 1_H \otimes \lambda_r$, where $\lambda_r$ is the standard translation operator on $L^2(\mathbb{R})$.

The von Neumann algebra generated by $M_f$ and $\lambda_r$ on $L^2(\mathbb{R})$ is the entire $\mathcal{B}(L^2(\mathbb{R}))$. To see this, note that their commutant is trivial: any operator commuting with all $M_f$ is a multiplication operator $M_g$, and commuting with $\lambda_r$ forces $g(t) = g(t-r)$ almost everywhere, so $g$ is constant. Therefore the transformed crossed product $Z \mathcal{R} Z^*$ is exactly $1_H \otimes \mathcal{B}(L^2(\mathbb{R}))$, where $1_H$ corresponds to the variable $s$. This establishes the isomorphism $\mathcal{R} \cong \mathcal{B}(L^2(\mathbb{R}))$.

(2) Let $\mathcal{A}_0'$ act on a Hilbert space $\mathcal{H}_0$. The algebra $L^\infty(\mathbb{R}, \mathcal{A}_0')$ is isomorphic to the von Neumann tensor product $\mathcal{A}_0' \bar{\otimes} L^\infty(\mathbb{R})$. The action of $\mathbb{R}$ on $L^\infty(\mathbb{R}, \mathcal{A}_0')$ by translation corresponds to the action $\beta = \mathrm{id} \otimes \alpha$ on $\mathcal{A}_0' \bar{\otimes} L^\infty(\mathbb{R})$.

	The crossed product $\mathcal{P} = (\mathcal{A}_0' \bar{\otimes} L^\infty(\mathbb{R})) \rtimes_\beta \mathbb{R}$ acts on $L^2(\mathbb{R}, \mathcal{H}_0 \otimes H) \cong \mathcal{H}_0 \otimes L^2(\mathbb{R}^2)$. The generators of $\mathcal{P}$ are $\pi_\beta(x \otimes f)$ for $x \in \mathcal{A}_0', f \in L^\infty(\mathbb{R})$, and the translation unitaries $v_r$. For $\xi \in \mathcal{H}_0 \otimes L^2(\mathbb{R}^2)$,
	$$ (\pi_\beta(x \otimes f)\xi)(s, t) = (\beta_{-s}(x \otimes f)\xi(s, \cdot))(t) = (x \otimes \alpha_{-s}(f))\xi(s, t) = x \xi(s, t) f(t+s). $$
	This shows $\pi_\beta(x \otimes f) = x \otimes \pi_\alpha(f)$. The group representation is $(v_r\xi)(s, t) = \xi(s-r, t)$, so $v_r = 1_{\mathcal{A}_0'} \otimes u_r$.

	Consequently, the crossed product $\mathcal{P}$ is generated by $x \otimes \pi_\alpha(f)$ and $1_{\mathcal{A}_0'} \otimes u_r$, which implies $\mathcal{P} = \mathcal{A}_0' \bar{\otimes} \mathcal{R}$. Applying the unitary transformation $1_{\mathcal{H}_0} \otimes Z$ to $\mathcal{P}$, we obtain
	$$ (1_{\mathcal{H}_0} \otimes Z) \mathcal{P} (1_{\mathcal{H}_0} \otimes Z)^* = \mathcal{A}_0' \bar{\otimes} (Z \mathcal{R} Z^*) = \mathcal{A}_0' \bar{\otimes} (1_H \otimes \mathcal{B}(L^2(\mathbb{R}))). $$
	Dropping the trivial tensor factor $1_H$ (corresponding to the dummy variable $s$), we identify $\mathcal{P}$ with $\mathcal{A}_0' \bar{\otimes} \mathcal{B}(L^2(\mathbb{R}))$ acting on $\mathcal{H}_0 \otimes L^2(\mathbb{R})$.

Under the unitary transformation $1_{\mathcal{H}_0} \otimes Z$, the generators of $\mathcal{P}$ map to the diagonal algebra $\mathcal{C} = L^\infty(\mathbb{R}, \mathcal{A}_0')$ and the group of $\lambda_r$:
\be
Z \lt[x\otimes\pi_\alpha(f)\rt] Z^* = x\otimes 1_H \otimes M_f,\qquad Z\lt[1_{\mathcal{A}'_0}\otimes u_r\rt] Z^* = 1_{\mathcal{A}_0'}\otimes 1_H \otimes \lambda_r.
\ee

\end{proof}

The condition that $Y \in \mathcal{A}_0' \bar{\otimes} \mathcal{B}(L^2(\mathbb{R}))$ commutes with the second generator $e^{it\widehat{H}} \otimes \lambda_t$ means $Y$ is fixed under the automorphism group $\gamma_t = \text{Ad}(e^{it\widehat{H}} \otimes \lambda_t)$. We evaluate $\gamma_t$ on the generators of $\mathcal{A}_0' \bar{\otimes} \mathcal{B}(L^2(\mathbb{R}))$.
For the translation $\mathbb{I} \otimes \lambda_r$,
$$
\gamma_t(\mathbb{I} \otimes \lambda_r) = (e^{it\widehat{H}} \otimes \lambda_t) (\mathbb{I} \otimes \lambda_r) (e^{-it\widehat{H}} \otimes \lambda_{-t}) = \mathbb{I} \otimes \lambda_r.
$$
For $X \in \mathcal{C}$, that is, a multiplication operator $\tau \mapsto X(\tau) \in \mathcal{A}_0'$ defined up to sets of measure zero,
$$
(\gamma_t(X)\xi)(\tau) = \lt[(e^{it\widehat{H}} \otimes \lambda_t) X (e^{-it\widehat{H}} \otimes \lambda_{-t}) \xi\rt](\tau) = e^{it\widehat{H}} X(\tau-t) e^{-it\widehat{H}} \xi(\tau) = \sigma'_t\lt(X(\tau-t)\rt)\xi(\tau).
$$

Since $\gamma_t$ fixes the translation operators, $\mathbb{I} \otimes \lambda_r \in \mathcal{M}_1'$.
Moreover, there is a dual action $\widehat{\tau}_p$ on $\mathcal{A}_0' \bar{\otimes} \mathcal{B}(L^2(\mathbb{R}))$ given by $\widehat{\tau}_p = \text{Ad}(\mathbb{I} \otimes u_p)$, where $(u_p\xi)(\tau) = e^{ip\tau}\xi(\tau)$. Evaluating the composition of their adjoint actions on the translation operators yields
$$
(\mathbb{I} \otimes u_p)(\mathbb{I} \otimes \lambda_r)(\mathbb{I} \otimes u_p^*)\xi(\tau) = e^{ip\tau } (\mathbb{I} \otimes \lambda_r) e^{-ip\tau } \xi(\tau) = e^{ip\tau} e^{-ip(\tau-r)} \xi(\tau-r) = e^{ipr} \xi(\tau-r),
$$
so $\widehat{\tau}_p(\mathbb{I} \otimes \lambda_r) = e^{ipr}(\mathbb{I} \otimes \lambda_r)$. For $X \in \mathcal{C}$, we have $\widehat{\tau}_p(X) = X$ because multiplication operators commute.
Because $\gamma_t$ leaves $X \in \mathcal{C}$ within $\mathcal{C}$ and stabilizes $\mathbb{I} \otimes \lambda_r$, it commutes with $\widehat{\tau}_p$.
Since $\mathcal{M}_1'$ is the fixed-point algebra of $\gamma_t$ in $\mathcal{A}_0' \bar{\otimes} \mathcal{B}(L^2(\mathbb{R}))$, this commutation guarantees that $\mathcal{M}_1'$ is invariant under the dual action $\widehat{\tau}_p$.

\begin{lemma}

$\mathcal{M}_1'$ is generated by $\{\mathbb{I} \otimes \lambda_r\}$ and $\mathcal{M}_1' \cap \mathcal{C}$.
\end{lemma}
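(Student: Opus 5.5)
The plan is to use the dual action $\widehat\tau_p=\mathrm{Ad}(\mathbb{I}\otimes u_p)$ as a Fourier-analysis tool on $\mathcal{M}_1'$. This is the standard argument that an algebra invariant under a dual action, and containing the implementing translations, is generated by its fixed points together with those translations. Let $\mathcal{Q}$ denote the von Neumann algebra generated by $\{\mathbb{I}\otimes\lambda_r\}$ and $\mathcal{M}_1'\cap\mathcal{C}$. Since $\mathbb{I}\otimes\lambda_r\in\mathcal{M}_1'$ was shown above, $\mathcal{Q}\subseteq\mathcal{M}_1'$, so only the reverse inclusion needs proof.

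The first step is to identify the fixed-point algebra of $\widehat\tau_p$ in $\mathcal{A}_0'\bar\otimes\mathcal{B}(L^2(\mathbb{R}))$. An operator commuting with all $\mathbb{I}\otimes u_p$ commutes with $\mathbb{I}\otimes L^\infty(\mathbb{R})$, because the $u_p$ generate $L^\infty(\mathbb{R})$. Its commutant inside $\mathcal{A}_0'\bar\otimes\mathcal{B}(L^2)$ is $\mathcal{A}_0'\bar\otimes L^\infty(\mathbb{R})=\mathcal{C}$. Hence the fixed points of $\widehat\tau_p$ are exactly $\mathcal{C}$. Since $\mathcal{M}_1'$ is $\widehat\tau$-invariant, the fixed points in $\mathcal{M}_1'$ are $\mathcal{M}_1'\cap\mathcal{C}$.

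Next, take $Y\in\mathcal{M}_1'$ and a test function $\varphi\in L^1(\mathbb{R})$. I would form the spectral averages
\[
Y_\varphi(r):=\int\mathrm{d}p\,\varphi(p)\,e^{-ipr}\,\widehat\tau_p(Y)\,(\mathbb{I}\otimes\lambda_{-r}),
\]
with the integral taken in the $\sigma$-weak sense. The covariance $\widehat\tau_q(\mathbb{I}\otimes\lambda_{-r})=e^{-iqr}(\mathbb{I}\otimes\lambda_{-r})$ is used to check what is fixed. Since each $\widehat\tau_p(Y)\in\mathcal{M}_1'$, the operator $\int\mathrm{d}p\,\varphi(p)\widehat\tau_p(Y)$ lies in $\mathcal{M}_1'$. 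For the reconstruction I would use the Takesaki-type identity: the ``Fourier coefficients''
\[
Y_r:=\lim\int\mathrm{d}p\,e^{-ipr}\widehat\tau_p(Y)\,(\mathbb{I}\otimes\lambda_{-r}),
\]
regularized by multiplying with a Fej\'er kernel in $p$, are $\widehat\tau$-fixed and therefore lie in $\mathcal{M}_1'\cap\mathcal{C}$. The regularized sums $\int\mathrm{d}r\,\hat k_n(r)\,Y_r(\mathbb{I}\otimes\lambda_r)$ then converge $\sigma$-weakly to $Y$ as the kernel approximates the identity. Each regularized sum lies in $\mathcal{Q}$, so $Y\in\mathcal{Q}$.

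A cleaner alternative, which I would actually try first, is to realize everything concretely in the $\tau$-representation. Write $Y$ through its kernel as an operator-valued distribution $Y(\tau,\tau')$ with values in $\mathcal{A}_0'$, so that $Y=\int\mathrm{d}r\,X_r\,(\mathbb{I}\otimes\lambda_r)$ with $X_r(\tau)=Y(\tau,\tau-r)$ a multiplication operator in $\mathcal{C}$. The action $\widehat\tau_p$ multiplies the $r$-component by $e^{ipr}$. Commuting with $\gamma_t$ acts componentwise, because $\gamma_t$ fixes $\lambda_r$ and preserves $\mathcal{C}$. Therefore each $X_r$ is $\gamma$-invariant, i.e.\ lies in $\mathcal{M}_1'\cap\mathcal{C}$.

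The main obstacle is rigor in this decomposition. For a general bounded $Y$ the components $X_r$ are only distributional in $r$, so they are not bounded elements of $\mathcal{C}$. I would handle this by smearing first. Replace $Y$ by $Y_{\varphi}=\int\mathrm{d}p\,\varphi(p)\widehat\tau_p(Y)$ with $\varphi$ Schwartz; this still lies in $\mathcal{M}_1'$ and converges $\sigma$-weakly to $Y$ as $\varphi\to\delta$. Its components $\hat\varphi(r)X_r$ are then localized, and I would additionally cut off with $\mathbb{I}\otimes u$-type multiplications by $L^\infty$ functions of compact support in $\tau$. These cutoffs are not in $\mathcal{M}_1'$, so instead I would apply the cutoff to the components after extracting them, which is legitimate since $\mathcal{C}$ is abelian in the $\tau$ variable. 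Care is needed to keep every approximant inside $\mathcal{M}_1'$, and this bookkeeping is where I expect the work to concentrate.
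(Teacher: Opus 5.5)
Your route is the paper's. It defines $E_r(x)=\int \widehat\tau_p(x\lambda_{-r})\,\mathrm{d}p=\int \mathrm{d}p\,e^{-ipr}\widehat\tau_p(x)\lambda_{-r}$, notes that this is $\widehat\tau$-fixed and so lies in $\mathcal{M}_1'\cap\mathcal{C}$, and reconstructs $x=\int\mathrm{d}s\,X(s)\lambda_s$ from $E_r(x)\propto X(r)$. It does this on a set of elements ``for which $E_r(x)$ is well-defined'', which it asserts without proof to be weakly dense in $\mathcal{M}_1'$. Your identification of the fixed-point algebra with $\mathcal{C}$ via the tensor commutation theorem is a correct justification of what the paper only states.

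The step you flag is exactly the one the paper leaves implicit, but your proposed repairs fail:
\begin{itemize}
\item A weight in $p$ destroys fixedness. $\widehat\tau_q$ maps $\int w(p)e^{-ipr}\widehat\tau_p(Y)\lambda_{-r}\,\mathrm{d}p$ to the same expression with $w(p-q)$, so Fej\'er-weighted coefficients are not in $\mathcal{C}$. Without a weight the limit need not exist: for $Y=\lambda_{r_0}$ it is $2\pi\delta(r-r_0)\lambda_{r_0-r}$.
\item Smearing $Y\mapsto\int\varphi(p)\widehat\tau_p(Y)\,\mathrm{d}p$ only multiplies the $r$-components by a Fourier transform of $\varphi$. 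It localizes them in $r$ but does not make them functions; $\lambda_{r_0}$ is sent to a multiple of itself.
\item Cutting off in $\tau$ is incompatible with the target algebra. $\mathcal{M}_1'\cap\mathcal{C}=\pi'_{0,\mathrm{right}}(\mathcal{A}_0')$ consists of $X(\tau)=\sigma'_\tau(x')$, with $\|X(\tau)\|=\|x'\|$ for every $\tau$. A compactly supported cutoff therefore removes any nonzero component from it. Also, $\mathcal{C}=L^\infty(\mathbb{R},\mathcal{A}_0')$ is not abelian; only scalar functions of $\tau$ are central.
\end{itemize}

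The regularization that works smears in $r$, using operators you already have in $\mathcal{M}_1'$. For $Y\in\mathcal{M}_1'$ set $Z=\lambda(g)^*Y\lambda(f)\in\mathcal{M}_1'$, where $\lambda(f)=\int f(s)\lambda_s\,\mathrm{d}s=\tilde f(\widehat\pi)$ with $\tilde f\in L^2$.
\begin{itemize}
\item Since $\widehat\tau_p$ shifts $\widehat\pi$ by $p$, $\int\mathrm{d}p\,\|\widehat\tau_p(\lambda(f))\xi\|^2=\|\tilde f\|_2^2\|\xi\|^2$. Cauchy--Schwarz then gives $\int\mathrm{d}p\,|\langle\eta,\widehat\tau_p(Z)\xi\rangle|\le\|Y\|\,\|\tilde f\|_2\|\tilde g\|_2\,\|\eta\|\,\|\xi\|$.
\item So $E_r(Z)$ is an absolutely convergent weak integral of elements of $\mathcal{M}_1'$. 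It is bounded, lies in $\mathcal{M}_1'$, and is genuinely $\widehat\tau$-fixed by translation invariance of $\mathrm{d}p$. Hence $E_r(Z)\in\mathcal{M}_1'\cap\mathcal{C}$, and $r\mapsto E_r(Z)\lambda_r$ is weakly continuous.
\item Take a Fej\'er kernel $k_n$, a probability density whose transform $\check k_n(r)=\int k_n(p)e^{ipr}\,\mathrm{d}p$ has compact support. Fubini gives $\int\mathrm{d}p\,k_n(p)\widehat\tau_p(Z)=\frac{1}{2\pi}\int\mathrm{d}r\,\check k_n(r)E_r(Z)\lambda_r$. The right side is a weak integral over a compact set of elements of $\mathcal{Q}$, so it commutes with $\mathcal{Q}'$ and lies in $\mathcal{Q}$.
\item Letting $n\to\infty$ gives $Z\in\mathcal{Q}$. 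Letting $f,g$ run through an approximate identity, so that $\lambda(f),\lambda(g)\to\mathbb{I}$ strongly, gives $Y\in\mathcal{Q}$.
\end{itemize}
This is the standard Landstad-type argument, and it supplies the density claim the paper leaves unproved. With $Z$ in place of $Y$, your Fej\'er reconstruction in $r$ goes through as you wrote it. Your kernel picture in the fourth paragraph is the same argument once these sandwiches are built in.
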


\begin{proof}
The dual action $\widehat{\tau}$ is a strongly continuous homomorphism from the dual group $\widehat{\mathbb{R}} \cong \mathbb{R}$ into the automorphism group of $\mathcal{A}_0' \bar{\otimes} \mathcal{B}(L^2(\mathbb{R}))$. For each $p \in \mathbb{R}$, the automorphism $\widehat{\tau}_p$ is uniquely determined by its action on the generators:
$$
\widehat{\tau}_p(X) = X, \quad \forall X \in \mathcal{C},\qquad \widehat{\tau}_p(\lambda_r) = e^{ipr} \lambda_r, \quad \forall r \in \mathbb{R}.
$$
The fixed-point algebra of $\mathcal{A}_0' \bar{\otimes} \mathcal{B}(L^2(\mathbb{R}))$ under the dual action is $\mathcal{C} = L^\infty(\mathbb{R}, \mathcal{A}_0')$.

Let $\cm' \subseteq \mathcal{A}_0' \bar{\otimes} \mathcal{B}(L^2(\mathbb{R}))$ satisfy two properties: (1) $\lambda_r \in \cm'$ for all $r \in \mathbb{R}$, and (2) $\widehat{\tau}_p(\cm') = \cm'$ for all $p \in \mathbb{R}$ (invariance under the dual action).

Define $\cm'_0 = \cm' \cap \mathcal{C}$. Because $\mathcal{M}'_0 \subseteq \mathcal{M}'$ and $\lambda_r \in \mathcal{M}'$, the von Neumann algebra generated by $\mathcal{M}'_0$ and $\{\lambda_r\}_{r \in \mathbb{R}}$ is contained in $\mathcal{M}'$.

To prove the reverse inclusion, define the Fourier transform
$$
E_r(x) = \int_{\mathbb{R}} \widehat{\tau}_p(x\l_{-r}) \mathrm{d}p,\qquad \forall x\in \mathcal{A}_0' \bar{\otimes} \mathcal{B}(L^2(\mathbb{R})).
$$
Let $x \in \mathcal{M}'$ be an arbitrary element for which $E_r(x)$ is well-defined. Since $\mathcal{M}'$ is invariant under the dual action $\widehat{\tau}_p$, the Fourier transform $E_r(x )$ belongs to $\mathcal{M}'$. The image $E_r(x)$ always belongs to the fixed-point algebra $\mathcal{C}$, so $E_r(x)$ lies in the intersection
$$
E_r(x) \in \mathcal{M}' \cap \mathcal{C} = \mathcal{M}'_0.
$$
For a generic element $x=\int \rmd s\, X(s)\otimes\l_s$ with $X(s)\in\cc$, we obtain $E_r(x)=X(r)\otimes \mathbb{I}$ for all $r\in\R$. The element $x$ therefore lies in the algebra generated by $\mathcal{M}'_0$ and $\lambda_r$. Since this holds for a weakly dense set of elements in $\mathcal{M}'$, we obtain the reverse inclusion: $\cm'$ is contained in the algebra generated by $\mathcal{M}'_0$ and $\lambda_r$.

\end{proof}

We now find $\mathcal{M}_1' \cap \mathcal{C}$ explicitly. An element $X \in \mathcal{C}$ is in $\mathcal{M}_1'$ if and only if $\gamma_t(X) = X$ for all $t \in \mathbb{R}$. Consequently, for each $t \in \mathbb{R}$,
$$
\sigma'_t(X(\t-t)) = X(\t) \quad \text{for almost every } \t \in \mathbb{R}.
$$
Applying the automorphism $\sigma'_{-\t}$ to both sides,
$$
\sigma'_{-(\t-t)}(X(\t-t)) = \sigma'_{-\t}(X(\t)) \quad \text{for almost every } \t \in \mathbb{R}.
$$
Define $Y(\t) = \sigma'_{-\t}(X(\t)) $. The equation simplifies to
$$
Y(\t-t) = Y(\t) \quad \text{for almost every } \t \in \mathbb{R},
$$
and this holds for every $t \in \mathbb{R}$. The translation invariance of $Y \in L^\infty(\mathbb{R}, \mathcal{A}_0')$ implies that $Y$ is constant almost everywhere: there exists a fixed operator $x' \in \mathcal{A}_0'$ with $Y(\t) = x'$ for almost every $\t \in \mathbb{R}$.
Substituting back,
$$
X(\t) = \sigma'_\t(Y(\t)) = \sigma'_\t(x') \quad \text{for almost every } \t \in \mathbb{R},
$$
so $X$ is exactly the multiplication operator $\pi'_{0,\text{right}}(x')$. Therefore $\mathcal{M}_1' \cap \mathcal{C} = \pi'_{0,\text{right}}(\mathcal{A}_0')$.

We conclude that $\mathcal{M}_1'$ is generated by $\pi'_{0,\text{right}}(\mathcal{A}_0')$ and $\mathbb{I} \otimes \lambda_r$. Comparing with the definition of $\cm_2$, we obtain
$$
\mathcal{M}_1' = \mathcal{M}_2.
$$
Recall that $\mathcal{M}_1 = U\mathcal{N}U^*$, $\cn=\mathcal{A}_0 \rtimes_\sigma \mathbb{R}$, and $\mathcal{M}_2 = U\mathcal{A}^{\widehat{\mathbf{H}}\prime}U^*$. Conjugating this equality back by $U$ and taking the commutant of both sides recovers the Takesaki duality relation:
$$
\mathcal{A}^{\widehat{\mathbf{H}}} =  \mathcal{A}_0 \rtimes_\sigma \mathbb{R}.
$$

\section{Trace on the crossed-product algebra}\label{Trace on cross-product algebra}

Let $\mathcal{A}_0$ be a von Neumann factor acting on a Hilbert space $\mathcal{H}_0$, with a cyclic and separating\footnote{A state $\psi\in{\cal H}_S$ is a cyclic vector for a von Neumann algebra ${\cal A}_0$ if ${\cal A}_0\psi$ is dense in ${\cal H}_S$. A state $\psi\in{\cal H}_S$ is separating for ${\cal A}_0$ if the condition $a\psi=0$, $a\in{\cal A}_0$, implies $a=0$.} state $|\Psi_0\rangle$. Let $\Delta = e^{-\widehat{H}}$ be the corresponding modular operator on $\mathcal{H}_0$, where $\widehat{H}$ is the modular Hamiltonian. The modular automorphism group is $\sigma_t(a) = \Delta^{-it} a \Delta^{it} = e^{it\widehat{H}} a e^{-it\widehat{H}}$ for all $a \in \mathcal{A}_0$.

We extend the Hilbert space to $\mathcal{H}_{\rm tot} = \mathcal{H}_0 \otimes L^2(\mathbb{R})$. Let $X=-\wh{\pi}$ be the coordinate operator on $L^2(\mathbb{R})$ and $P = -i\partial_X$ its conjugate momentum.
The crossed-product von Neumann algebra $\mathcal{N} = \mathcal{A}_0 \rtimes_\sigma \mathbb{R}$ is the algebra acting on $\mathcal{H}$ generated by (1) the operators $a \otimes 1$ for all $a \in \mathcal{A}_0$, and (2) the unitary group $U(s) = e^{is(\widehat{H} + X)}$ for $s \in \mathbb{R}$.

A trace $\mathrm{Tr}$ on a von Neumann algebra $\mathcal{A}$ is a map from the cone of positive operators $\ca^+$ to the extended non-negative real numbers $\mathbb{R}_{\geq 0}$, required to satisfy the following properties \cite{Sorce:2023fdx}:
\begin{enumerate}
    \item \textbf{Positive homogeneity:} For all $T \in \ca^+$ and all $\lambda \geq 0$, $\mathrm{Tr}(\lambda T)=\lambda\, \mathrm{Tr}(T)$.
    \item \textbf{Additivity:} For all $T, S \in \ca^+$, $\mathrm{Tr}(T+S)=\mathrm{Tr}(T)+\mathrm{Tr}(S)$.
    \item \textbf{Unitary invariance:} For all $T \in \ca^+$ and all unitary $U \in \mathcal{A}$, $\mathrm{Tr}(U T U^*)=\mathrm{Tr}(T)$.
\end{enumerate}
Given the first two properties, property 3 is equivalent to the condition that for any $O \in \mathcal{A}$, $\tr(O O^*) = \tr(O^* O)$. The subset $\mathcal{A}_1 \subseteq \mathcal{A}$ of operators $T$ for which $\mathrm{Tr}(|T|)<\infty$ is a two-sided ideal in $\mathcal{A}$, the trace-class operators. The trace $\mathrm{Tr}$ then extends uniquely to a linear functional $\mathrm{Tr}: \mathcal{A}_1 \rightarrow \mathbb{C}$. The extended trace is cyclic on $\mathcal{A}_1$: for all $T \in \mathcal{A}_1$ and $S \in \mathcal{A}$, $\tr(TS)=\tr(ST)$.

A convenient basis for the crossed-product algebra $\mathcal{N}$ consists of operators of the form $\widehat{a} = a\, e^{is(\widehat{H} + X)}$, where $a \in \mathcal{A}_0$.
The trace is defined by \cite{Witten:2021unn}
\be
\mathrm{Tr}(\widehat{a}) = \lag\Psi_0\otimes f_{\rm obs}\mid \widehat{a}K^{-1}\mid \Psi_0\otimes f_{\rm obs}\rag =\int_{-\infty}^{\infty} \mathrm{d}X \, e^X \langle \Psi_0 | a \,e^{is(\widehat{H} + X)}| \Psi_0 \rangle,\label{traceD1}
\ee
where $K=e^{-(\wh{H}+X)}f_{\rm obs}(\wh{H}+X)^2$. The state $f_{\rm obs} \in \mathcal{H}_{\mathrm{obs}}$ is arbitrary, subject only to $f_{\rm obs}(X) > 0$.
The trace $\tr(\cdot)$ does not depend on the choice of $f_{\rm obs}$.

We demonstrate the cyclicity of the trace $\mathrm{Tr}(\widehat{a}\widehat{b}) = \mathrm{Tr}(\widehat{b}\widehat{a})$.
A dense subspace of $\cn$ is spanned by the operators
\be
\widehat{a}(X) = \int \mathrm{d}s \, a(s) e^{is(\widehat{H} + X)},\qquad a(s)=a f(s), \qquad a\in \mathcal{A}_0,\label{repDenseOp}
\ee
where $f(s)$ is a finite linear combination of Hermite functions. For two such operators $\widehat{a}$ and $\widehat{b}$, we evaluate the trace of $\wh{a}\wh{b}$:
\be
\mathrm{Tr}(\widehat{a}\widehat{b}) &=&  \int\rmd X\int \mathrm{d}s \mathrm{d}r \, e^{i(s+r-i)X} \langle \Psi_0 \mid a(s) e^{is\widehat{H}} b(r) e^{ir\widehat{H}} \mid \Psi_0 \rangle \nonumber\\
&=&\int \mathrm{d}r \,\langle \Psi_0 \mid \sigma_{r-i}\lt(a(-r+i)\rt) b(r)\mid \Psi_0 \rangle,
\ee
using $\wh{H} \mid\Psi\rangle = 0$ and the analyticity of Hermite functions.

We compute $\mathrm{Tr}(\widehat{b}\widehat{a})$ in the same way:
\be
\mathrm{Tr}(\widehat{b}\widehat{a}) = \int\rmd X\int \mathrm{d}s \mathrm{d}r \, e^{i(s+r-i)X} \langle \Psi_0 \mid b(r) \sigma_r\lt( a(s) \rt)\mid \Psi_0 \rangle .
\ee
We now invoke the KMS property of the modular automorphism group, which states that for any $a,b \in \mathcal{A}_0$, $ \langle \Psi _0 \mid b \sigma_{r}(a) \mid \Psi_0 \rangle = \langle \Psi_0 \mid \sigma_{r-i}(a)b \mid \Psi_0 \rangle $. The trace $\mathrm{Tr}(\widehat{b}\widehat{a}) $ becomes
\be
\mathrm{Tr}(\widehat{b}\widehat{a}) &=& \int\rmd X\int \mathrm{d}s \mathrm{d}r \, e^{i(s+r-i)X} \langle \Psi_0 \mid \sigma_{r-i}\lt( a(s) \rt)b(r) \mid \Psi_0 \rangle \nonumber\\
&=&\int\mathrm{d}r \, \langle \Psi_0 \mid \sigma_{r-i}\lt( a(-r+i) \rt)b(r) \mid \Psi_0 \rangle .
\ee
This matches the expression for $\mathrm{Tr}(\widehat{a}\widehat{b})$. We thus have $\mathrm{Tr}(\widehat{a}\widehat{b}) = \mathrm{Tr}(\widehat{b}\widehat{a})$, so $\mathrm{Tr}(\cdot)$ is a valid trace on the crossed-product algebra $\mathcal{N}$.

The trace $\mathrm{Tr}(\cdot)$ in \eqref{traceD1} satisfies the following properties:
\begin{enumerate}
    \item \textbf{Faithfulness:} For positive operator $\wh\rho\in\cn_+$, if $\mathrm{Tr}(\wh\rho) = 0$, then $\wh\rho = 0$.
    \item \textbf{Normality:} If $\{\wh\rho_\alpha\} \subset \mathcal{N}_{+}$ is any family of positive operators with supremum $\wh\rho = \sup_\alpha \wh\rho_\alpha$, then $\mathrm{Tr}(\wh\rho) = \sup_\alpha \mathrm{Tr}(\wh\rho_\alpha)$.
    \item \textbf{Semi-finiteness:} The ideal of elements $y \in \mathcal{N}$ such that $\mathrm{Tr}(y^* y) < \infty$ is weakly dense in $\mathcal{N}$.
\end{enumerate}

We sketch the proof.
For faithfulness, suppose $\widehat{\rho} \in \mathcal{N}$ is positive with $\mathrm{Tr}(\widehat{\rho}) = 0$. This implies $\langle \Psi_0 | \widehat{\rho} (X)| \Psi_0 \rangle = 0$, which is possible only if $\widehat{\rho}(X)|\Psi_0\rangle = 0$ for almost every $X \in \mathbb{R}$. Since $|\Psi_0\rangle$ is separating for $\mathcal{A}_0$, it follows that $\widehat{\rho}(X) = 0$ for almost every $X \in \mathbb{R}$, so $\widehat{\rho}=\int_\R\widehat{\rho}(X) \rmd E_X=0$ where $E_X$ is the spectrual projection. 

For normality, let $\{\widehat{\rho}_\alpha\}_{\alpha}$ be an increasing net of positive operators in $\mathcal{N}$ with supremum $\widehat{\rho}$. For any state $\eta \in \mathcal{H}_{\mathrm{tot}}$, the sequence $\langle \eta | \widehat{\rho}_\alpha | \eta \rangle$ increases monotonically and is bounded above, so it converges to a finite limit. This implies that $\langle \zeta | \widehat{\rho}_\alpha | \eta \rangle$ converges for any $\zeta, \eta \in \mathcal{H}_{\mathrm{tot}}$ to a bounded sesquilinear form $B(\zeta, \eta)$, with $B(\zeta, \eta) = \langle \zeta | \widehat{\rho} | \eta \rangle$. Thus $\widehat{\rho}_\alpha$ converges to $\widehat{\rho}$ in the weak operator topology. The normality of the trace then follows from the continuity of the functional $\langle \Psi_0 \otimes f_{\rm obs} | \cdot K^{-1}| \Psi_0 \otimes f_{\rm obs} \rangle$ with respect to the weak operator topology.

For semi-finiteness, consider the dense subspace of $\mathcal{N}$ consisting of operators of the form \eqref{repDenseOp}, where $f(s)$ is rapidly decaying and its Fourier transform $\hat{f}(X)$ satisfies $\int \mathrm{d}X\, e^{X} |\hat{f}(X)|^2 < \infty$. For such operators, the trace $\Tr(\wh{a}^*\wh{a})$ evaluates to
\be
\Tr(\wh{a}^*\wh{a})=\int\rmd X e^{X}|\hat{f}(X)|^2\lag\Psi_0|a^*a|\Psi_0\rag<\infty.
\ee

\section{Faithfulness of $\o_\a$}\label{Faithfulness of omega}

We show that for any positive element $\widehat{\mathcal{O}} =\wh a^*\wh a$ with $\wh a\in \mathcal{M}_+$, $\omega_\alpha(\widehat{\mathcal{O}}) = 0$ implies $ \widehat{a} = 0$, for $0<\a\leq1$.  

Using the spectral cutoff $f^{(n)}_\a(\wh \pi) = f_\alpha(\widehat{\pi}) \chi_{[0,n]}(\wh \pi)$ for any $n>0$, we define $\o_{\a,n}= \mathrm{Tr}_+\big(f_\alpha^{(n)}(\widehat{\pi})^{1/2} \,\cdot\,f_\alpha^{(n)}(\widehat{\pi})^{1/2}\big)$, then
\be
\omega_\alpha(\widehat{\mathcal{O}}) - \omega_{\a,n}(\widehat{\mathcal{O}}) = \int^{\infty}_{0}\rmd\pi\,e^{-\pi}\lt[f_{\alpha}(\pi)-f_{\alpha}^{(n)}(\pi)\rt]\left\langle \Psi_{0}\left|\wh \co(\pi)\right|\Psi_{0}\right\rangle 
\ge 0\ .
\ee
So $\omega_\alpha(\widehat{\mathcal{O}}) = 0$ implies $\omega_{\a,n}(\widehat{\mathcal{O}})= 0$. Since $f_\alpha^{(n)}(\widehat{\pi})^{1/2}\wh \co \,f_\alpha^{(n)}(\widehat{\pi})^{1/2}$ is a positive operator in $\mathcal{M}_+$ and $\mathrm{Tr}_+$ is a faithful trace, this implies $\widehat{a} f_\alpha^{(n)}(\widehat{\pi})^{1/2} = 0$ for all $n>0$.

The inverse $f_\alpha(\widehat{\pi})^{-1/2} $ (with restriction of spectrum $\pi\geq 0$) is a bounded operator in $\mathcal{M}_+$, for $0<\a\leq 1$. Multiplying $\widehat{a} f_\alpha^{(n)}(\widehat{\pi})^{1/2} = 0$ on the right by $f_\alpha(\widehat{\pi})^{-1/2} $ gives $\widehat{a}\chi_{[0,n]}(\wh \pi)=0$ for all $n>0$. Under the strong limit $n\to\infty $, the projection $\chi_{[0,n]}(\wh \pi)$ converges to $\Pi_+$, so we obtain $\widehat{a}\Pi_+=\wh a=0$. This proves that $\omega_\alpha$ is a faithful state.

\section{Restriction of dressing map}\label{app:restriction of dressing map}

Given that $\wt\cw_m(a) = A_+(a) \otimes m(a)$, where $A_+ : \ca_0 \to \cm_+$ is a linear map and $m : \ca_0 \to M_2(\C)$ is a (matrix-valued) function. Let us first prove that the map $\wt\cw_m : \ca_0 \to \wt\cm$ is linear if and only if $m(a)$ is a constant matrix independent of $a$:

It suffices to prove the "only if" direction, as the converse is obvious.
Let $a_0 \in \ca_0$ be an element such that $A_+(a_0) \neq 0$. We define the constant matrix $m_0 = m(a_0)$. Let $a \in \ca_0$ be another arbitrary element. We will show that $m(a)=m_0$ by considering two cases based on whether $A_+(a)$ and $A_+(a_0)$ are linearly independent in $\cm_+$.

First, let $A_+(a)$ and $A_+(a_0)$ be linearly independent\footnote{{The image of $A_+$ is infinite-dimensional because the dressing map is injective. To prove this, let $P_p:=\chi_{(-\infty,p]}(\wh H)=
	\chi_{[0,\infty)}(p-\widehat H)$ be the spectral projection of the Hamiltonian. By Eq.~\eqref{exchangeRelationPiPlus}, conjugating $A_+(a)$ by $e^{\rmi\wh\tau\wh H}$ gives an operator $P_paP_p$, where $p$ is the spectrum of $\wh\pi$. Hence $A_+(a)=0$ implies $P_paP_p=0$ for almost every $p\in\R$. As $p\to\infty$, $P_{p_k}\to\mathbb{I}_{\ch_0}$ strongly. Since $a\in\ca_0$ is bounded, we obtain the strong limit $P_{p}aP_{p}\to a$, so $a=0$. Thus $A_+$ is injective. Since $\ca_0$ is a type III$_1$ factor and therefore infinite-dimensional, it follows that $\dim(\mathrm{im}A_+)=\infty$.}
}. We evaluate $\wt\cw_m(a+a_0)$ by the linearity of both $A_+$ and $\wt\cw_m$,
\be
A_+(a) \otimes m(a + a_0) + A_+(a_0) \otimes m(a + a_0) = A_+(a) \otimes m(a) + A_+(a_0) \otimes m_0 
\ee
which implies $m(a + a_0)=m(a)=m_0$ by the linear independence of $A_+(a)$ and $A_+(a_0)$.
	
Consider the case where $A_+(a)$ and $A_+(a_0)$ are linearly dependent, i.e., $A_+(a)=\l A_+(a_0)$, with $\l\in\C$. There exists an element $b \in \ca_0$ such that $A_+(b)$ and $A_+(a_0)$ are linearly independent in $\cm_+$. The above argument gives $m(b) = m_0$. Moreover, since $A_+(a)=\l A_+(a_0)$, the operators $A_+(a)$ and $A_+(b)$ are linearly independent if $\l\neq0$. Applying the above argument again gives $m(a)=m(b)=m_0$. If $\lambda = 0$, then $A_+(a) = 0$, so $\wt\cw_m(a) = 0$ can equivalently be written as $\wt\cw_m(a) = A_+(a) \otimes m_0$.

The trace condition \eqref{traceCondit} reduces to $\mathrm{tr}(\rho_0 m^n)=1$ for all $n\geq 1$. If $\cw_m(a)$ is a self-adjoint operator, then $m$ is a normal matrix. Below, we prove that the trace condition with $\rho_0>0$ implies $m = \mathbb{I}_2$:

Any 2-by-2 matrix $m$ satisfies
$
m^2 - \mathrm{tr}(m)\,m + \det(m)\,\mathbb{I}_2 = 0.
$
Multiplying by $\rho_0$ and taking the trace, we deduce $1 - \mathrm{tr}(m) + \det(m) = 0$ by using $\mathrm{tr}(\rho_0) =\mathrm{tr}(\rho m) = \mathrm{tr}(\rho_0 m^2) = 1$.
Any $2 \times 2$ matrix $u$ satisfies $\det(u - \mathbb{I}_2) = \det(u) - \mathrm{Tr}(u) + 1$, so we obtain $\det(m - \mathbb{I}_2) = 0$.

Define $B:=m-\mathbb{I}_2$. Since $\det(B) = 0$, $B$ has rank at most one. If $B\neq 0$, it can be written as $B = u v^\dagger$ for nonzero $u, v \in \mathbb{C}^2$. Then $\mathrm{tr}(\rho B) = v^\dagger \rho u = 0$. The self-adjointness of $\cw_m(a)$ implies that $m$ is normal, so $B$ is normal: $B B^\dagger = B^\dagger B$. For $B = u v^\dagger$, this would mean $\|v\|^2 u u^\dagger = \|u\|^2 v v^\dagger$, so $u = c v$ for $c =v^\dagger u/\|v\|^2$. But then $v^\dagger \rho u = c\, v^\dagger \rho v = 0$, and the positivity of $\rho$ would imply $v = 0$ or $c=0$, either of which would imply $B = 0$, contradicting the assumption that $B$ is nonzero. Therefore, $B$ must vanish and $m = \mathbb{I}_2$.

\section{Anti-automorphism}\label{anti-automorphism}

Let $\mathcal{H}$ be a complex Hilbert space and let $\mathcal{B}(\mathcal{H})$ denote the algebra of all bounded linear operators on $\mathcal{H}$. An anti-automorphism of a $*$-algebra $\mathcal{A}$ is a bijective map $\alpha : \mathcal{A} \to \mathcal{A}$ satisfying, for all $x, y \in \mathcal{A}$ and $\lambda \in \mathbb{C}$:
\begin{itemize}
\item Linearity: $\alpha(\lambda x + y) = \lambda \alpha(x) + \alpha(y)$
\item Anti-multiplicativity: $\alpha(xy) = \alpha(y)\alpha(x)$
\item Involution preservation: $\alpha(x^*) = \alpha(x)^*$
\end{itemize}

An anti-unitary operator $J : \mathcal{H} \to \mathcal{H}$ is a surjective, conjugate-linear isometry. It satisfies:
\begin{itemize}
\item Conjugate-linearity: $J(\lambda \xi + \mu \eta) = \bar{\lambda} J \xi + \bar{\mu} J \eta$
\item Inner-product preservation: $\langle J \xi, J \eta \rangle = \langle \eta, \xi \rangle = \overline{\langle \xi, \eta \rangle}$
\end{itemize}
For an anti-unitary operator, the adjoint is defined so that $\langle J \xi, \eta \rangle = \langle J^* \eta, \xi \rangle$. It follows that $J^* J = J J^* = I$, meaning $J^* = J^{-1}$, and $J^*$ is also anti-unitary.

Let $J$ be an anti-unitary operator on $\mathcal{H}$. The map $\alpha_J : \mathcal{B}(\mathcal{H}) \to \mathcal{B}(\mathcal{H})$ defined by
$$ \alpha_J(x) = J x^* J^* $$
is an anti-automorphism of $\mathcal{B}(\mathcal{H})$.

Consider three self-adjoint operators $A,B,C\in\mathcal{A}$ satisfying $[A,B]=iC$. Their transformations under the anti-automorphism are $A \mapsto \a(A) = J A J^*$, $B\mapsto \a(B)= JBJ^*$, and $C\mapsto \a(C)= JCJ^*$. We then obtain
$$ [\a (A), \a(B)] = -i J C J^* = -i \a(C)=-\a([A,B]). $$

\bibliography{main.bib}

\end{document}